\documentclass{lmcs}
\pdfoutput=1
\usepackage[utf8]{inputenc}
% LMCS Layouting Macros
\usepackage{lastpage}
\lmcsdoi{17}{4}{15}
\lmcsheading{}{\pageref{LastPage}}{}{}%
{Dec.~01,~2020}{Dec.~02,~2021}{}

%% For double-blind review submission
%%\documentclass[acmsmall,review,anonymous]{acmart}\settopmatter{printfolios=true,printacmref=false}

\theoremstyle{definition}
\newtheorem{definition}[thm]{Definition}
\newtheorem{ex}[thm]{Example}
\newtheorem{remark}[thm]{Remark}
\newenvironment{example}{\begin{ex}\rm}{\end{ex}}
\theoremstyle{plain}
\newtheorem{theorem}[thm]{Theorem}
\newtheorem{lemma}[thm]{Lemma}
\newtheorem{corollary}[thm]{Corollary}

\keywords{probabilistic logic, probabilistic model checking, higher-order fixpoint logic}

\title{A Probabilistic Higher-order Fixpoint Logic}

%\titlerunning{Probabilistic Higher-order Fixpoint Logic} %TODO optional, please use if title is longer than one line

\author[Y.~Mitani]{Yo Mitani}[a]
\address{The University of Tokyo, Tokyo, Japan} %%{mitaniyo@kb.is.s.u-tokyo.ac.jp}{}{}
\author[N.~Kobayashi]{Naoki Kobayashi}[a]
%\address{The University of Tokyo, Tokyo, Japan} %%{koba@kb.is.s.u-tokyo.ac.jp}{https://orcid.org/0000-0002-0537-0604}{}
\author[T.~Tsukada]{Takeshi Tsukada}[b]
\address{Chiba University, Chiba, Japan} %%{tsukada@kb.is.s.u-toyko.ac.jp}{}{}

\keywords{Probabilistic logics, higher-order fixpoint logic, model checking} %TODO mandatory; please add comma-separated list of keywords

%\hideLIPIcs  %uncomment to remove references to LIPIcs series (logo, DOI, ...), e.g. when preparing a pre-final version to be uploaded to arXiv or another public repository

%Editor-only macros:: begin (do not touch as author)%%%%%%%%%%%%%%%%%%%%%%%%%%%%%%%%%%
%%%%%%%%%%%%%%%%%%%%%%%%%%%%%%%%%%%%%%%%%%%%%%%%%%%%%%

\newif\iffull\fulltrue
%\newif\iffull\fullfalse
\newif\ifdraft\drafttrue
%\newif\ifdraft\draftfalse
\ifdraft
\newcommand\nk[1]{\textcolor{red}{[#1 -nk]}}
\newcommand\todo[1]{\textcolor{blue}{[Todo: #1]}}
\newcommand\tk[1]{\textcolor{magenta}{[#1 -tsukada]}}
\else
\newcommand\nk[1]{}
\newcommand\todo[1]{}
\newcommand\tk[1]{}
\fi

\usepackage{amssymb}
\usepackage{stmaryrd}
\usepackage{bussproofs}
\usepackage{bcprules}
\usepackage{multicol}
\usepackage{proof,myproof}
\newcommand\muform{\varphi}
\newcommand\Ptr{P}
\newcommand\TP{\mathit{ProbT}}
\newcommand\redswith[2]{\stackrel{#1}{\longrightarrow}_{#2}}
\newcommand\GRAM{\mathcal{G}}
\newcommand\NONTERMS{\mathcal{X}}
\newcommand\RULES{\mathcal{R}}
\newcommand\C[1]{\oplus_{#1}}
\newcommand\Te{\mathtt{e}}
\newcommand\PHORS{PHORS}
\newcommand\cmpl[1]{\overline{#1}}
\newcommand\COL{\mathbin{:}}
\newcommand{\tyProp}{\Quant}
\newcommand\LUB[1]{\mathbin{\textstyle{\bigsqcup_{#1}}}}
\newcommand\GLB[1]{\textstyle{\bigsqcap_{#1}}}
\newcommand\LEQ{\sqsubseteq}
\newcommand\GEQ{\sqsupseteq}

\newcommand{\bnf}{\, | \,}
\newcommand{\AP}{\mathit{AP}}
\newcommand{\Var}{\mathit{Var}}
\newcommand{\iAP}{\rho_{\AP}}
\newcommand{\iAPM}[1]{\rho_{\AP,#1}}
\newcommand{\Qual}{{\mathit{Prop}}}
\newcommand{\Quant}{\mathit{Prop}}
\newcommand{\Prop}{\mathit{Prop}}

\newcommand{\semant}[1]{\llbracket #1 \rrbracket}
\newcommand{\semantM}[2]{\llbracket #1 \rrbracket_{#2}}
\newcommand\msem[1]{\mathop{(\!|}#1\mathop{|\!)}}
\newcommand\tofun[1]{#1^\dagger}

\newcommand{\defarrow}{\overset{\text{def}}{\Longleftrightarrow}}
\newcommand{\toQuant}{}

\newcommand{\true}{\text{true}}

\newcommand{\tr}{\textit{tr}}

\newcommand{\ty}{\tau}

\newcommand{\val}{\text{val}}

\newcommand{\toPHFL}[1]{\left< #1 \right>}

\newcommand{\PropTU}[2]{\Prop^{#1, #2}}

\newcommand{\Func}{\mathcal{F}}
\newcommand{\Funcb}{\mathcal{G}}

\newcommand{\interpret}{\rho}
\newcommand{\muInterpret}{\theta}

\newcommand{\muSemant}[1]{\semant{#1}_\mu}
\newcommand{\dom}{\mathit{dom}}
\newcommand\Avg{\mathbin{\bigcirc}}
\newcommand\Some{\lozenge}
\newcommand\All{\square}
\newcommand\lamTE{\Delta}
\newcommand\imply{\Rightarrow}
\newcommand{\LFP}{\mathit{LFP}}
\newcommand{\GFP}{\mathit{GFP}}
\newcommand{\muVdash}{\vdash_\mu}
\newcommand{\tyEnv}{\Gamma}
\newcommand{\muProp}{\Omega}

\newcommand{\optionValue}[3]{\text{if \(#1\) then \(#2\) else \(#3\)}}
\newcommand{\semantRel}[1]{\sim_{#1}}

\newcommand{\leqR}{\leq_{\mathbb{R}}}
\newcommand{\tyQual}{\sigma}
\newcommand{\decTypeSys}{\mathcal{T}}

\newcommand{\sqleq}{\sqsubseteq}

\newcommand{\bigJoin}{\bigsqcup}

\newcommand{\inState}{s_{\text{in}}}
\newcommand{\inStateM}[1]{s_{\text{in},#1}}

\newcommand{\decTyEnv}{\mathcal{K}}
\newcommand{\decTy}{\kappa}

\newcommand{\muTy}{A}
\newcommand{\muTyT}{T}

\newcommand\pM{\vdash_M}
\newcommand\TA[1]{\mathcal{T}_{#1}}
\newcommand\set[1]{\{#1\}}
\newcommand\form{\phi}
\newcommand\A{\automaton}
\newcommand\M{M} % Markov chain
\newcommand\PHFL{\textrm{PHFL}}
\newcommand\fspace[2]{#1\to#2}
\newcommand\p{\vdash}
\newcommand\D{D} % semantic domain
\newcommand\order{\mathit{order}}
\newcommand\mup{\mu^p}
\newcommand{\automaton}{A}
\newcommand{\alphabet}{\Sigma}

\allowdisplaybreaks

\begin{document}

\maketitle

%TODO mandatory: add short abstract of the document
\begin{abstract}
We introduce PHFL, a probabilistic extension of higher-order fixpoint logic, which can also be regarded as a higher-order extension of
probabilistic temporal logics such as PCTL and the \(\mu^p\)-calculus.
We show that PHFL is strictly more expressive than the \(\mu^p\)-calculus, and
that the PHFL model-checking problem for finite Markov chains is
undecidable even for the \(\mu\)-only, order-\(1\) fragment of PHFL.
Furthermore the full PHFL is far more expressive: we give a translation from Lubarsky's \( \mu \)-arithmetic to
PHFL, which implies that PHFL model checking is \(\Pi^1_1\)-hard and \( \Sigma^1_1 \)-hard.
As a positive result, we characterize a decidable fragment of the PHFL
model-checking problems using a novel type system.
\end{abstract}

\section{Introduction}\label{section:intro}
Temporal logics such as CTL and CTL* have been playing important roles
in system verification.
%%A variety of logics have been proposed for a variety of purposes.
Among the most expressive temporal logics is the \emph{higher-order fixpoint logic} (\emph{HFL} for short)
proposed by Viswanathan and Viswanathan~\cite{DBLP:conf/concur/ViswanathanV04}, which is a higher-order extension of the \emph{modal \( \mu \)-calculus}~\cite{DBLP:journals/tcs/Kozen83}.
HFL is known to be strictly more expressive than the modal $\mu$-calculus but the model-checking problem against finite models is still decidable.

In view of the increasing importance of probabilistic systems,
temporal logics for probabilistic systems (such as PCTL~\cite{DBLP:journals/fac/HanssonJ94})
% and PCTL*\tk{citation needed}) 
and their model-checking problems have been studied and applied to verification and analysis of probabilistic systems and randomized distributed algorithms~\cite{KNP11}.
Recently Castro et al.~\cite{DBLP:conf/stacs/CastroKP15} have proposed a probabilistic extension of the modal $\mu$-calculus, called the \emph{$\mu^p$-calculus}.
They showed that the $\mu^p$-calculus is strictly more expressive than PCTL and
that the model-checking problem for the $\mu^p$-calculus belongs to NP $\cap$ co-NP.

In the present paper, we introduce \PHFL{},
a probabilistic higher-order fixpoint logic, and studies the model checking problem.
\PHFL{} can be regarded as a probabilistic extension of HFL and
as a higher-order extension of the $\mu^p$-calculus.
%%<<<<<<< HEAD
%%This paper studies the expressiveness of \PHFL{} and its model-checking problem.
%%
%%We first show that \PHFL{} is a quite expressive logic.
\PHFL{} strictly subsumes the \( \mu^p \)-calculus~\cite{DBLP:conf/stacs/CastroKP15}, 
which coincides with order-\( 0 \) PHFL.

We prove that 
\PHFL{} model checking for
finite Markov chains is undecidable even for the order-1 fragment of \PHFL{} without fixpoint alternations,
by giving a reduction from the \emph{value problem} of probabilistic automata~\cite{rabin1963probabilistic, paz1971}.
In the presence of fixpoint alternations (i.e., with both least and greatest fixpoint operators),
 \PHFL{} model checking is even harder: the order-1 \PHFL{} model-checking problem is 
\( \Pi^1_1 \)-hard and \( \Sigma^1_1 \)-hard.
%%%=======
%%%
%%%This paper studies the expressiveness of \PHFL{} and its model-checking problem.
%%%Note that the more expressive a logic is, the more difficult the model-checking problem is.
%%%
%%%We show that \PHFL{} is a quite expressive logic.
%%%It strictly subsumes the \( \mu^p \)-calculus, which coincides with the order-\( 0 \) PHFL, and it is in fact far more expressive than the \( \mu^p \)-calculus and HFL.
%%%%Introduction of fixpoints for first-order functions makes the logic more expressive than \(\mu^p\)-calculus
%%%%and causes undecidability of the model-checking problem:
%%%Even for the first-order fragment without the greatest fixpoint operator, the model-checking problem for finite Markov chains is undecidable; we prove this result by giving a reduction of the \emph{value problem} of probablistic automata~\cite{rabin1963probabilistic, paz1971}.
%%%%\tk{Why is the first-order fragment more expressive than the \( \mu^p \)-calculus?  Which connective is problematic?}
%%%The greatest fixpoint operator further increases the expressivity: the first-order \PHFL{} model-checking problem is \( \Pi^1_1 \)-hard and \( \Sigma^1_1 \)-hard.
%%%>>>>>>> 146d29e60a1e364b434fa5675712d11af1d07b3b
The proof is by a reduction from the validity checking problem for \(\mu\)-arithmetic~\cite{DBLP:conf/lics/Lubarsky89}
to \PHFL{} model checking.
This may be surprising, because both
 order-0 \PHFL{} model checking (i.e. \(\mu^p\)-calculus model checking)
 for finite Markov chains~\cite{DBLP:conf/stacs/CastroKP15} and 
 HFL model checking for finite state systems~\cite{DBLP:conf/concur/ViswanathanV04} are decidable.
The combination of probabilities and higher-order predicates suddenly makes the model-checking problem
highly undecidable.

As a positive result, we identify a decidable subclass of \PHFL{} model-checking problems.
To characterize the subclass, we introduce a type system for \PHFL{} formulas, 
which is parameterized by a Markov chain \( M \).
We show that the model-checking problem \( M \models \varphi \) is decidable provided that \( \varphi \) is typable in the type system for \( M \), by giving a decision procedure using the decidability of existential theories of reals.
%The proof of the decidability has been inspired by the decidability of
%In the fragment the semantics of every formula can be expressed
%by a finite number of real numbers.
%%%Moreover, the number of reals required depends only on the Markov chain over which the formula
%is interpreted, and the type of the formula.
%Using this property, we can give a model-checking algorithm by appealing to existential theory of reals,
%which is known to be decidable in PSPACE.
The decidable subclass is reasonably expressive: the problem of computing termination probabilities of \emph{recursive Markov chains}~\cite{DBLP:journals/jacm/EtessamiY09}
can be reduced to the subclass.

The rest of this article is organized as follows.
 Section~\ref{section:definition} introduces \PHFL{} and shows that it is strictly more expressive than the $\mu^p$-calculus.
Section~\ref{section:undecidability} proves undecidability of the model-checking problem for
\(\mu\)-only and order-\(1\) PHFL.
 Section~\ref{section:hardness} proves that the \PHFL{} model-checking problem is both \(\Pi^1_1\)-hard and \(\Sigma^1_1\)-hard.
 Section~\ref{section:fragment} introduces a decidable subclass of \PHFL{} model-checking problems, and shows that 
%%We also give a model-checking algorithm explicitly and
 the subclass is reasonably large.
 %in this section.
%show reasonably strong expressive power in this section.
 Section~\ref{section:related} discusses related work, and
 Section~\ref{section:conclusion} concludes the paper.
 A preliminary summary of this article has been published 
 in Proceedings of FSCD 2020~\cite{phflFSCD}. This article contains details omitted in
 the preliminary summary, and also significantly extends the decidable fragment of \PHFL{}
 in Section~\ref{section:fragment}.

%Proofs are in the appendix.
%%Because of the space limitation, some of the proof details have been omitted;
%The omitted proof details due to the space restriction 
%%Proofs omitted in the paper
%%are found in a longer version of this paper~\cite{MitaniFull}.
%%, available from
%%\url{http://www.kb.is.s.u-tokyo.ac.jp/~mitaniyo/papers/fscd2020-long.pdf}.

\section{PHFL: Probabilistic Higher-order Fixpoint Logic} \label{section:definition}
This section introduces \PHFL{}, a probabilistic extension of HFL~\cite{DBLP:conf/concur/ViswanathanV04}.
\PHFL{} is a logic used for describing properties of Markov chains.
We define its syntax and semantics and show that it is more expressive than
the \( \mu^p \)-calculus~\cite{DBLP:conf/stacs/CastroKP15}.

\subsection{Markov Chains}
We first recall the standard notion of Markov chains. Our definitions follow those in~\cite{DBLP:conf/stacs/CastroKP15}.
%%As $\mu^p$-calculus, a formula of PHFL expresses a property of a {\textit {Markov chain}}.
\begin{definition}
  Let \(\AP\) be a set of atomic propositions.
A \emph{Markov chain} over \(\AP\) is a tuple $(S, \Ptr, \iAP, \inState)$, where:
\begin{itemize}
\item $S$ is a finite set of states,
\item $\Ptr : S \times S \to [0, 1]$ satisfying $\sum_{s' \in S} \Ptr(s, s') = 1$
  for every \(s\in S\), describes transition probabilities,
\item $\iAP : \AP \to 2^S$ is a labeling function, and
\item $\inState \in S$ is an initial state.
\end{itemize}
For a Markov chain $M = (S, P, \iAP, \inState)$, its \emph{embedded Kripke structure} is $K = (S, R, \iAP, \inState)$ where
%% \begin{itemize}
%% \item $S, \iAP$ and $\inState$ are the same as those in the Markov chain, and
%% \item
  $R \subseteq S \times S$ is a relation such that $R = \{(s, s') | \Ptr(s, s') > 0\}$.
%%\end{itemize}
\end{definition}
Intuitively, \(P(s,s')\) denotes the probability that the state \(s\) transits
to the state \(s'\), and \(\iAP(p)\) gives the set of states where \(p\) is true.
Throughout the paper, we assume that the set \(\AP\) of atomic propositions
is closed under negations, in the sense that for any \(p\in\AP\),
there exists \(\overline{p}\in \AP\) such that \(\iAP(\overline{p})=
S\setminus \iAP(p)\).

Given a Markov chain \(M\), we often write
\(S_M, P_M, \iAPM{M}, \inStateM{M}\) for its components; we omit
the subscript \(M\) when it is clear from the context.
%%Our definitions of Markov chains and their embedded Kripke structures are based on those in~\cite{DBLP:conf/stacs/CastroKP15}.

\subsection{Syntax of \PHFL{} Formulas}
\label{sec:syntax-of-phfl}
As in HFL~\cite{DBLP:conf/concur/ViswanathanV04, DBLP:conf/popl/KobayashiLB17}, we need the notion of types to 
define the syntax of \PHFL{} formulas.
%%In this section we first give the syntax of types in PHFL.
%%Then we give the syntax of the logic and type derivation rules.

The set of types, ranged over by \(\ty\), is given by:
\[
\ty ::= %\Qual \bnf
\Quant \bnf \ty_1 \rightarrow \ty_2.
\]
%%The type \(\Qual\) is
%%for \emph{qualitative} propositions, which take truth values (\(0\) for false, and \(1\) for true). In contrast,
The type \(\Quant\) describes \emph{quantitative} propositions, whose values
range over \([0,1]\). Intuitively, the value of a quantitative proposition represents the \emph{probability}
that the proposition holds. The type \(\ty_1\to\ty_2\) is for functions from \(\ty_1\) to \(\ty_2\).
For example, \((\Quant\to\Quant)\to \Quant\) represents the type of (higher-order, quantitative) predicates
on unary predicates.

\begin{remark}
  In the \(\mup{}\)-calculus~\cite{DBLP:conf/stacs/CastroKP15}
  and the previous version of this paper~\cite{phflFSCD}, two kinds of propositions were considered:
  \emph{quantitative} propositions,
  which take values in \([0,1]\), and
  \emph{qualitative} propositions, which take truth values.
  In the present paper, we consider only quantitative propositions for the sake of simplicity,
  and regard qualitative propositions as a special case of the former
  by treating \(0\) and \(1\) as the truth values ``false'' and ``true'' respectively.
\end{remark}
%%For example, \((\Qual\to\Qual)\to \Quant\) represents the type of (higher-order) quantitative predicates
%%on a qualitative predicate.
%%The set of types in PHFL is the set of simple types with two ground types $\Qual$ and $\Quant$.
%%A formula $\phi$ of type $\Qual$ is a qualitative formula
%%expressing whether $\phi$ holds in each state $s \in S$, while
%%a formula $\psi$ of type $\Quant$ is a quantitative formula
%%expressing the probability of holding $\psi$ in each state $s \in S$.

We assume a countably infinite set
\(\Var\) of variables, ranged over by $X_1, X_2, \dots$.
The set of PHFL (pre-)formulas, ranged over by \(\phi\), is given by:
\[
\phi ::=
p \bnf
X \bnf
\phi_1 \lor \phi_2 \bnf
\phi_1 \land \phi_2 \bnf
[\phi]_J \bnf
%\toQuant{\phi} \bnf
\square \phi \bnf
\lozenge \phi \bnf
\Avg \phi \bnf
\mu X. \phi \bnf
\nu X. \phi \bnf
\lambda X. \phi \bnf
\phi_1 \, \phi_2.
\]
Here,
\(p\) ranges over the set \(\AP\) of atomic propositions
(of the underlying Markov chains; we thus assume that \(\AP\) is closed under negations).
The subscript \(J\) of \([\phi]_J\)
is either ``\( > r \)'' or ``\( \ge r \)'' for some rational number \( r \in [0,1] \).
%%In the formulas \(\form_1\lor \form_2\) and
%%\(\form_1\land\form_2\), 
%%\(\form_1\) and \(\form_2\) may be quantative formulas: in that case,
%%\(\form_1\lor \form_2\) (\(\form_1\land\form_2\), resp.) denotes
%%the maximum (minimum, resp.) of the probability that \(\form_1\) holds
%%and the probability that \(\form_2\) holds.
We often identify \( J \) with an interval:
for example, ``\( > r \)'' is regarded as \( (r, 1] = \{\, x \mid r < x \le 1 \,\} \).
  Given a quantitative proposition \( \phi \), the formula \( [\phi]_{>r} \) (resp.~\( [\phi]_{\ge r} \)) is a qualitative formula, which is true just if the probability that \( \phi \) holds is greater than \( r \) (resp.~no less than \( r \)).
  We exclude trivial bounds ``\(>1\)'' and ``\(\ge 0\)'';
  note that \([\form]_{>1}\) and \([\form]_{\ge 0}\) are equivalent
  to false and true respectively.
%% In the formula \([\phi]_J\), \(\phi\) is expected to be a quantitative proposition
%% (which will be enforced by the type system introduced below),
%% and \([\phi]_J\) is a qualitative proposition that expresses whether \(J(p)\)
%% holds for the probability \(p\)
%% that \(\phi\) holds.
The formulas \(\square\phi\), \(\lozenge\phi\), and
\(\Avg\phi\) respectively mean the minimum, maximum, and average
probabilities that \(\phi\) holds after a one-step transition.
%%(Note that if the probability that \(\phi\) ranges over \(\set{0,1}\),
%%then the meaning s of
%%\(\square\phi\) and \(\lozenge\phi\) coincide with those in
%%a non-probabilistic temporal logic:
%%\(\square\phi\) and \(\lozenge\phi\) respetively mean ...
%%The formula \(\square\phi\) describes that \(\phi\) holds after \emph{any} one-step transition,
%%and \(\lozenge\phi\) describes that \(\phi\) holds after \emph{some} one-step transition.
%%The formula \(\Avg\phi\) represents the probability that \(\phi\) holds after one-step transition.
The formulas \(\mu X. \phi\) and \(\nu X.\phi\) respectively
denote the least and greatest fixpoints
of \(\lambda X.\phi\).
Note that \(\phi\) may denote higher-order predicates, as in HFL~\cite{DBLP:conf/concur/ViswanathanV04} (but unlike in the modal \(\mu\)-calculus and
its probabilistic variants~\cite{DBLP:conf/stacs/CastroKP15, DBLP:journals/corr/MioS13, morgan1997probabilistic},
%,\tk{what are variants?  Is there any extension other than \( \mu^p \)?}
where fixpoints are restricted to propositions).
We have also \(\lambda\)-abstractions and applications,
which are used for manipulating higher-order predicates.
The prefixes
\(\mu X\), \(\nu X\) and \(\lambda X\) bind the variable \(X\).
As usual,
we identify formulas up to the renaming of bound variables and
implicitly apply \(\alpha\)-conversions. We write
\([\form_1/X]\) for the capture-avoiding substitution of \(\form_1\) for
\(X\), and \([\form_1/X]\form_2\) for the formula obtained by
applying the substitution \([\form_1/X]\) to \(\form_2\).

In order to exclude out ill-formed formulas like \((p_1\lor p_2)(\phi)\),
%(where a proposition \(p_1\lor p_2\) is wrongly used as a function),
we restrict the shape of
formulas using a simple type system.
A \emph{type environment} is a map from a finite set of variables to the set of types.
A \emph{type judgment} is of the form \( \tyEnv \vdash \phi : \ty \).
The typing rules are shown in Figure~$\ref{fig:PHFLTypes}$.
%%In the figure, \( \tyProp \) is a meta-variable ranging over
%%the set \(\set{\Qual,\Quant} \) of proposition types.
%%For example, the rule for \(\phi_1\land\phi_2\)
%%means that \( \tyEnv \vdash \phi_i \colon \Qual \)
%%for each \(i\in\set{1,2}\) implies
%%\( \tyEnv \vdash \phi_1 \wedge \phi_2 \colon \Qual \) and that
%%\( \tyEnv \vdash \phi_i \colon \Quant \) for each \(i\in\set{1,2}\)
%%implies \( \tyEnv \vdash \phi_1 \wedge \phi_2 \colon \Quant \).
%In the figure, \(\tyEnv\) denotes a type environment, i.e., 
A formula \(\phi\) is \emph{well-typed}
if \(\tyEnv \vdash \phi: \ty\) is derivable for some \(\tyEnv\) and \(\ty\).
Henceforth, we consider only well-typed formulas.

\begin{example}
  \label{ex:syntax}
  For a proposition \(p \in \AP\), the formula \(\phi = (\mu F. \lambda X. X \lor F (\Avg X)) p\)
  %\, \toQuant{p}\)
  is
  a well-typed formula of type \(\Quant\).
  By unfolding the fixpoint formula (i.e., replacing \(\mu X.\form\) with
  \([\mu X.\form/X]\form\), which will be justified by the semantics introduced later)
  and applying \(\beta\)-reductions, we obtain:
  \begin{align*}
    \phi
    &\equiv (\lambda X.X\lor (\mu F.\lambda X.X\lor F(\Avg X))(\Avg X))
    \toQuant{p}\\
    &\equiv \toQuant{p}\lor (\mu F.\lambda X.X\lor F(\Avg X))(\Avg\toQuant{p})\\
    &\equiv \toQuant{p}\lor \Avg\toQuant{p}\lor
    (\mu F.\lambda X.X\lor F(\Avg X))(\Avg\Avg\toQuant{p})\\
    &\equiv \toQuant{p}\lor \Avg\toQuant{p}\lor\Avg\Avg\toQuant{p}\lor\cdots
  \end{align*}
  Thus, intuitively,
  the formula represents
  the function that maps the current state \(s\) %%each state \(s\)
  to the value \(\sup_{k \geq 0} q_k\) where \(q_k\) is the probability
  that a $k$-step transition sequence
  starting from the state \(s\) ends in a state satisfying \(p\). \qed
\end{example}

\begin{remark}
Following the definition of HFL by Kobayashi et al.~\cite{DBLP:conf/popl/KobayashiLB17},
we have excluded out negations. 
By a transformation similar to that in~\cite{DBLP:journals/corr/Lozes15}
and our assumption that the set of atomic propositions is closed under negations,
any closed ground-type formula of \PHFL{} extended with 
negations
can be transformed to an equivalent negation-free formula
 (as long as the occurrences of negations are restricted
as in the original HFL~\cite{DBLP:conf/concur/ViswanathanV04}
so that fixpoint operators are applied to only monotonic functions). \qed
\end{remark}

\begin{figure}\label{fig:PHFLtype}
\begin{minipage}{85pt}
\AxiomC{\mathstrut}
\UnaryInfC{\(\tyEnv \vdash p : \tyProp\)}
\DisplayProof
\end{minipage}
\vspace{10pt}
\begin{minipage}{100pt}
\AxiomC{\mathstrut}
\UnaryInfC{\(\tyEnv,X\COL\ty \vdash X:\ty\)}
\DisplayProof
\end{minipage}
\vspace{10pt}
\begin{minipage}{120pt}
\AxiomC{\(\tyEnv \vdash \phi : \Quant\)}
\UnaryInfC{\(\tyEnv \vdash [\phi]_J : \tyProp\)}
\DisplayProof
\end{minipage}
%%\begin{minipage}{90pt}
%%\AxiomC{\(\tyEnv \vdash \phi : \Qual\)}
%%\UnaryInfC{\(\tyEnv \vdash \toQuant{\phi} : \Quant\)}
%%\DisplayProof
%%\end{minipage}
\\
\begin{minipage}{200pt}
\AxiomC{\(\tyEnv \vdash \phi_1 : \tyProp \)}
\AxiomC{\(\tyEnv \vdash \phi_2 : \tyProp \)}
\BinaryInfC{\(\tyEnv \vdash \phi_1 \lor \phi_2 : \tyProp \)}
\DisplayProof
\end{minipage}
\vspace{10pt}
\begin{minipage}{200pt}
\AxiomC{\(\tyEnv \vdash \phi_1 : \tyProp\)}
\AxiomC{\(\tyEnv \vdash \phi_2 : \tyProp\)}
\BinaryInfC{\(\tyEnv \vdash \phi_1 \land \phi_2 : \tyProp\)}
\DisplayProof
\end{minipage}
\vspace{10pt}
\begin{minipage}{100pt}
\AxiomC{\(\tyEnv \vdash \phi : \tyProp\)}
\UnaryInfC{\(\tyEnv \vdash \square \phi : \tyProp\)}
\DisplayProof
\end{minipage}
\vspace{10pt}
\begin{minipage}{100pt}
\AxiomC{\(\tyEnv \vdash \phi : \tyProp\)}
\UnaryInfC{\(\tyEnv \vdash \lozenge \phi : \tyProp\)}
\DisplayProof
\end{minipage}
\begin{minipage}{100pt}
\AxiomC{\(\tyEnv \vdash \phi : \Quant\)}
\UnaryInfC{\(\tyEnv \vdash \Avg \phi : \Quant\)}
\DisplayProof
\end{minipage}
\begin{minipage}{100pt}
\AxiomC{\(\tyEnv, X : \ty \vdash \phi : \ty\)}
\UnaryInfC{\(\tyEnv \vdash \mu X. \phi : \ty\)}
\DisplayProof
\end{minipage}
\begin{minipage}{100pt}
\AxiomC{\(\tyEnv, X : \ty \vdash \phi : \ty\)}
\UnaryInfC{\(\tyEnv \vdash \nu X. \phi : \ty\)}
\DisplayProof
\end{minipage}
\vspace{10pt}
\begin{minipage}{100pt}
\AxiomC{\(\tyEnv, X : \ty_1 \vdash \phi : \ty_2\)}
\UnaryInfC{\(\tyEnv \vdash \lambda X. \phi : \ty_1 \to \ty_2\)}
\DisplayProof
\end{minipage}
\begin{minipage}{100pt}
\AxiomC{\(\tyEnv \vdash \phi : \ty_1 \to \ty_2\)}
\AxiomC{\(\tyEnv \vdash \psi : \ty_1\)}
\BinaryInfC{\(\tyEnv \vdash \phi \, \psi : \ty_2\)}
\DisplayProof
\end{minipage}
\caption{Typing rules for PHFL formulas.  }\label{fig:PHFLTypes}
\end{figure}

%\begin{definition}[order]
We define the \emph{order} of a type \(\ty\) by:
\[
%\order(\Qual)=
\order(\Quant)=0\qquad
\order(\ty_1 \rightarrow \ty_2) = \max(\order(\ty_1)+1, \order(\ty_2)).
\]
The order of a formula \(\phi\) such that \(\tyEnv\p \phi:\ty\) is
the largest order of types used in the derivation of \(\tyEnv\p \phi:\ty\).
The \emph{order-\(k\) \PHFL{}} is the fragment of \PHFL{} consisting of formulas of order up to \(k\).
%\end{definition}
%
Order-\(0\) \PHFL{} coincides with the \(\mup\)-calculus~\cite{DBLP:conf/stacs/CastroKP15}.

\subsection{Semantics}
\label{sec:sem-phfl}
%We now give the formal semantics of \PHFL{} formulas.
We first give the semantics of types. 
We write \(\leqR\) for the natural order over the set \(\mathbb{R}\) of real numbers,
and often omit the subscript when there is no danger of confusion.
For a map \(f\), we write \(\dom(f)\) for the domain of \(f\).
\begin{definition}[Semantics of Types]\label{defSemant}
Let \(M\) be a Markov chain.
For each \(\ty\), we define 
  a partially ordered set $\semant{\ty}_M = (D_\ty, \LEQ_\ty)$ 
inductively by:
%\begin{enumerate}
%%\item The semantics of types $\Qual$ and $\Quant$ are as follows.
\[
\begin{array}{l}
%\begin{eqnarray*}
%%D_\Qual = \fspace{S}{\{0, 1\}} \qquad
%%f \LEQ_\Qual g \,\defarrow\, \forall s \in S. f(s) \leq g(s) \\
D_\Quant = \fspace{S_M}{[0, 1]}\qquad
f \LEQ_\Quant g \,\defarrow\, \forall s \in S_M. f(s) \leq g(s)\\
%\end{eqnarray*}
%\item The semantics of the type $\ty \rightarrow \tyTwo$ is defined as follows.
%\begin{eqnarray*}
D_{\ty_1 \rightarrow \ty_2} =
\{f \in D_{\ty_1} \rightarrow D_{\ty_2} \mid \forall x, y \in D_{\ty_1}. x \LEQ_{\ty_1} y \implies f(x) \LEQ_{\ty_2} f(y)\} \\
f \LEQ_{\ty_1 \rightarrow \ty_2} g \,\defarrow\, \forall x \in D_{\ty_1}. f(x) \LEQ_{\ty_2} g(x).
%\end{eqnarray*}
\end{array}\]
%\end{enumerate}
For a type environment \(\tyEnv\), we write 
\(\semant{\tyEnv}_M\) for the set of maps \(f\) such that
\(\dom(f)=\dom(\tyEnv)\) and \(f(x)\in \D_{\tyEnv(x)}\)
for every \(x\in\dom(\tyEnv)\).
\end{definition}

We omit the subscript \(M\) below.
Note that $\semant{\ty}$ forms a complete lattice for each \(\ty\).
We write \(\bot_\ty\) for the least element of \(\semant{\ty}\), and
for a set \(V\subseteq \D_\ty\), we write \(\LUB{\ty} V\)
(\(\GLB{\ty} V\), resp.) for the least 
upper
bound  (greatest lower bound, resp.)
of \(V\) with respect to \(\LEQ_\tau\); we often omit the subscript \(\tau\) if it is clear from the context.
%by induction on the structure of $\ty$.
Note also that for every functional type $\ty_1 \to \ty_2$,
every element of $D_{\ty_1 \to \ty_2}$ is monotonic.
Thus, for every type $\ty$ and every function $f\in D_{\ty \rightarrow \ty}$,
there exist
the least and greatest fixed points of $f$, which we write \(\LFP(f)\)
and \(\GFP(f)\) respectively. They are given by:
\[
\LFP(f) = \GLB{\ty}\set{x\in D_{\ty} \mid f\,x\LEQ_\tau x}
\qquad
\GFP(f) = \LUB{\ty}\set{x\in D_{\ty} \mid x\LEQ_\tau f\,x}.
\]

%%An interpretation of a type environment $\tyEnv$ maps each variable in it to
%%a value of the semantics of the corresponding type.
%%Semantics $\semant{\tyEnv}$ of a type environment $\tyEnv$ is the set of interpretations of $\tyEnv$.

We now define the semantics of formulas. 
Since the meaning of a formula depends on its type environment,
we actually define the semantics \(\semantM{\tyEnv\p \phi:\ty}{M}\)
for each type judgment \(\tyEnv\p \phi:\ty\).
%, by induction on the (unique) derivation of \(\tyEnv\p \phi:\ty\).
Here, the subscript \(M\) denotes the underlying Markov chain, which is often omitted.
%%As the derivation of each type judgment is unique, we
%%define \(\semanti{\tyEnv\p \phi:\ty}\) by induction on the derivation of

\begin{definition}[Semantics of Type Judgement]
Let \(M\) be a Markov chain and assume that $\tyEnv \vdash \phi : \ty$ is derivable.
Then its semantics $\semantM{\tyEnv \vdash \phi : \ty}{M}\in \semant{\tyEnv}\to
\semant{\ty}$ is defined by induction on
the (unique) derivation of ${\tyEnv \vdash \phi : \ty}$ by:
%\begin{align*}
{\allowdisplaybreaks
\begin{eqnarray*}
  \semant{\tyEnv \vdash p : \Qual}_M(\interpret) &=&
  \lambda s\in S_M. \optionValue{s\in\iAPM{M}(p)}{1}{0} \\
\semant{\tyEnv \vdash X : \ty}_M(\interpret) &=& \interpret(X) \\
\semant{\tyEnv \vdash \phi_1 \land \phi_2 : \tyProp}_M(\interpret) &=&
\lambda s \in S_M. \min_{i \in \{1, 2\}} \semant{\tyEnv \vdash \phi_i : \tyProp}_M(\interpret)(s) \\
\semant{\tyEnv \vdash \phi_1 \lor \phi_2 : \tyProp}_M(\interpret) &=&
\lambda s \in S_M. \max_{i \in \{1, 2\}} \semant{\tyEnv \vdash \phi_i : \tyProp}_M(\interpret)(s) \\
\semant{\tyEnv \vdash [\phi]_J: \Qual}_M(\interpret) &=&
\lambda s \in S_M. \optionValue{\semant{\tyEnv \vdash \phi : \Quant}_M(\interpret)(s) \in J}{1}{0} \\
%\semant{\tyEnv \vdash \toQuant{\phi} : \Quant}_M(\interpret) &=&
%\semant{\tyEnv \vdash \phi : \Qual}_M(\interpret) \\
\semant{\tyEnv \vdash \square \phi : \tyProp}_M(\interpret) &=&
\lambda s \in S_M. \min_{s' : P_M(s, s') > 0} \semant{\tyEnv \vdash \phi : \tyProp}_M(\interpret)(s') \\
\semant{\tyEnv \vdash \lozenge \phi : \tyProp}_M(\interpret) &=&
\lambda s \in S_M. \max_{s' : P_M(s, s') > 0} \semant{\tyEnv \vdash \phi : \tyProp}_M(\interpret)(s') \\
\semant{\tyEnv \vdash \Avg \phi : \Quant}_M(\interpret) &=&
\lambda s \in S_M. \sum_{s' \in S_M}
\big(P_M(s, s')\cdot \semant{\tyEnv \vdash \phi : \Quant}_M(\interpret)(s')\big) \\
\semant{\tyEnv \vdash \mu X. \phi : \ty}_M(\interpret) &=&
\LFP(\lambda v\in D_{\ty}.\semant{\tyEnv, X : \ty \vdash \phi : \ty}_M(\interpret[X \mapsto v])) \\
\semant{\tyEnv \vdash \nu X. \phi : \ty}_M(\interpret) &=&
\GFP(\lambda v\in D_{\ty}.\semant{\tyEnv, X : \ty \vdash \phi : \ty}_M(\interpret[X \mapsto v])) \\
\semant{\tyEnv \vdash \lambda X. \phi : \ty_1 \to \ty_2}_M(\interpret) &=&
\lambda v \in D_{\ty_1}. \semant{\tyEnv, X : \ty_1 \vdash \phi : \ty_2}_M(\interpret[X \mapsto v]) \\
\semant{\tyEnv \vdash \phi_1 \, \phi_2:\ty}_M(\interpret) &=&
\semant{\tyEnv \vdash \phi_1:\ty_2\to\ty}_M(\interpret) \,
(\semant{\tyEnv \vdash \phi_2:\ty_2}_M(\interpret))
\end{eqnarray*}
}
%\end{align*}
%%Here \( \tyProp \in \{\, \Qual, \Quant \,\} \).

\end{definition}

%%Let us see that this definition is well-defined.
%%Here are some remarks on the
%%well-definedness of \(\semant{\tyEnv\p\phi}\).
%%\begin{remark}
%%  \item
In the last equality, \(\ty_2\) is uniquely determined from \(\tyEnv\) and \(\phi_2\).
In the  definitions of the semantics of $\square \phi$ and $\lozenge \phi$,
the set $\{s'\in S_M | P(s, s') > 0\}$ is non-empty and finite, because
%%This is because for each $s \in S$,
$\sum_{s' \in S_M} P(s, s') = 1$ and \(S_M\) is finite
by the definition of Markov chains. Thus
the max/min operations are well-defined.
%%is nonempty, and it is finite because it is a subset of the finite set $S$.
%%\item The right hand sides of the definition of semantics of $\mu x. \phi$ and $\nu x. \phi$ are well-defined
%%as stated before.
%%\end{enumerate}
We also note that \( \semant{\tyEnv \vdash \form : \ty} \) is a monotone function from \( \semant{\tyEnv} \) to \( \semant{\ty} \) (where \( \semant{\tyEnv} \) is ordered
by the component-wise ordering; note also Remark~\ref{rem:J} below).
This ensures the well-definedness of the semantics of \(\mu X.\form\),
\(\nu X.\form\), and \(\lambda X.\form\).

\begin{remark}
  \label{rem:J}
  Recall that in a formula \([\phi]_J\), we allow the predicate \(J\) to be 
  ``\(>r\)'' or ``\(\geq r\)'' (where \(r\in[0,1]\)), but neither  ``\(<r\)''
  nor ``\(\leq r\)''.
  Allowing ``\(<r\)'' would break the monotonicity of the semantics of a formula.
  For example, \(\semant{\emptyset\vdash \lambda X.[X]_{<1}:\Quant\to\Qual}
  = \lambda v\in D_{\Quant}.\lambda s\in S.(\mbox{if $v(s)<1$ then $1$ else $0$})\)
  is not monotonic. \qed
%%  This ensures that the semantics of \([\phi]_J\) is monotonic.
  %% That is important for the well-definedness of fixpoints;
  %% see the semantics of formulas below.
\end{remark}

We often omit \(M\), the type of the formula, and the type environment,
and just write 
$\semant{\phi}$ or $\semant{\tyEnv \vdash \phi}$ for $\semantM{\tyEnv \vdash \phi : \ty}{M}$ when
%in the notation of semantics when
there is no danger of confusion.
For a Markov chain
$M=(S, P, \iAP, \inState)$ and 
a closed \PHFL{} formula \(\form\) of type \(\Qual\),
we write  \(M\models \form\) if 
\(\semant{\phi}(\inState)=1\).

\begin{example}
  Recall the \PHFL{} formula \(\phi= \psi\, \toQuant{p}\)  where
  \(\psi=\mu F. \lambda X. X \lor F (\Avg X)\) in Example~\ref{ex:syntax}.
  We have:
{  \allowdisplaybreaks
  \begin{align*}
    \semant{\psi}
    &=
    \LFP\Big(\lambda v\in D_{\Quant\to\Quant}.
    \lambda x\in D_{\Quant}.\lambda s \in S. \\
    &\hphantom{\lambda v \in D_{\Quant\to\Quant}. \lambda x\in D_{\Quant}} \max\Big(x\,s, v\,(\lambda s'\in S.\sum_{s''}P(s',s'')\cdot (x s''))\,s\Big)\Big)\\
    &\ge \Big(\lambda v.
    \lambda x.\lambda s. \max\Big(x\,s, v\,(\lambda s'\in S.\sum_{s''}P(s',s'')\cdot(x s''))\,s\Big)\Big)^{n+1} (\bot_{\Quant\to\Quant})\\
    &= \lambda x.\lambda s. \max_{0\le k\le n}
    \sum_{s_0 s_1 \dots s_k \in S^{k + 1}, s_0=s}
    \big(x(s_k)\cdot \prod_{0 \leq j \leq k - 1} 
    P(s_j, s_{j + 1})\big)
  \end{align*}
  }
for every \(n\ge 0\).
Thus, we have:
  \[
%\begin{array}{l}
  \semant{\psi} \ge  \lambda x.\lambda s\in S.
  \sup_{k\ge 0} \sum_{s_0 s_1 \dots s_k \in S^{k + 1}, s_0=s}
  \big(x(s_k)\cdot \prod_{0 \leq j \leq k - 1} 
  P(s_j, s_{j + 1})\big).
%\end{array}
\]
Actually, the equality holds, because
the righthand side is a fixpoint of
\[\lambda v\in D_{\Quant\to\Quant}.
\lambda x\in D_{\Quant}.\lambda s\in S.\max(x\,s, v(\lambda s\in S.\sum_{s'}P(s,s')\cdot (x s'))).\]
  The semantics of \(\phi\) is, therefore,  given by
  \[
  \semant{\phi} = \lambda s \in S. \sup_{k \geq 0} \sum_{s_0 s_1 \dots s_k \in S^{k + 1}, s_0=s,
  s_k\in \iAP(p)}
  %\big(
  \prod_{0 \leq j \leq k - 1}
  P(s_j, s_{j + 1}). %%\big). %\hfill \qed
  \tag*{\qed}
  \]
\end{example}

\subsection{Expressive Power}
\PHFL{} obviously subsumes the $\mup$-calculus~\cite{DBLP:conf/stacs/CastroKP15}, which coincides with order-\( 0 \) PHFL.
Hence \PHFL{} also subsumes PCTL~\cite{DBLP:journals/fac/HanssonJ94}, since the \( \mup \)-calculus subsumes PCTL~\cite{DBLP:conf/stacs/CastroKP15}.
%%In fact, $\mu^p$-calculus can be viewed as an order-$0$ fragment of PHFL.
%%Since $\mu^p$-calculus can encode PCTL[...], PHFL can also encode PCTL.
%%It is also obvious that PHFL can encode HFL over the embedded Kripke structure.
%%
%%As Viswanathan and Viswanathan showed~\cite{DBLP:conf/concur/ViswanathanV04},
%%HFL is strictly more expressive than modal $\mu$-calculus.

\PHFL{} is \emph{strictly} more expressive than the $\mu^p$-calculus.
\begin{theorem}\label{expressiveThm}
  Order-1 \PHFL{} is strictly more expressive than the $\mup$-calculus, i.e.,
  there exists an order-1 \PHFL{} proposition \(\phi\)
  such that \(\phi\) is not equivalent to any \(\mup\)-formula.
\end{theorem}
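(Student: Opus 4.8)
The plan is to exhibit a concrete order-1 PHFL formula whose semantics is a genuinely "infinitary" quantity, and then argue that no \(\mu^p\)-calculus formula can compute it. A natural candidate is essentially the formula from Example~\ref{ex:syntax}, \((\mu F.\lambda X.X\lor F(\Avg X))\,p\), or a close variant, whose value at a state \(s\) is the supremum over \(k\) of the probability of reaching a \(p\)-state in exactly \(k\) steps. The key point is that this uses a higher-order (order-1) fixpoint over the function type \(\Quant\to\Quant\), applying \(\Avg\) an unbounded number of times \emph{inside} the recursion, something the order-0 \(\mu^p\)-calculus cannot mimic because there \(\Avg\) can only be nested syntactically finitely and fixpoints range only over propositions.

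First I would pin down exactly which quantity to use and compute its semantics explicitly (for this the Example already does most of the work, giving \(\semant{\phi}=\lambda s.\sup_{k\ge 0}\sum_{s_0\cdots s_k,\,s_0=s,\,s_k\in\iAPM{}(p)}\prod_j P(s_j,s_{j+1})\)). Then I would set up a separating family: construct, for a suitable parameter, a sequence of finite Markov chains \(M_n\) (or a single parametrized chain) on which the value of \(\phi\) takes infinitely many distinct reals, or takes an irrational/transcendental value, or otherwise exhibits behavior that provably exceeds the descriptive power of any fixed \(\mu^p\)-formula. The cleanest route is a cardinality/definability argument: show that the function computed by \(\phi\) realizes, as the Markov chain varies, some value that no \(\mu^p\)-formula can realize.

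The main obstacle is the lower-bound half: proving that \emph{no} \(\mu^p\)-formula is equivalent to \(\phi\). For this I would rely on a structural limitation of the \(\mu^p\)-calculus established by Castro et al.~\cite{DBLP:conf/stacs/CastroKP15}, namely that the value of any \(\mu^p\)-formula on a finite Markov chain is the solution of a system of equations definable over the reals with only finitely many nested \(\Avg\)/arithmetic layers, so that the set of values expressible by a \(\mu^p\)-formula on chains of bounded size is constrained (e.g.\ it lies in a field generated by the rational transition probabilities, or admits only rational fixpoints, whereas the least fixpoint of the order-1 formula can be a supremum that is not so expressible). The argument would be: assume \(\psi\) is a \(\mu^p\)-formula equivalent to \(\phi\) on all Markov chains; choose a chain where \(\semant{\phi}\) is known analytically (e.g.\ a value like the escape/absorption probability that is the root of a polynomial of degree exceeding anything \(\psi\) can produce, or a transcendental supremum), and derive a contradiction with the closure properties of \(\mu^p\)-definable values.

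Concretely I expect the cleanest contradiction to come from a \emph{single} fixed chain together with a spectral or algebraic invariant: one designs \(M\) so that \(\semant{\phi}(\inState)\) equals the largest eigenvalue-type limit \(\sup_k v^\top A^k w\) of a substochastic matrix, which is in general irrational and not captured by the finitely-nested rational arithmetic of \(\mu^p\). The delicate bookkeeping—formalizing ``finitely-nested rational arithmetic'' and showing \(\phi\)'s value escapes it—is where the real work lies, and I would invoke the \(\mu^p\) normal-form/value-characterization results of~\cite{DBLP:conf/stacs/CastroKP15} to discharge it rather than reprove them here.
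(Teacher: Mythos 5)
There is a genuine gap, and it sits exactly where you acknowledge "the real work lies." Your upper half (exhibiting the order-1 formula and computing its semantics as \(\sup_{k}q_k\)) is fine, but the lower half — that no \(\mu^p\)-formula matches it — is entirely deferred to "normal-form/value-characterization results" of Castro et al.\ that do not exist in the form you need. Worse, the specific invariant you propose (irrationality or transcendence of the realized value) is unlikely to separate anything: on a finite Markov chain with rational transition probabilities, the semantics of a \(\mu^p\)-formula is a fixpoint of a monotone piecewise-affine map with rational coefficients and hence rational, but the quantity \(\sup_k v^\top A^k w\) computed by your candidate formula is also rational or at worst algebraic (the limit points of \(v^\top A^k w\) are governed by spectral projectors of a rational substochastic matrix, whose peripheral eigenvalues are roots of unity). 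So you would be trying to separate two classes of values that plausibly coincide, and even if they did not, proving the required "closure property" of \(\mu^p\)-definable values from scratch is a substantial open-ended task, not a citation.

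The paper avoids quantitative reasoning altogether. It fixes a family of \emph{deterministic} Markov chains (a line of states labelled by \(a\)'s and \(b\)'s ending in \(c\)) and an order-1 formula \((\mu F.\lambda X. a\land\lozenge(X\lor F(b\land\lozenge X)))(b\land\lozenge c)\) that holds exactly on the chains whose label word is \(a^m b^m\). On deterministic chains any \(\mu^p\)-formula collapses to an ordinary modal \(\mu\)-calculus formula over the embedded Kripke structure, and the modal \(\mu\)-calculus defines only regular properties — so it cannot define \(\{a^m b^m \mid m\ge 1\}\). If you want to salvage your approach, the lesson is to vary the \emph{model} over a family on which the probabilistic operators trivialize, so that the separation reduces to a known, purely qualitative expressiveness limit, rather than to an algebraic invariant of a single real number.
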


\begin{proof}
  Let $\mathcal{M}$ be the set of Markov chains
  $M = (S, P, \iAP, \inState)$ that satisfy the following conditions.
\begin{itemize}
\item $S = \{s_0, s_1, \dots, s_n\}$ for a positive integer $n$,
\item $P(s_i, s_{i + 1}) = 1$ ($0 \leq i \leq n - 1$), $P(s_n, s_n) = 1$ and $P(s_i, s_j) = 0$ otherwise.
\item There are three atomic propositions $a, b, c$ with $\iAP(a) \cup \iAP(b) = \{s_0, s_1, \dots, s_{n - 1}\}$,
$\iAP(a) \cap \iAP(b) = \emptyset$ and $\iAP(c) = \{s_n\}$.
\item The initial state is $\inState = s_0$.
\end{itemize}

Let \(\phi\) be the order-1 \PHFL{} formula of type \(\Qual\):
\[
%%(\mu F. (\lambda X. [(\toQuant{a} \land \lozenge \toQuant{X}) \lor
%%\lozenge F (\lozenge (\toQuant{b} \land \toQuant{X})))]_{\geq 1}) (b \land [\Avg \toQuant{c}]_{\geq 1}).
(\mu F. \lambda X. a \land \lozenge (X \lor
  F (b \land \lozenge X)))(b\land \lozenge c).
\]
Note that, for \(M\in\mathcal{M}\), \(M\models \phi\) holds just if
\(n\) is even, %and if
%%Let $\mathcal{M}' \subseteq \mathcal{M}$ be the set of Markov chains accepted by the PHFL formula
%%Note that the set $\mathcal{M}'$ is the set of Markov chains $M \in \mathcal{M}$ such that $n$ is even and
%\(\iAP\) satisfies
$\iAP(a) = \{s_0, s_1, \dots, s_{\frac{n}{2} - 1}\}$
and $\iAP(b) = \{s_{\frac{n}{2}}, s_{\frac{n}{2} + 1}, \dots, s_{n - 1}\}$.

We show that there is no \(\mup\)-formula equivalent to \(\phi\).
Suppose that a \(\mup\)-formula \(\phi'\) \emph{were} equivalent to \(\phi\),
which would imply that \(M\models \phi\) if and only if \(M\models \phi'\) for
any \(M\in \mathcal{M}\).
For \(M\in\mathcal{M}\), let us write \(K_M\) for the embedded Kripke structure
of \(M\).
%%%%%%Let $\mathcal{K}$ and $\mathcal{K}'$ be the set of embedded Kripke structures of Markov chains in $\mathcal{M}$ and
%%%$\mathcal{M}'$, respectively.
Since all the transitions in \({M}\)  are deterministic,
there exists a modal \(\mu\)-calculus formula \(\phi''\) such that
\(M\models \phi'\) if and only if \(K_M\models \phi''\)
(note that \(\phi''\) is obtained by replacing $\Avg$ with
$\lozenge$, and replacing \([\phi_1]_{J}\) with 
\(\true\) if \(J\) is ``\(\ge 0\)'' and with \(\phi_1\) otherwise).
That would imply that \(K_M\models \phi''\)
for \(M\in \mathcal{M}\),
just if
\(n\) is even and
\(\iAP\) satisfies $\iAP(a) = \{s_0, s_1, \dots, s_{\frac{n}{2} - 1}\}$
and $\iAP(b) = \{s_{\frac{n}{2}}, s_{\frac{n}{2} + 1}, \dots, s_{n - 1}\}$.
But then \(\phi''\) would describe 
the non-regular language \(\set{a^{m}b^{m}\mid m\ge 1}\),
which contradicts the fact that the modal \(\mu\)-calculus can express
only regular properties.
%%
%%
%%
%%We prove there does not exists a $\mu^p$-calculus formula which recognizes $\mathcal{M}'$ from $\mathcal{M}$.
%%Assume such a formula $\phi$ exists.
%%Then we can give a formula $\psi$ in modal $\mu$-calculus recognizes $\mathcal{K}'$ from $\mathcal{K}$, by
%%substituting the operator $\lozenge$ for $\Avg$ and removing the operator $[\cdot]_J$.
%%
%%Since modal $\mu$-calculus and parity tree automata are equi-expressive [...], there exists
%%a parity tree automaton $A$ recognizes $\mathcal{K}'$ from $\mathcal{K}$.
%%This contradicts finiteness of the number of states of the automaton $A$.
\end{proof}

\begin{remark}
  For non-probabilistic logics, 
  HFL was known to be strictly more expressive than the modal
  \(\mu\)-calculus~\cite{DBLP:conf/concur/ViswanathanV04}.
  The above proof can be easily adapted to show that fact.
\end{remark}

%\nk{If we have a time, it would be better to show the
%  strictness of the hierarchies of order-\(k\) \PHFL{},
%  from which the above theorem follows as a special case.}

\section{Undecidability of \PHFL{} Model Checking}\label{section:undecidability}
In this section we prove the undecidability of 
the following problem.

\begin{definition}[PHFL Model Checking]
  The \emph{PHFL model-checking problem} for finite Markov chains
  is the problem of
  deciding
  whether $M\models \phi$, given
  a (finite) Markov chain $M$ and %% = (S, P, \iAP, \inState)$ and
  a closed PHFL formula $\phi$ of type $\Qual$ as input.
%%\begin{itemize}
%%\item[Input] A Markov chain $M = (S, P, \iAP, \inState)$ and a closed PHFL formula $\phi$ of type $\Qual$
%%\item[Output] Whether $\semant{\phi}(\inState) = 1$ or not
%%\end{itemize}
\end{definition}

We prove that the problem is undecidable
even for the order-1 fragment of PHFL without fixpoint alternations,
by a reduction from
the undecidability of the value-1 problem~\cite{DBLP:conf/icalp/GimbertO10} for
 probabilistic automata~\cite{rabin1963probabilistic}.
%%which is known to be undecidable~\cite{rabin1963, paz1971}.
%%to model checking problem of PHFL.
%%In the reduction we encode a probabilistic automaton as a Markov chain, and a set of input alphabets as a PHFL formula.
In contrast to the undecidability of \PHFL{} model checking,
the corresponding model-checking problems are \emph{decidable}
for the full fragments of
the $\mu^p$-calculus~\cite{DBLP:conf/stacs/CastroKP15} and
(non-probabilistic) HFL~\cite{DBLP:conf/concur/ViswanathanV04},
with fixpoint alternations.
Thus, the combination of
probabilities and higher-order predicates introduces a new difficulty.

In Section~\ref{introducePA},
we review the definition of  probabilistic automata and the value-1 problem.
Section~\ref{reduction} shows 
 the reduction from the value-1 problem 
to the \PHFL{} model-checking problem.

\subsection{Probabilistic Automata}\label{introducePA}
We review probabilistic automata~\cite{rabin1963probabilistic} and
the undecidability of the value-1 problem.
Our definition follows \cite{DBLP:journals/siglog/Fijalkow17}.

\begin{definition}[Probabilistic Automata]
  A \emph{probabilistic automaton} $\automaton$ %on the alphabet $\alphabet$
  is a quintuple
  $(Q, \alphabet,q_I, \Delta, F)$ where
\begin{itemize}
\item $Q$ is a finite set of states,
\item $\alphabet$ is a finite set of input symbols,
\item $q_I \in Q$ is an initial state,
\item $\Delta : Q \times \alphabet \to D(Q)$, where $D(Q) := \{\, f \colon Q\to [0,1] \mid \sum_{q \in Q} f(q) = 1 \,\}$
  is the set of probabilistic distributions over the set $Q$,
represents transition probabilities, and
\item $F \subseteq Q$ is a set of accepting states.
\end{itemize}
For a word $w=w_1\cdots w_n \in \alphabet^n$,
the probability that \(w\) is accepted by  \(\automaton= (Q,\alphabet,q_I,\Delta,F)\),
written \(\automaton(w)\), is defined by:
%%\tk{This definition does not seem to work for the empty word.}\nk{The definition seems ok to me (and I think this definition has been taken from the reference. For the empty word, \(n=0\), so that \(\prod_{...}\) is an empty product which yields \(1\), and \(\sum_{...}\) is a sum over a singleton.}
\[
\automaton(w) := \sum_{\substack{q_0, \dots, q_{n-1}\in Q, q_n\in F\\\mathit{ s.t.}\ q_0=q_I}} \, \prod_{1\leq i\leq n} \Delta(q_{i-1},w_i)(q_{i}).
\]
%%%of \(\automaton= (Q,\inState{q},\Delta,F)\) is just
%%%a sequence of states $\rho \in Q^{n + 1}$.
%%%%%with the state $\inState{q}$.
%%%The probability \(\Prob{w}{\rho}\)
%%%that the run $\rho = q_0 q_1 \dots q_n$ is generated,
%%%We write 
%%%\[
%%%\prod_{0 \leq i < n} \Delta(q_i, w_{i + 1})(q_{i + 1})
%%%\]
%%%where $q_0 = \inState{q}$.
%%%
%%%A word $w$ is accepted by a probabilistic automaton $\automaton$
%%%if and only if the word generates a run ends in a state in the set $F$.
\end{definition}

%%We denote the probability that the word $w$ is accepted by the automaton $\automaton$ by $\automaton(w)$.
The \emph{value} of a probabilistic automaton $\automaton$, denoted by
$\val(\automaton)$, is defined by
\[
\val(\automaton) := \sup_{w \in \alphabet^\ast} \automaton(w).
\]

The problem of deciding whether \(\val(\automaton)=1\),
called the \emph{value-1 problem}, is known to be undecidable.
\begin{theorem}[Undecidability of The Value-1 Problem~\cite{DBLP:conf/icalp/GimbertO10}]\label{valueUndecidability}
  Given 
  a probabilistic automaton $\automaton$,
  whether
  $\val(\automaton) = 1$ is undecidable.
\end{theorem}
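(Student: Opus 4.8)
The statement is a known result of Gimbert and Oualhadj~\cite{DBLP:conf/icalp/GimbertO10}, so the plan is to recall the shape of their argument rather than to reprove it from scratch. The natural strategy is a reduction from a problem already known to be undecidable: the \emph{strict threshold emptiness problem}, which asks, given a probabilistic automaton \( B \), whether the language \( \{\, w \mid B(w) > 1/2 \,\} \) is nonempty. Undecidability of this problem traces back to Paz and can be obtained by encoding, for instance, the emptiness of the language recognised with cut-point \( 1/2 \); I would take it as the starting point. The goal is then to construct, from \( B \), a probabilistic automaton \( \automaton \) such that \( \val(\automaton) = 1 \) if and only if some word \( w \) satisfies \( B(w) > 1/2 \).

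The heart of the construction is an \emph{amplification gadget}. The difficulty specific to probabilistic automata is that, unlike a randomised algorithm, an automaton cannot draw fresh independent randomness to repeat a computation: it is a single Markov chain driven deterministically by the input word. I would therefore design \( \automaton \) to read inputs of the form \( w^{k} \) (or, more generally, concatenations of \( B \)-words) and to maintain, in its state, the position of a biased random walk whose single-step drift is governed by the acceptance probability \( B(w) \). Concretely, each pass over \( w \) either advances the walk one step towards an absorbing ``accept'' state or one step back, with the advancing probability tied to whether the simulated run of \( B \) on \( w \) lands in \( F \). When \( B(w) = p > 1/2 \) the walk has positive drift, so as \( k \to \infty \) the probability of absorption in the accepting state tends to \( 1 \); taking the supremum over \( k \) then yields \( \val(\automaton) = 1 \). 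When every word satisfies \( B(w) \le 1/2 \), each step has nonpositive drift, and the absorption probability stays bounded away from \( 1 \) uniformly in the input.

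The two directions of correctness then follow from this drift analysis. For soundness, if \( w \) witnesses \( B(w) > 1/2 \), the sequence \( \automaton(w^{k}) \) increases to \( 1 \) by the gambler's-ruin estimate for a positively biased walk, so \( \val(\automaton) = 1 \). For completeness, one must show \( \val(\automaton) \le c \) for some constant \( c < 1 \) whenever no word beats the threshold; this is the delicate step, because the supremum ranges over \emph{all} input words, not merely repetitions of a single good word. The main obstacle, and the part requiring the most care, is precisely this uniform bound: one has to argue that no clever interleaving of distinct \( B \)-words can manufacture a positive drift when each word individually has drift \( \le 0 \), which amounts to controlling the asymptotics of arbitrary products of the stochastic matrices arising from \( B \). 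Packaging this into a gadget whose state space is finite and whose transition probabilities are rational—so that \( \automaton \) is a bona fide probabilistic automaton in the sense of the definition above—is the remaining bookkeeping.
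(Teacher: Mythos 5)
First, a point of comparison: the paper does not prove this theorem at all --- it is invoked as a black box from Gimbert and Oualhadj, so there is no in-paper argument to match your sketch against. Evaluating your sketch on its own terms, the high-level plan is the right one (reduce from the strict threshold emptiness problem ``does some \(w\) satisfy \(B(w) > 1/2\)?'', undecidable by Paz, and amplify the gap around \(1/2\)), and this is indeed the route taken in the cited paper. But the amplification gadget you propose does not work, and the failure is in the direction you call ``soundness,'' not the one you flag as delicate. A random walk maintained in the state of a \emph{finite} automaton has a bounded number of positions, say \(\{0,1,\dots,N\}\) with \(0\) reject-absorbing and \(N\) accept-absorbing, where \(N\) is fixed by the construction and only the number of passes over \(w\) grows with the input. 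By the gambler's-ruin formula, a walk started at \(k < N\) with per-step bias \(p\) versus \(1-p\) is absorbed at \(N\) with probability \((1-\rho^k)/(1-\rho^N)\) where \(\rho = (1-p)/p\), and for every \(p < 1\) this is \emph{strictly less than} \(1\) and does not improve as the number of steps grows. Hence \(\sup_k \automaton(w^k)\) stays bounded away from \(1\) even when \(B(w) = p > 1/2\) (unless \(p\) can be pushed arbitrarily close to \(1\)), and the forward direction of the reduction fails: positive drift on a bounded state space does not yield value \(1\).

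The actual Gimbert--Oualhadj gadget avoids this by putting the amplification parameter in the \emph{input word} rather than in the state space. Roughly: an initial coin flip sends the automaton into one of two simulations of \(B\) over a block \((w\#)^n\); one branch absorbs into an accepting sink only if all \(n\) copies of \(w\) are accepted (probability about \(p^n\)), the other absorbs into a rejecting sink only if all \(n\) are rejected (probability about \((1-p)^n\)), and every other outcome triggers a restart of the whole experiment on the next block. Conditioning on eventual absorption, the acceptance probability approaches \(p^n/(p^n + (1-p)^n)\), which tends to \(1\) as \(n \to \infty\) exactly when \(p > 1/2\); taking the supremum over inputs with many blocks and large \(n\) gives \(\val(\automaton) = 1\). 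Your sketch would need to be rebuilt around a ratio/restart gadget of this kind. Separately, the completeness direction (no word, including interleavings of distinct \(B\)-words, achieves value \(1\) when all \(B(w) \le 1/2\)) is correctly identified as requiring care, but it is asserted rather than argued; in the cited proof it follows from an analysis specific to the restart gadget.
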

   
\subsection{The Undecidability Result}\label{reduction}
Let $\automaton = (Q, \alphabet,q_I, %\inState{q},
\Delta, F)$ be
a probabilistic automaton, where \(\alphabet=\set{c_1,\ldots,c_{|\alphabet|}}\)
with \(|\alphabet|>0\).
We shall construct
a Markov chain \(M_\A\) and a PHFL formula \(\form_\A\), so that \(\val(A)=1\) if and only if
\(\M_\A\models \form_\A\). The undecidability of \PHFL{} model checking then
follows immediately from Theorem~\ref{valueUndecidability}.

%%For each $q \in Q$, we denote the automaton $(Q, q, \Delta, F)$ by $\automaton_q$.
%%We encode the value $\val \automaton$ by a PHFL formula.
%%
We first construct the Markov chain \(M_\A\).
The set \( \AP \) of atomic propositions is \( \{\, p_c \mid c \in \alphabet \,\} \uplus \{\, p_F \,\} \).
The Markov chain $M_\automaton = (S, P, \iAP, \inState)$ is defined as follows.
\begin{itemize}
\item The set \(S\) of states is \(Q \uplus (Q \times \alphabet)\).
\item The transition probability $P$ is given by:
\begin{align*}
P((q, c), q') &= \Delta(q, c)(q') & \text{($ c \in \alphabet $ and $q, q' \in Q$)} \\
P(q, (q, c)) &= \frac{1}{|\alphabet|} & \text{($ c \in \alphabet $ and $q \in Q$)} \\
P(s, s') &= 0 & \text{(otherwise)}
\end{align*}
The first transition (from \((q,c)\) to \(q'\)) is used to simulate
the transition of \(\A\) from \(q\) to \(q'\) for the input symbol \(c\).
The second transition (from \(q\) to \((q,c)\)) is used to
choose the next input symbol to be supplied to the automaton; the probability is not important, and replacing \( 1/|\alphabet| \) with any non-zero probability does not affect the arguments below.
\item \(\iAP\) is defined by:
%%  must contain propositions $p_{c_i}$ ($1 \leq i \leq |\alphabet|$) and
%%$p_F$ whose interpretations are
\begin{align*}
\iAP(p_{c}) &= \{\, (q, c) \mid q \in Q \,\} &
\iAP(p_F) &= \set{\,q \mid q \in F\,}. %&
\end{align*}
\item The initial state is $\inState = q_I$.
\end{itemize}
Intuitively, the Markov chain \(M_\A\) simulates the behavior of
\(\A\). %, while probabilistically choosing input symbols for \(\A\).
The atomic proposition \(p_{c}\) means that \(\A\) is currently reading the symbol
\(c\), and \(p_F\) means that \(\A\) is in a final state.

Based on this intuition, we now construct the \PHFL{} formula \(\form_\A\).
For each $c \in \alphabet$, we define a formula \(f_{c}\)
of type \(\Quant\to\Quant\) by:
\[
f_{c} := \lambda X.\lozenge (\toQuant{p_{c}} \land \bigcirc X).
\]
Intuitively $f_{c}(\form)$ denotes the probability that the automaton
transits to a state satisfying \(\form\) given \(c\) as the next input.
%%%puts the next input \( c \) to the automaton (\( \lozenge (\toQuant{p_c} \land\,({-})) \)), performs the transition of the automaton in one-step (\( \bigcirc ({-}) \)) and then refers to the value of \( X \).
%
%%In other words, \( f_c \) simulates the behaviour of the automaton \( \A \) reading \( c \).
Given a word \( w = w_1 w_2 \dots w_n \in \alphabet^* \), we define the
formula \( g_w \) by
\[
g_w := f_{w_1} (f_{w_2} ( \dots (f_{w_n} \toQuant{p_F})\dots)).
\]
We write \(\A_q\) for the automaton obtained from \(\A\) by replacing the initial
state with \(q\). The following lemma states that \(g_w\) represents
the probability that \(w\) is accepted by the automaton from the current state \(q\).
\begin{lemma}\label{wordTransitionLemma}
  $\automaton_q(w) = \semant{g_w}_{M_\A}(q)$ for every \( q \in Q \).
\end{lemma}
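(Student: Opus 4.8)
The plan is to prove $\automaton_q(w) = \semant{g_w}_{M_\A}(q)$ by induction on the length of the word $w$. The base case is $w = \varepsilon$ (the empty word): here $g_\varepsilon = \toQuant{p_F}$, and by the semantics of atomic propositions $\semant{\toQuant{p_F}}_{M_\A}(q)$ equals $1$ if $q \in \iAP(p_F) = F$ and $0$ otherwise; this matches $\automaton_q(\varepsilon)$, which is the empty-product-and-sum convention that yields $1$ precisely when the sole state $q_0 = q$ is already accepting. For the inductive step I would write $w = c\,w'$ with $c \in \alphabet$ and $w' \in \alphabet^*$, so that $g_w = f_c(g_{w'})$, and unfold the definitions of $f_c$ and of the semantic clauses for $\lozenge$, $\land$, and $\Avg$.

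The key computation is to expand $\semant{f_c(g_{w'})}_{M_\A}(q)$ using the semantics of $f_c = \lambda X.\lozenge(\toQuant{p_c} \land \Avg X)$. Evaluated at state $q \in Q$, the modality $\lozenge$ ranges over successors $s'$ with $P(q,s') > 0$; by construction of $M_\A$ these are exactly the states $(q,c')$ for $c' \in \alphabet$. The conjunct $\toQuant{p_c}$ forces the value to $0$ unless $s' = (q,c)$ (since $\iAP(p_c) = \{(q,c') \mid \dots\}$ picks out the $c$-labelled copy), so the $\max$ over successors selects the single summand $s' = (q,c)$, giving $\semant{\Avg\,g_{w'}}_{M_\A}((q,c))$. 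Then the $\Avg$ clause produces $\sum_{q' \in Q} P((q,c),q')\cdot \semant{g_{w'}}_{M_\A}(q') = \sum_{q' \in Q} \Delta(q,c)(q')\cdot \semant{g_{w'}}_{M_\A}(q')$, using $P((q,c),q') = \Delta(q,c)(q')$.

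At this point the induction hypothesis replaces $\semant{g_{w'}}_{M_\A}(q')$ with $\automaton_{q'}(w')$, yielding $\sum_{q' \in Q}\Delta(q,c)(q')\cdot\automaton_{q'}(w')$, which I would then recognize as exactly $\automaton_q(c\,w')$ by peeling off the first transition in the defining sum for $\automaton_q(w)$. The main obstacle is bookkeeping rather than conceptual: one must verify carefully that the $\lozenge$-maximum genuinely collapses to the unique $c$-successor, i.e.\ that the $\toQuant{p_c}$ guard zeroes out every other successor $(q,c')$ with $c' \neq c$ and that no successor lies in $Q$ itself (which holds since $P(q,q'') = 0$ for $q'' \in Q$ by the ``otherwise'' clause). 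Once this collapse is justified, the remaining steps are a direct unfolding matched against the definition of $\automaton_q(w)$, and the equality follows.
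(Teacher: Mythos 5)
Your proposal is correct and follows essentially the same route as the paper: induction on $|w|$, with the base case handled by the semantics of $p_F$ and the inductive step unfolding $f_{w_1}$ so that the $\lozenge$-maximum collapses to the unique successor $(q,w_1)$ and the $\Avg$ clause reproduces the sum $\sum_{q'}\Delta(q,w_1)(q')\cdot\automaton_{q'}(w')$. Your extra care in justifying the collapse of the maximum is exactly the step the paper performs implicitly.
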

\begin{proof}
  Let \(\automaton=(Q, \alphabet,q_I, \Delta, F)\).
  The proof proceeds by induction on the length \(|w|\) of \(w \).
  \begin{itemize}
  \item Case where \(|w|=0\), i.e., \(w=\epsilon\):
    By the definition of \(\automaton_q\),
    \(\automaton_q(\epsilon) = 1\) if \(q\in F\) and \(0\) otherwise.
    We have the required result, as \(g_\epsilon = \toQuant{p_F}\).
  \item Case where \(|w|>0\):
    Let \(w=w_1\cdots w_n=w_1w'\).
    We have:
    %    \[\begin{array}{ll}
    \begin{align*}
    A_q(w) &= 
    \sum_{\substack{q_0, \dots, q_{n-1}\in Q, q_n\in F\; %%\\
        \mathit{ s.t.}\ q_0=q}} \, \prod_{1\leq i\leq n} \Delta(q_{i-1},w_i)(q_{i})\\[1ex]
    &= \sum_{q'\in Q} \Delta(q,w_1)(q')\cdot
    (\sum_{\substack{q_1 \dots, q_{n-1}\in Q, q_n\in F\;%%\\
        \mathit{ s.t.}\ q_1=q'}}
    \, \prod_{2\leq i\leq n} \Delta(q_{i-1},w_i)(q_{i}))\\[1ex]
    &= \sum_{q'\in Q} \Delta(q,w_1)(q')\cdot \automaton_{q'}(w').
    \end{align*}
%    \]
    Since \(g_w = f_{w_1}(g_{w'}) \equiv
    \lozenge (\toQuant{p_{w_1}} \land \bigcirc g_{w'})\), 
    we have:
%    \[
    \begin{align*}
      \semant{g_w}_{M_\A}(q)
      &=
         \max_{c\in \Sigma} \semant{\toQuant{p_{w_1}} \land \bigcirc g_{w'}}(q,c)\\
      &= \semant{\bigcirc g_{w'}}(q,w_1)\\
      &= \sum_{q'\in Q} \Delta(q,w_1)(q')\cdot \semant{g_{w'}}(q')
    \end{align*}
%    \]
    By the induction hypothesis, we have \(\automaton_{q'}(w')=\semant{g_{w'}}(q')\),
    which implies the the required result.
    \qedhere
    \end{itemize}
\end{proof}

Using Lemma~\ref{wordTransitionLemma}, we obtain
\(
\val(\A)
=
\sup_{n \in\omega} \semant{\bigvee_{w \in \alphabet^{\le n}} g_w}_{M_\A}(q_I)
%%\quad=\quad
%%\big(\LUB{\Quant} \set{ \semant{\bigvee_{w \in \alphabet^{\le n}} g_w}_{M_\A}\mid n\in\omega}
%%  \big)(q_0)
\),
where \( \alphabet^{\le n} \) is the set of words of length up to \(n \).
This can be expressed by using the least fixpoint operator.
%By rewriting the right hand side using a fixpoint operator, we get the following theorem.
\begin{theorem}\label{PHFLRepresentation}
Let \(\theta_\A\) be the formula of type $\Quant \rightarrow \Quant$ defined by:
\[
\theta_\A := \mu F. \big( \lambda X. X \lor \bigvee_{c \in \alphabet} F\,(f_{c}\,X) \big).
\]
Then $\val(\automaton) = \semant{\theta_\A\,\toQuant{p_F}}_{M_\A}(q_I)$.
Therefore \(M_\A \models \phi_A \) if and only if \( \val(\automaton) = 1 \),
for \(\phi_A:=[\theta_\A\,\toQuant{p_F}]_{\ge 1}\).
\end{theorem}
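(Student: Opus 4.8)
The plan is to show that $\semant{\theta_\A\,\toQuant{p_F}}_{M_\A}(q_I)$ equals $\sup_{w\in\alphabet^*}\automaton(w)$, which is exactly $\val(\automaton)$ by definition. I would unfold the least fixpoint $\theta_\A = \mu F.(\lambda X. X \lor \bigvee_{c\in\alphabet} F\,(f_c\,X))$ through its Kleene approximants, just as was done for $\psi$ in the worked example following Definition~\ref{defSemant}. Writing $\Func = \lambda G.\lambda X. X \lor \bigvee_{c\in\alphabet} G\,(f_c\,X)$, the semantics of $\theta_\A$ is $\LFP(\semant{\Func}) = \LUB{}\,\{\semant{\Func}^m(\bot) \mid m\ge 0\}$, since the underlying lattice is complete and the map is monotone.

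The key computation is to identify what the $m$-th approximant $\semant{\Func}^m(\bot)$ computes when applied to $\toQuant{p_F}$ and evaluated at $q_I$. First I would prove by induction on $m$ that
\[
\semant{\Func}^m(\bot)\,(\semant{\toQuant{p_F}})\,(q_I) \;=\; \max_{w\in\alphabet^{\le m-1}} \semant{g_w}_{M_\A}(q_I)
\]
for $m\ge 1$ (with the empty max understood as $0$), where $g_w$ is the iterated application of the $f_c$'s to $\toQuant{p_F}$ defined just before Lemma~\ref{wordTransitionLemma}. The inductive step uses the recursive shape of $\Func$: applying $\semant{\Func}^{m+1}(\bot) = \semant{\Func}(\semant{\Func}^m(\bot))$ to an argument $X$ yields $X \lor \bigvee_{c} \semant{\Func}^m(\bot)(f_c\,X)$, so that taking $X = \toQuant{p_F}$ and using $f_c\,X$ to prepend one more symbol exactly grows the set of words by one letter on the left. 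Here I invoke the fact (noted in the example) that $\bigvee$ is interpreted as pointwise $\max$.

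Combining the induction with Lemma~\ref{wordTransitionLemma}, which gives $\semant{g_w}_{M_\A}(q_I) = \automaton_{q_I}(w) = \automaton(w)$, and passing to the supremum over $m$ yields
\[
\semant{\theta_\A\,\toQuant{p_F}}_{M_\A}(q_I) = \sup_{m\ge 0}\ \max_{w\in\alphabet^{\le m}} \automaton(w) = \sup_{w\in\alphabet^*}\automaton(w) = \val(\automaton),
\]
establishing the first claim. The final biconditional is then immediate: by the semantics of $[\cdot]_{\ge 1}$, we have $M_\A\models\phi_A$ iff $\semant{\theta_\A\,\toQuant{p_F}}_{M_\A}(q_I)\ge 1$, and since this value is a probability bounded above by $1$, this holds iff it equals $1$, i.e.\ iff $\val(\automaton)=1$.

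I expect the main obstacle to be the bookkeeping in the inductive step: one must verify carefully that the least fixpoint is reached as the supremum of finite approximants (continuity of $\semant{\Func}$, which follows since the operations $\lor$, application, and the monotone $f_c$ are Scott-continuous on the complete lattice $\semant{\Quant}$), and that the disjunction over $c\in\alphabet$ inside $\Func$ matches the union over first letters in the word indexing, so that $\semant{\Func}^m(\bot)$ ranges over exactly the words of length up to $m-1$ rather than off-by-one. Establishing that the supremum of the approximants is genuinely a fixpoint (and hence the least one) mirrors the argument in the worked example and is the one place where a short continuity remark is needed.
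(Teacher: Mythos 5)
Your proposal follows essentially the same route as the paper: both compute $\LFP$ of the outer functional as the supremum of its Kleene approximants (justifying this by a continuity/fixpoint check), show by induction that the $n$-th approximant applied to $p_F$ is the disjunction of the $g_w$ over words of bounded length, and conclude via Lemma~\ref{wordTransitionLemma}. The one point to repair is that your inductive statement is pinned to the single state $q_I$ and the single argument $p_F$, whereas the inductive step evaluates the previous approximant at the modified argument $f_c\,X$ and, through $\lozenge$ and $\Avg$, at states other than $q_I$; the induction hypothesis must therefore be stated as an equality of elements of $D_{\Quant}$ (all states) for an arbitrary argument, after which the argument goes through exactly as in the paper.
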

\begin{proof}
  Let
  \begin{equation*}
    \xi := \lambda F. \lambda X. X \vee \bigvee_{c \in \alphabet} F\,(f_c\,X).
  \end{equation*}
  Then, %%for every Markov chain \( M \) and natural number \( n \),
  we have %%it is easy to verify:\todo{elaborate}
  \begin{equation*}
    \semant{\theta_A}_{M} \quad=\quad
    \semant{\mu F. \xi\,F}_{M}
    %% \semant{\theta_A\,\toQuant{p_F}}_{M} \quad=\quad
    %% \semant{(\mu F. \xi\,F)\,\toQuant{p_F}}_{M}
    \quad=\quad
    \LUB{\Quant\to\Quant}\set{ \semant{\xi^n(\bot)}\mid n\in\omega}
%%    \semant{\xi^n(\mu F. \xi\,F)\,\toQuant{p_F}}_M
%%    \quad\ge\quad \semant{\xi^n(\bot)\,\toQuant{p_F}}_M,
  \end{equation*}
  where \( \bot := \lambda Z. \mu U. U \) is the formula of type \( \Quant \to \Quant \),
%%  , whose interpretation is the identically zero function,
  and \(\xi^n(x)\) denotes \(n\)-times applications of \(\xi\) to \(x\).
  In fact, \(\LUB{\Quant\to\Quant}\set{ \semant{\xi^n(\bot)}\mid n\in\omega}\)
  is a fixpoint of \(\semant{\xi}\), because:
  \begin{align*}
&    \semant{\xi}(\LUB{\Quant\to\Quant}\set{ \semant{\xi^n(\bot)}\mid n\in\omega})\\
    &    = \lambda x\in D_\Quant.
    x\LUB{\Quant}
    \big(  \LUB{}_{c\in \alphabet} (\LUB{\Quant\to\Quant}\set{ \semant{\xi^n(\bot)}\mid n\in\omega})(\semant{f_c}\; x)\big)\\
    &    = \lambda x\in D_\Quant.
    x\LUB{\Quant}
    \big(  \LUB{}_{c\in \alphabet} (\LUB{\Quant}\set{ \semant{\xi^n(\bot)}(\semant{f_c}\; x)\mid n\in\omega})\big)\\
    &    = \lambda x\in D_\Quant.
    \LUB{\Quant}\set{
    x\LUB{\Quant}
    \big(  \LUB{}_{c\in \alphabet} \semant{\xi^n(\bot)}(\semant{f_c}\; x)\big)\mid n\in\omega}\\
    &    = 
    \LUB{\Quant\to\Quant}\set{ \semant{\xi^{n+1}(\bot)}\mid n\in\omega}.\\
  \end{align*}
  Since \(\semant{\xi}\) is monotonic and
  \(\semant{\bot}\) is the least element, we also have:
%%\(\LUB{\Quant}\set{ \semant{\xi^{n+1}(\bot)}\mid n\in\omega}\)
  \[  \semant{\mu F. \xi\,F}_{M} = \semant{\xi^n(\mu F. \xi\,F)} \GEQ \semant{\xi^n(\bot)}\] for any \(n\in\omega\)
  hence also \[
\semant{\mu F. \xi\,F}_{M}
    %% \semant{\theta_A\,\toQuant{p_F}}_{M} \quad=\quad
    %% \semant{(\mu F. \xi\,F)\,\toQuant{p_F}}_{M}
    \GEQ
    \LUB{\Quant\to\Quant}\set{ \semant{\xi^n(\bot)}\mid n\in\omega}.\]
  Thus, we have the equality.
%%%  Furthermore we have
%%%  \begin{equation*}
%%%    \semant{\theta_A}_{M} \quad=\quad
%%%    \LUB{\Quant\to\Quant}\set{ \semant{\xi^n(\bot)}\mid n\in\omega}
%%%  \end{equation*}
  %%%  because \( \xi \) is ``continuous'' provided that the
%%%  argument \( F \) ranges over continuous functions.
  
  By a straightforward induction on \(n\), we also have:
  \( % \begin{equation*}
    \semant{\xi^{n+1}(\bot)\,\toQuant{p_F}}_{M}
    =
    \semant{\bigvee_{w \in \alphabet^{\le n}} g_w}_{M}
\).
%  \end{equation*}
  Therefore, by using also Lemma~\ref{wordTransitionLemma}, we obtain:
  \begin{equation*}
    \val(\A)
    \quad=\quad
    \sup_n (\semant{\bigvee_{w \in \alphabet^{\le n}} g_w}_{M_\A}(q_I))
    \quad=\quad
%    (\LUB{\Quant}\set{ \semant{\xi^n(\bot)\toQuant{p_F}}_{M_\A}\mid n\in\omega}(q_I)
    \sup_n (\semant{\xi^{n+1}(\bot)\toQuant{p_F}}(q_I))
    \quad=\quad
    \semant{\theta_A\,\toQuant{p_F}}_{M_\A}(q_I),
  \end{equation*}
  which implies the required result.
\end{proof}

The following is an immediate corollary of
Theorems~\ref{valueUndecidability} and \ref{PHFLRepresentation}.
\begin{corollary}[Undecidability of PHFL Model-Checking Problem]
  There is no algorithm that, given a Markov chain $M$ and a closed order-1
  formula $\phi$ of type $\Qual$, decides whether $M\models \phi$.
\end{corollary}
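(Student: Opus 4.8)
The plan is to derive the corollary as a standard many-one reduction, combining the effectivity of the construction underlying Theorem~\ref{PHFLRepresentation} with the undecidability of the value-1 problem (Theorem~\ref{valueUndecidability}). The semantic correctness of the reduction is already supplied by Theorem~\ref{PHFLRepresentation}; what remains is to check that the reduction is \emph{computable} and lands in the required syntactic class, and then to assemble the contradiction.

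First I would observe that the map \(\A \mapsto (M_\A, \phi_\A)\) is effective. Given a finite presentation of a probabilistic automaton \(\A = (Q, \Sigma, q_I, \Delta, F)\) (with rational transition probabilities, as required for the value-1 problem to be posed as a decision problem), the Markov chain \(M_\A\) has state set \(Q \uplus (Q \times \Sigma)\), and its transition matrix, labelling function, and initial state are read off directly from the components of \(\A\); likewise the formula \(\phi_\A = [\theta_\A\, p_F]_{\ge 1}\) is a purely syntactic function of \(\A\) whose size is polynomial in \(|Q|\) and \(|\Sigma|\). Both can therefore be produced by a Turing machine on input \(\A\).

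Second, I would check that \(\phi_\A\) lies in the class named in the corollary. It is closed, since \(\theta_\A = \mu F.(\lambda X. X \lor \bigvee_{c\in\Sigma} F\,(f_c\,X))\) and each \(f_c = \lambda X.\lozenge(p_c \land \Avg X)\) bind all their variables. It has type \(\Qual\): we have \(\theta_\A : \Quant \to \Quant\) and \(p_F : \Quant\), so \(\theta_\A\, p_F : \Quant\), and the bracket rule yields \([\theta_\A\, p_F]_{\ge 1} : \Qual\). Finally it is of order at most \(1\), since the only non-ground type occurring in its typing derivation is \(\Quant \to \Quant\), used for \(\theta_\A\) and for each \(f_c\).

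Finally I would assemble the argument by contradiction. Suppose some algorithm \(\mathcal{D}\) decided \(M \models \phi\) for every finite Markov chain \(M\) and every closed order-1 formula \(\phi\) of type \(\Qual\). Then, on input a probabilistic automaton \(\A\), one could first compute \((M_\A, \phi_\A)\) by the effective reduction above and then run \(\mathcal{D}\) on this pair; by Theorem~\ref{PHFLRepresentation}, \(\mathcal{D}\) accepts exactly when \(\val(\A) = 1\). This would constitute a decision procedure for the value-1 problem, contradicting Theorem~\ref{valueUndecidability}. Hence no such \(\mathcal{D}\) exists. The only genuinely delicate points are the effectivity and type-correctness of the construction; because the semantic heart of the reduction is already settled by Theorem~\ref{PHFLRepresentation}, the corollary is indeed immediate.
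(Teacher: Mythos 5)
Your proof is correct and follows essentially the same route as the paper, which simply declares the corollary an immediate consequence of Theorems~\ref{valueUndecidability} and~\ref{PHFLRepresentation}; you merely make explicit the effectivity, closedness, and order-1/type-\(\Prop\) checks that the paper leaves implicit. Nothing is missing.
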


We close this section with some remarks.\footnote{We would like to thank an
     anonymous reviewer of our FSCD 2020 submission for pointing them out.}
\begin{remark}
  Note that the value \(\val(\A)\) of a probabilistic automaton
  cannot even be approximately computed~\cite{DBLP:journals/siglog/Fijalkow17}:
  there is no algorithm that outputs ``Yes'' if \(\val(\A)=1\) and
  ``No'' if \(\val(\A)\le\frac{1}{2}\).
  Thus, the proof of Theorem~\ref{PHFLRepresentation} (in particular,
  the result \(\val(\A)=\semant{\theta_A\,\toQuant{p_F}}_{M_\A}(q_I)\))
  also implies that for a qualitative formula of PHFL \(\psi\),
  \(\semant{\psi}\) is not approximately computable in general.
\end{remark}

\begin{remark}
  It would be interesting to study a converse encoding, i.e., to find
  an encoding of some
  fragment of the PHFL model checking problem
   into the value-1 problem.
   Such an encoding may help us find a decidable
   class of the PHFL model checking problem, based on decidable subclasses
   for the value-1 problem, such as the one studied
   in \cite{DBLP:journals/corr/FijalkowGKO15}.
%%   \nk{Since PCP can be reduced to the value-1 problem,
%%     we can indirectly encode the PHFL
%%     model checking problem for a formula of the form\([\form]_J\)
%%     where \(\form\) is an order-1, \(\mu\)-only, and \([\cdot]_J\)-free formula?}
\end{remark}

\section{Hardness of the PHFL Model-Checking Problem}\label{section:hardness}
In the previous section, we have seen that
PHFL model checking is undecidable
even for the fragment of PHFL without fixpoint alternations.
In this section, 
we give a lower bound of the hardness of the PHFL model-checking problem
in the presence of fixpoint alternations.
The following theorem states the main result of this section.
\begin{theorem}\label{thm:analytical}
The order-1 PHFL model-checking problem is \(\Pi^1_1\)-hard and \(\Sigma^1_1\)-hard.
\end{theorem}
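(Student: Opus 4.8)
The plan is to reduce the truth problem (validity over the standard model \(\mathbb{N}\)) for Lubarsky's \(\mu\)-arithmetic~\cite{DBLP:conf/lics/Lubarsky89} to the order-1 \PHFL{} model-checking problem. Recall that \(\mu\)-arithmetic extends first-order arithmetic (terms built from \(0\), successor, \(+\), \(\times\); atomic comparisons; \(\land,\lor\); first-order quantifiers \(\forall n,\exists n\) over \(\mathbb{N}\)) with least- and greatest-fixpoint operators \(\mu P.\Phi\) and \(\nu P.\Phi\) binding a predicate variable \(P\) ranging over subsets of \(\mathbb{N}^k\). Its truth problem is both \(\Pi^1_1\)-hard and \(\Sigma^1_1\)-hard: a single outermost \(\mu\) (resp.\ \(\nu\)) over an arithmetical matrix already captures \(\Sigma^1_1\) (resp.\ \(\Pi^1_1\)) sets, so one many-one reduction of \(\mu\)-arithmetic truth to \PHFL{} model checking yields both hardness results at once. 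Concretely I would fix a single finite Markov chain \(M\) and, for every closed \(\mu\)-arithmetic formula \(\Phi\), compute an order-1 \PHFL{} formula \(\toPHFL{\Phi}\) of type \(\Qual\) with \(M\models\toPHFL{\Phi}\) iff \(\Phi\) holds in \(\mathbb{N}\).

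The core is an encoding of arithmetic into the \([0,1]\)-valued semantics of \PHFL{}. I would represent a natural number \(n\) by the value \(2^{-n}\), so that \(0\) is the value \(1\) and ``\(n=0\)'' is the admissible threshold test \([\,\cdot\,]_{\ge 1}\). The successor map \(n\mapsto n+1\) is realised as halving of the encoded value, which \(\Avg\) computes on a small gadget in \(M\): from a work state, probability \(1/2\) leads to a sink state on which the argument evaluates to \(0\) and probability \(1/2\) keeps the argument, so \(\Avg\) returns one half of the stored value. First-order quantifiers are turned into fixpoints that enumerate all naturals: \(\exists n.\psi\) becomes a least fixpoint that disjoins \(\psi\) with the recursive call at the successor value, and \(\forall n.\psi\) the dual greatest fixpoint, both evaluated at the initial value \(1\). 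A \(k\)-ary predicate on encoded naturals is exactly an order-1 object of type \(\Quant\to\cdots\to\Quant\to\Quant\), so the fixpoint operators of \(\mu\)-arithmetic translate directly to \PHFL{}'s \(\mu\) and \(\nu\) at these order-1 types; this is why order-1 suffices. To avoid encoding \(+\) and \(\times\) by hand I would first normalise \(\Phi\) so that all arithmetic appears only through successor together with the graph predicates \(\mathrm{Add}\) and \(\mathrm{Mult}\), which are themselves \(\mu\)-arithmetic least-fixpoint-definable from successor and hence handled by the generic fixpoint clause.

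The main obstacle is the monotonicity discipline of \PHFL{}: by Remark~\ref{rem:J} the only admissible threshold tests are \(>r\) and \(\ge r\), so ``strictly less'' tests are forbidden and every function in \(D_{\ty_1\to\ty_2}\) must be monotone, whereas the comparison and equality relations needed for atomic formulas are not monotone in the value encoding. I would resolve this with a dual-polarity representation: each encoded number is carried as a pair \((x,x')\) with \(x=2^{-n}\) and \(x'=1-2^{-n}\), where the complement component \(x'\) is always used positively. Then ``\(n>0\)'', i.e.\ ``\(x<1\)'', is the admissible test \([\,x'\,]_{>0}\) (the bound \(>0\) is allowed; only \(\ge 0\) and \(>1\) are excluded), and the recursive gadgets for comparison, equality, \(\mathrm{Add}\) and \(\mathrm{Mult}\) stay syntactically monotone because both components are consumed monotonically; the halving gadget maintains the invariant since \(x\mapsto x/2\) corresponds to \(x'\mapsto (1+x')/2\), again an \(\Avg\)-computation. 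The functions defined this way are genuinely monotone on all inputs, so the \PHFL{} semantics is well defined, and we only ever instantiate them on the ``diagonal'' of valid pairs. Verifying that \(\toPHFL{\Phi}\) computes the intended value on \(M\) is then a structural induction on \(\Phi\), the fixpoint cases following from the fact that the \PHFL{} fixpoints at the order-1 predicate types coincide with the \(\mu\)-arithmetic fixpoints over \(\mathbb{N}^k\). I expect the monotone realisation of arithmetic comparison under the restricted tests, together with the bookkeeping that the dual representation is preserved through every clause, to be the most delicate part; hardness for both \(\Pi^1_1\) and \(\Sigma^1_1\) then follows from the corresponding hardness of \(\mu\)-arithmetic truth.
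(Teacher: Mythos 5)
Your proposal is correct and follows essentially the same route as the paper: a reduction from the (\(\Pi^1_1\)- and \(\Sigma^1_1\)-hard) truth problem for \(\mu\)-arithmetic, encoding a natural number \(n\) as the value \(2^{-n}\) paired with its complement \(1-2^{-n}\) (the paper stores these in the \(s_0\)- and \(s_0'\)-components of a fixed four-state Markov chain), realising successor by \(\Avg\)-halving, recovering the comparison \(n\le m\) monotonically via the admissible threshold test on \(2^{-n}+(1-2^{-m})\ge 1\), translating fixpoints homomorphically, and proving correctness by a logical relation restricted to the valid pairs. The only cosmetic difference is that the paper works with a leaner \(\mu\)-arithmetic syntax (just \(Z\), \(S\), \(\le\)) so the normalisation step for \(+\) and \(\times\) is unnecessary.
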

Note that \( \Pi^1_1 \) and \( \Sigma^1_1 \), defined in terms of the second-order arithmetic, contain very hard problems.
For example, those classes contain
the problem of deciding whether a given first-order Peano arithmetic formula is true.

We prove this theorem by reducing
the validity checking problem of the \(\mu\)-arithmetic~\cite{DBLP:conf/lics/Lubarsky89}
to the PHFL model-checking problem.
Even the validity checking problem of a higher-order extension of the \(\mu\)-arithmetic
can be reduced to the PHFL model-checking problem. %%, as we shall see.
The key in the proof
is a representation of natural numbers as quantitative propositions such that all
the operations on natural numbers in the \( \mu \)-arithmetic are expressible in PHFL.
%% In the reduction, we encode the natural number \(n\) in \(\mu\)-arithmetic
%% by \(v \in \semant{\Quant}\) satisfying
%% \(v(s_0) = \frac{1}{2^n}\) and \(v(s'_0) = 1 - \frac{1}{2^n}\)
%% for some states \(s_0\) and \(s'_0\).
%% A \(\mu\)-arithmetic formula of form \(s \leq t\),
%% where \(\muSemant{s} = n\) and \(\muSemant{t} = m\),
%% is translated to a PHFL formula of form \([\phi]_{\geq 1 / 2}\)
%% where \(\semant{\phi} = \frac{1}{2} + \frac{1}{2^{n + 1}} - \frac{1}{2^{m + 1}}\).
%% Here we used the fact that
%% the inequality \(n \leq m\) is equivalent to
%% the inequality \(\frac{1}{2} + \frac{1}{2^{n + 1}} - \frac{1}{2^{m + 1}} \geq \frac{1}{2}\).

This section is structured as follows.
Section~\ref{subsec:muArith} reviews
 the basic notions of the \(\mu\)-arithmetic.
Section~\ref{subsec:encodeMu} describes the reduction
and proves the theorem above.

\subsection{Higher-Order Fixpoint Arithmetic}\label{subsec:muArith}
The \(\mu\)-arithmetic~\cite{DBLP:conf/lics/Lubarsky89} is a first-order arithmetic with fixpoint operators.
This section briefly reviews its higher-order extension, studied by Kobayashi et al.~\cite{DBLP:conf/esop/0001TW18}.

As in PHFL, we first define the types of \(\mu\)-arithmetic formulas.
The set of \emph{types}, ranged over by \(A\), is given by:
%\subsection{Syntax of Types}
\begin{align*}
A &::= N \bnf T &
T &::= \muProp \bnf A \to T.
\end{align*}
The type \(N\) is for natural numbers,
\(\muProp\) for (qualitative) propositions,
and \(A \to T\) for functions.
We do not allow functions to return values of type \(N\).
We define the order of types of the \(\mu\)-arithmetic similarly to the PHFL types,
by: \(\order(N)=\order(\muProp)=0\)
and \(\order(A\to T)=\max(\order(A)+1, \order(T))\).

%\subsection{Syntax of Formulas}
%We now give the syntax of \(\mu\)-arithmetic formulas.
Assume a countably infinite set \( \Var \) of variables ranged over by \(X\).
The set of formulas, ranged over by \(\muform\), is given by the following grammar.
\begin{align*}
  s &::= X \bnf Z \bnf S s
  &
  \muform &::=
  s \bnf %X \bnf 
  %s < s \bnf
  s_1 \leq s_2 \bnf
  \muform_1 \land \muform_2 \bnf
  \muform_1 \lor \muform_2 \bnf
  \lambda X. \muform \bnf
  \muform_1 \, \muform_2 \bnf
  \mu X. \muform \bnf
  \nu X. \muform.
\end{align*}
Here, \(Z\) and \(S\) respectively denote the constant \(0\)
and the successor function on natural numbers.
\begin{figure}
\begin{minipage}{100pt}
\AxiomC{$\mathstrut$}
\UnaryInfC{$\tyEnv,X\COL A \muVdash X : A$}
\DisplayProof
\end{minipage}
\begin{minipage}{75pt}
\AxiomC{$\mathstrut$}
\UnaryInfC{$\tyEnv \muVdash Z : N$}
\DisplayProof
\end{minipage}
\vspace{10pt}
\begin{minipage}{75pt}
\AxiomC{$\tyEnv \muVdash s : N$}
\UnaryInfC{$\tyEnv \muVdash S \, s : N$}
\DisplayProof
\end{minipage}
\begin{minipage}{100pt}
\AxiomC{\(\tyEnv \muVdash s : N\)}
\AxiomC{\(\tyEnv \muVdash t : N\)}
\BinaryInfC{\(\tyEnv \muVdash s \leq t : \muProp\)}
\DisplayProof
\end{minipage}
\vspace{10pt}
\begin{minipage}{150pt}
\AxiomC{\(\tyEnv \muVdash \muform_1 : \muProp\)}
\AxiomC{\(\tyEnv \muVdash \muform_2 : \muProp\)}
\BinaryInfC{\(\tyEnv \muVdash \muform_1 \land \muform_2 : \muProp\)}
\DisplayProof
\end{minipage}
\begin{minipage}{150pt}
\AxiomC{\(\tyEnv \muVdash \muform_1 : \muProp\)}
\AxiomC{\(\tyEnv \muVdash \muform_2 : \muProp\)}
\BinaryInfC{\(\tyEnv \muVdash \muform_1 \lor \muform_2 : \muProp\)}
\DisplayProof
\end{minipage}
\begin{minipage}{100pt}
\AxiomC{$\Gamma, X : A \muVdash \muform : T$}
\UnaryInfC{$\Gamma \muVdash \lambda X. \muform : A \rightarrow T$}
\DisplayProof
\end{minipage}
\begin{minipage}{180pt}
\AxiomC{$\Gamma \muVdash \muform_1 : A \rightarrow T$}
\AxiomC{$\Gamma \muVdash \muform_2 : A$}
\BinaryInfC{$\Gamma \muVdash \muform \, \muform_2 : T$}
\DisplayProof
\end{minipage}
\vspace{10pt}
\begin{minipage}{100pt}
\AxiomC{$\Gamma, X : T \muVdash \muform : T$}
\UnaryInfC{$\Gamma \muVdash \mu X. \muform : T$}
\DisplayProof
\end{minipage}
\begin{minipage}{100pt}
\AxiomC{$\Gamma, X : T \muVdash \muform : T$}
\UnaryInfC{$\Gamma \muVdash \nu x. \muform : T$}
\DisplayProof
\end{minipage}
\caption{Typing Rules for the Higher-order Fixpoint Arithmetic.}\label{fig:muArithTypes}
\end{figure}

The typing rules are shown in Fig.~\ref{fig:muArithTypes}; they are just standard
typing rules for the simply-typed \(\lambda\)-calculus, with
several constructors such as \(Z\COL N\), \(S\COL N\to N\), and \(\land: \muProp\to\muProp\).
We shall consider only well-typed formulas. We define the \emph{order} of a formula as
the largest order of the types of its subformulas.
%% Note that taking fixpoint operators of type \(N\) is not allowed.
%% since
%% the semantics \(\muSemant{N}\) of type \(N\) is not a complete lattice, as explained below.

%% We can show that for any type environment \(\tyEnv\), formula \(\phi\) and type \(A\),
%% if \(\tyEnv \muVdash \phi : A\) is derivable then its derivation is unique, by induction on the structure of the formula \(\phi\).

%We now give the semantics of \(\mu\)-arithmetic.
\begin{definition}[Semantics of Types]
%The semantics of a type \(A\) is a partially ordered set $\semant{A} = (D_A, \leq_A)$ defined inductively
The semantics of a type \(\muTy\) is a partially ordered set \(\muSemant{A} = (D_\muTy, \sqleq_\muTy)\) defined inductively
on the structure of \(\muTy\) as follows.
\begin{enumerate}
\item The semantics of \(N\) and \(\muProp\): % are defined as follows.
\begin{align*}
D_N &= \mathbb{N} &
n \sqleq_N m &\defarrow n = m \\
D_\muProp &= \{0, 1\} &
p \sqleq_\muProp q &\defarrow p \leq q
\end{align*}
\item The semantics of \(A \to T\): % is defined as follows.
\begin{align*}
D_{A \to T} &= \{\, f : D_A \to D_T \mid 
\forall u, v \in D_{A}. u \sqleq_{A} v \implies f(u) \sqleq_{T} f(v) \,\} \\
f \sqleq_{A \to T} g &\defarrow \forall v \in D_{A}. f(v) \sqleq_{T} g(v)
\end{align*}
\end{enumerate}
\end{definition}
The semantics \( \muSemant{T} \) of a type \( T \) forms a complete lattice
(while \(\muSemant{N}\) is not);
we write \(\LUB{T}\) (resp. \(\GLB{T}\)) for the least upper bound (resp. greatest lower bound) operation,
and \(\bot_T\) for the least element.

The interpretation \( \muSemant{\tyEnv} \) of a type environment \( \tyEnv \) is the set of functions \( \muInterpret \) such that \(\dom(\muInterpret) = \dom(\tyEnv)\) and that \(\muInterpret(X) \in \muSemant{\tyEnv(X)}\) for every \(X \in \dom(\tyEnv)\).
It is ordered by the point-wise ordering.

\begin{definition}[Semantics of Formulas]
  The semantics of a formula \(\muform\) with judgment
   \(\tyEnv \muVdash \muform : \muTy\) is a monotone map
   from \(\muSemant{\tyEnv}\) to \(\muSemant{\muTy}\), defined as follows.
   \rm
\begin{align*}
\muSemant{\tyEnv \muVdash X : A}(\muInterpret) &:= \muInterpret(X) \\
\muSemant{\tyEnv \muVdash Z : N}(\muInterpret) &:= 0 \\
\muSemant{\tyEnv \muVdash S s : N}(\muInterpret) &:= \muSemant{\tyEnv \muVdash s : N}(\muInterpret) + 1 \\
%\muSemant{\tyEnv \muVdash X : A}(\muInterpret) &:= \muInterpret(X) \\
\muSemant{\tyEnv \muVdash s \leq t : \muProp}(\muInterpret) &:=
\begin{cases}
  1 & \mbox{(if \( \muSemant{\tyEnv \muVdash s : N}(\muInterpret) \leq \muSemant{\tyEnv \muVdash t : N}(\muInterpret) \))} \\
  0 & \mbox{(if \( \muSemant{\tyEnv \muVdash s : N}(\muInterpret) > \muSemant{\tyEnv \muVdash t : N}(\muInterpret) \))}
\end{cases}
\\
\muSemant{\tyEnv \muVdash \muform_1 \land \muform_2 : \muProp}(\muInterpret) &:=
\muSemant{\tyEnv \muVdash \muform_1 : \muProp}(\muInterpret) \LUB{\muProp}
\muSemant{\tyEnv \muVdash \muform_2 : \muProp}(\muInterpret) \\
\muSemant{\tyEnv \muVdash \muform_1 \lor \muform_2 : \muProp}(\muInterpret) &:=
\muSemant{\tyEnv \muVdash \muform_1 : \muProp}(\muInterpret) \GLB{\muProp}
\muSemant{\tyEnv \muVdash \muform_2 : \muProp}(\muInterpret) \\
\muSemant{\tyEnv \muVdash \lambda X. \muform : A \to T}(\muInterpret) &:=
\lambda v \in \muSemant{A}. \muSemant{\tyEnv, X : A \muVdash \muform : T}(\muInterpret[X \mapsto v]) \\
\muSemant{\tyEnv \muVdash \muform_1 \, \muform_2 : T}(\muInterpret) &:=
\muSemant{\tyEnv \muVdash \muform_1 : A \to T}(\muInterpret) \, (\muSemant{\tyEnv \muVdash \muform_2 : A}(\muInterpret)) \\
\muSemant{\tyEnv \muVdash \mu X. \muform : T}(\muInterpret) &:=
\LFP(\lambda v \in D_T.\semant{\tyEnv, X : T \muVdash \muform : T}(\muInterpret[X \mapsto v])) \\
%\bigwedge \{v \in D_{A} | \semant{\tyEnv, x : A \muVdash \muform : A}(\muInterpret[x \mapsto v]) \leq v \} \\
\muSemant{\tyEnv \muVdash \nu X. \muform : T}(\muInterpret) &:=
%\bigvee \{v \in D_{A} | \semant{\tyEnv, x : A \muVdash \muform : A}(\muInterpret[x \mapsto v]) \geq v \}
\GFP(\lambda v \in D_T. \semant{\tyEnv, X : T \muVdash \muform : T}(\muInterpret[X \mapsto v]))
\end{align*}
\end{definition}
As in the case of PHFL, we write \(\muSemant{\muform}(\muInterpret)\) for
\(\muSemant{\tyEnv \muVdash \muform : A}(\muInterpret)\)
and
just \(\muSemant{\muform}\) for \(\muSemant{\muform}(\emptyset)\) 
when there is no confusion.

% Although we excluded operators \(<\), \(=\), \(\neq\) and \(\lnot\) from the arithmetic,
% we can encode these operators (with standard semantics) using the following facts.
% \begin{itemize}
% \item The formula \(s < t\) is equivalent to \(S s \leq t\)
% \item The formula \(s = t\) is equivalent to \((s \leq t) \land (t \leq s)\)
% \item The formula \(s \neq t\) is equivalent to \((s < t) \lor (t < s)\)
% \item The formula \(\lnot \muform\), where \(\muform\) is typed as \(\muProp\) under some type environment, can be transformed
% to a formula without \(\lnot\) operators using De Morgan dual repeatedly.
% \end{itemize}

\begin{example}
Let \(\muform = \mu F. \lambda X. (X = 100 \lor F (S (S \, X)))\) where
\(100\) is an abbreviation of the term
\(\underbrace{S(S(\dots S}_{100} \, Z) \dots )\).
The semantics \(\muSemant{\muform}\) is a function \(f : \mathbb{N} \to \{0, 1\}\)
where \(f(n) = 1\) just if
\(n\) is an even number no greater than \(100\).
\end{example}

The \emph{validity checking problem} of the higher-order fixpoint arithmetic is the problem
of,
given a closed formula \(\muform\) of type \(\muProp\),
deciding whether \(\muSemant{\muform} = 1\).
The following result is probably folklore,
which follows from the well-known fact that
the \emph{fair termination problem for programs} is \( \Pi^1_1 \)-complete (see, e.g., Harel~\cite{DBLP:journals/jacm/Harel86}),
and the fact that the fair termination of a program can be reduced to the validity of
a first-order fixpoint arithmetic formula (see, e.g.,
~\cite{DBLP:conf/esop/0001TW18} for the reduction).
%%but we include a proof for the sake of completeness.
%the result actually holds already for the order-1 fragment (i.e., Lubarsky's \(\mu\)-arithmetic).
\begin{theorem}\label{thm:muArithHardness}
The validity checking problem of the first-order fixpoint arithmetic is \(\Pi^1_1\)-hard
and \(\Sigma^1_1\)-hard.
\end{theorem}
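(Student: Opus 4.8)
The plan is to derive both hardness results from the $\Pi^1_1$-completeness of the fair termination problem together with the De Morgan duality of the logic. First I would establish $\Pi^1_1$-hardness directly. By Harel~\cite{DBLP:journals/jacm/Harel86} the fair termination problem for nondeterministic while-programs is $\Pi^1_1$-complete, and by the arithmetization described in~\cite{DBLP:conf/esop/0001TW18} the fair termination of such a program can be reduced to the validity of a closed first-order fixpoint arithmetic formula: one encodes program configurations as natural numbers, expresses the one-step transition relation and the fairness constraint as quantifier-free arithmetic predicates over $N$, and captures ``every fair computation terminates'' by an alternation of a least fixpoint (for reachability of a halting configuration) guarded by a greatest fixpoint (tracking the fairness condition). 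The resulting formula uses only the types $N$ and $\muProp$ together with first-order predicate types $N \to \dots \to \muProp$, so it lies in the first-order fragment. Composing the two reductions yields a computable map from an arbitrary $\Pi^1_1$ instance to validity instances, establishing $\Pi^1_1$-hardness.

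Second, to obtain $\Sigma^1_1$-hardness I would exploit that the logic is effectively closed under complementation. I define a negation $\overline{(\cdot)}$ on formulas by pushing negation inward in the usual De Morgan style: $\overline{s \le t} := S\,t \le s$ (i.e.\ $t < s$), $\overline{\muform_1 \land \muform_2} := \overline{\muform_1} \lor \overline{\muform_2}$ and dually for $\lor$, $\overline{\mu X.\, \muform} := \nu X.\, \overline{\muform}$ and $\overline{\nu X.\, \muform} := \mu X.\, \overline{\muform}$, and $\overline{\lambda X.\, \muform} := \lambda X.\, \overline{\muform}$, while numeric subterms and $N$-typed variables are left unchanged; the complement of an application $\muform_1\,\muform_2$ substitutes the recursively negated argument $\overline{\muform_2}$ when $\muform_2$ has a type $T$ (a proposition or function type) and leaves $\muform_2$ unchanged when it has type $N$. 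A routine induction on typing derivations shows $\muSemant{\overline{\muform}}(\overline{\muInterpret}) = \overline{\muSemant{\muform}(\muInterpret)}$, where the semantic negation $\overline{(\cdot)}$ is $1 - (\cdot)$ at $\muProp$, the identity on $\mathbb{N}$, and acts argumentwise at function types (so that $\overline{f}(v) = \overline{f(\overline{v})}$), and $\overline{\muInterpret}$ negates the $T$-typed components of the environment; the swap $\mu \leftrightarrow \nu$ is sound because this negation is an order-reversing involution that turns $\LFP$ into $\GFP$. In particular, for a closed formula $\muform$ of type $\muProp$ we obtain $\muSemant{\overline{\muform}} = 1 - \muSemant{\muform}$.

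With this at hand, $\Sigma^1_1$-hardness follows formally. Let $P$ be any $\Sigma^1_1$ problem; then its complement $P^c$ is $\Pi^1_1$, so by the first part there is a computable map $x \mapsto \muform_x$ with $x \in P^c \iff \muSemant{\muform_x} = 1$. Setting $g(x) := \overline{\muform_x}$, which is again computable, we get $x \in P \iff \muSemant{\muform_x} = 0 \iff \muSemant{\overline{\muform_x}} = 1$, so $g$ reduces $P$ to the validity problem; hence validity is $\Sigma^1_1$-hard as well. Since the higher-order fixpoint arithmetic contains the first-order fragment, the same bounds hold a fortiori for it.

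The step I expect to be the main obstacle is the $\Pi^1_1$ reduction of the first paragraph: getting the arithmetization of fair termination exactly right, in particular encoding the fairness (B\"uchi-style) acceptance condition with the correct $\mu$/$\nu$ alternation so that validity of the formula coincides with fair termination, and checking that the encoding stays within the first-order fragment. I would lean on the existing reduction in~\cite{DBLP:conf/esop/0001TW18}, adapting its notation to the present syntax. The duality argument, by contrast, is routine; the only delicate point is the treatment of negation at function and $N$-typed arguments, which is handled by making the semantic negation the identity on $\mathbb{N}$ and argumentwise on function spaces.
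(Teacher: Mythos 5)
Your proposal follows essentially the same route as the paper: the paper proves this theorem only by citation, observing that fair termination of programs is $\Pi^1_1$-complete (Harel) and that fair termination reduces to validity of a first-order fixpoint-arithmetic formula (the reduction in~\cite{DBLP:conf/esop/0001TW18}), leaving the $\Sigma^1_1$ half implicit. Your explicit De Morgan duality argument (an order-reversing involution swapping $\mu$ and $\nu$, with $\overline{s\le t}$ given by $S\,t\le s$) is the standard and correct way to make that implicit step precise.
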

%%%\begin{proof}
%%%  It is well-known that the \emph{fair termination problem for programs} is \( \Pi^1_1 \)-complete (see, e.g., Harel~\cite{DBLP:journals/jacm/Harel86}).
%%%  Kobayashi et al.~\cite{DBLP:conf/esop/0001TW18} have given a reduction from temporal verification of programs, including fair termination as an instance, to the validity problem of higher-order fixpoint arithmetic.
%%%  Hence the validity problem of higher-order fixpoint arithmetic is \( \Pi^1_1 \)-hard.
%%%  It is also \( \Sigma^1_1 \)-hard since formulas are closed under negation.\footnote{Although negation is not a primitive logical connective of the logic, it is a definable operation in a certain sense.  See also the remark in Section~\ref{sec:syntax-of-phfl}.}
%%%  
%%%  The above argument is almost independent of the choice of the programming language;\footnote{The requirements are (1) that the language should be sufficiently expressive, namely it is Turing-complete and (2) that the language is executable by computers, i.e.~it has no incomputable instructions.}
%%%  let us choose a language of the lowest order, namely while-language.
%%%  For programs of while-language, the reduction by Kobayashi et al.~\cite{DBLP:conf/esop/0001TW18} yields first-order formulas.
%%%\end{proof}
\begin{remark}
  As for an upper bound,
  Lubarsky~\cite{DBLP:conf/lics/Lubarsky89} has shown that predicates on natural numbers definable by \(\mu\)-arithmetic formulas
  belong to \(\Delta^1_2\).
  One can prove that the validity problem for the \( \mu \)-arithmetic is \(\Delta^1_2\) as well. %, although this result dose not immediately follow from Lubarsky's result.
\end{remark}

\subsection{Hardness of PHFL Model Checking}\label{subsec:encodeMu}
We give a reduction from the validity checking problem of the higher-order fixpoint arithmetic
to the PHFL model-checking problem.
The main theorem of this section (Theorem~\ref{thm:analytical}) is an immediate consequence of this reduction and Theorem~\ref{thm:muArithHardness}.
%We show that any formula of \(\mu\)-arithmetic can be encoded as a formula of PHFL.
%In the translation, we encode the value \(n \in \mathbb{N}\) in \(\mu\)-arithmetic
%into an element in \(\semant{\Prop}\) whose value in the initial state is \(1 / 2^n\).
%To encode
%A \(\mu\)-arithmetic formula of the form \(s \leq t\)
%is encoded as
%we use 
%a PHFL formula of form \([\muform_{s, t}]_{\geq 1 / 2}\).

Given a formula \( \muform \) of the higher-order fixpoint arithmetic, we need to effectively construct a pair
\( (\form, M) \) of a formula
of \PHFL{} and a Markov chain such that \( \muform \) is true if and only if \( M \models \form \).
The Markov chain \(M\) is independent of the formula \( \muform \).
We first define the Markov chain and then explain the intuition of the translation of formulas.

The Markov chain \(M = (S, P, \iAP, s_{\text{in}})\) is shown in Figure~\ref{fig:encodeMu}.
It is defined as follows.
\begin{itemize}
\item The set of states is \(S = \{s_0, s'_0, s_1, s_1'\}\).
\item The transition probability satisfies \(P(s_0, s_1) = P(s_0, s'_0) = %1 / 2\),\(
  P(s'_0, s_0) = P(s'_0, s_1') = \frac{1}{2}\),
\(P(s_1, s_0) = P(s_1', s'_0) = 1\)
and \(P(s_i, s_j) = 0\) for all other pairs of states.
\item There are four atomic propositions %\( \AP = \{ p_0, p_0',p_1,p_1'\} \)
  \(p_0, p_0',p_1\), and \(p_1'\),
  representing each state (e.g.~\( \iAP(p_0) = \{ s_0 \} \)).
\item The initial state \(\inState\) is \(s_0\).
\end{itemize}
\begin{figure}
  \centering
\setlength\unitlength{10pt}
\begin{picture}(12, 12)(0, 0)
\put(3, 3){\circle{2}}
\put(2.5, 2.75){\(s'_0\)}
\put(3, 8){\circle{2}}
\put(2.5, 7.75){\(s_0\)}
\put(9, 2){\circle{2}}
\put(8.5, 1.75){\(s_1'\)}
\put(9, 9){\circle{2}}
\put(8.5, 8.75){\(s_1\)}
\qbezier(2.4, 3.8)(2, 5.5)(2.4, 7.2)
\put(2.4, 7.2){\vector(1, 4){0}}
\put(0.6, 5.5){\(0.5\)} % move label a bit away from arrow
\qbezier(3.6, 7.2)(4, 5.5)(3.6, 3.8)
\put(3.6, 3.8){\vector(-1, -4){0}}
\put(4.2, 5.5){\(0.5\)}
\qbezier(3.8, 8.6)(5.5, 9)(8, 9)
\put(8, 9){\vector(1, 0){0}}
\put(5.5, 9.2){\(0.5\)}
\qbezier(8.2, 8.4)(6, 8)(4, 8)
\put(4, 8){\vector(-1, 0){0}}
\put(6, 7){\(1\)}
\qbezier(3.8, 2.4)(5.5, 2)(8, 2)
\put(8, 2){\vector(1, 0){0}}
\put(5.5, 1){\(0.5\)}
\qbezier(8.2, 2.6)(6, 3)(4, 3)
\put(4, 3){\vector(-1, 0){0}}
\put(6, 3.2){\(1\)}
\end{picture}
\caption{The Markov Chain for Reduction from Higher-order Fixpoint Arithmetic to PHFL.}
\label{fig:encodeMu}
\end{figure}

For notational convenience, we write \( v \in \semant{\Quant}_M\) as a tuple \((v(s_0), v(s'_0), v(s_1), v(s_1'))\).
%We also denote the value \(1 / 2^n\) by \(p_n\), for each natural number \(n\).

As mentioned at the beginning of this section, the key of the reduction is the representation of natural numbers, as well as operations on natural numbers.
We encode a propositional formula \(\muform\)
into a quantitative propositional formula \(\form\) such that
\(\semant{\form}_M = (\muSemant{\muform},\_,\_,\_)\),
and encode  a natural number \( n \) into a quantitative propositional formula
\( \psi \) such that
\(\semant{\form}_M = (\frac{1}{2^n},1 - \frac{1}{2^n}, \_,\_)\).
%%\begin{equation*}
%%  \semant{\psi}_{M}(s_0) = \frac{1}{2^n}
%%  \qquad
%%  \mbox{and}
%%  \qquad
%%  \semant{\psi}_{M}(s_0') = 1 - \frac{1}{2^n}.
%%\end{equation*}
Here, \(\_\) denotes a ``don't care'' value.
%%The values of \( \psi \) on \( s_1 \) and \( s_1' \) are arbitrary; we shall write \( \_ \) for this kind of unimportant values.
We implement primitives on natural numbers \( Z \), \( S \) and \( \le \), as follows.

The constant \( Z \) can be represented by \( \toQuant{p_0} \).
Indeed, \( \semant{\toQuant{p_0}}_M = (1, 0, 0, 0)= (1/2^0, 1-(1/2^0), 0, 0) \) as expected.
%then \( \semant{\toQuant{p_0'}}_M = (0, 1, 0, 0) \) as expected.

Assuming that \( \form \) represents
\( n \) (i.e.~\( \semant{\form}_M = (1/2^n, 1-(1/2^n), \_, \_) \)), the successor \( n+1 \) can be represented by
\begin{equation*}
  \form' \quad:=\quad \bigcirc((\bigcirc \form \land (p_1 \lor p_1')) \lor p_0).
\end{equation*}
Indeed, we have:
\begin{align*}
  \semant{\bigcirc \form}_M &= (\_, \_, \frac{1}{2^n}, 1 - \frac{1}{2^n}) \\
  \semant{\bigcirc \form \land (p_1 \lor p_1')}_M &= (0, 0, \frac{1}{2^n}, 1 - \frac{1}{2^n}) \\
  \semant{(\bigcirc \form \land (p_1 \lor p_1')) \lor p_0}_M &= (1, 0, \frac{1}{2^n}, 1- \frac{1}{2^n}) \\
  \semant{\bigcirc((\bigcirc \form \land (p_1 \lor p_1')) \lor p_0)}_M &= (\frac{1}{2} \times \frac{1}{2^n}, \frac{1}{2} + \frac{1}{2} \times (1 - \frac{1}{2^n}), \_, \_)\\
  &= (\frac{1}{2^{n+1}}, 1 - \frac{1}{2^{n+1}}, \_, \_).
\end{align*}

It remains to encode \(\leq\).
We use the fact that, for any natural numbers \( n \) and \( m \), %%we have
\begin{equation*}
  n \le m
  \quad\Leftrightarrow\quad
  \frac{1}{2^n} \ge \frac{1}{2^m}
  \quad\Leftrightarrow\quad
%%  \frac{1}{2^n} - \frac{1}{2^m} \ge 0
%%  \quad\Leftrightarrow\quad
  \frac{1}{2^n} + (1 - \frac{1}{2^m}) \ge 1.
\end{equation*}
The \( s_0' \)-component of the representation of a natural number plays an important role
below. 
Assume that \( \form \) and \( \chi \) represent \( n \) and \( m \) respectively.
Then we have
\begin{equation*}
  \semant{\bigcirc \form \land p_1}_M = (0, 0, \frac{1}{2^n}, 0)
  \qquad
  \semant{\chi \land p_0'}_M = (0, 1-\frac{1}{2^m}, 0, 0)
\end{equation*}
and thus
\begin{equation*}
  \semant{(\bigcirc \form \land p_1) \lor (\chi \land p_0')}_M = (0, 1-\frac{1}{2^m}, \frac{1}{2^n}, 0).
\end{equation*}
Therefore
\begin{equation*}
  \semant{\bigcirc ((\bigcirc \form \land p_1) \lor (\chi \land p_0'))}_M = (\frac{1}{2} \times \big(\frac{1}{2^n} + (1-\frac{1}{2^m}) \big), \_, \_, \_).
\end{equation*}
Thus, \( n \le m \) if and only if the \( s_0 \)-component of the above formula is \( \ge \frac{1}{2} \).
In other words, \(n\le m\) just if:
\begin{equation*}
  \semant{[\bigcirc ((\bigcirc \form \land p_1) \lor (\chi \land p_0'))]_{\frac{1}{2}}}_M =
  (1,\_,\_,\_).
\end{equation*}

Let us formalize the above argument.
We first give the translation of types:
\begin{align*}
\tr(N) &= \Quant &
\tr(\muProp) &= \Qual &
\tr(A \to T) &= \tr(A) \to \tr(T). %& \text{for \(A \neq N\)}
\end{align*}
The translation can be naturally extended to type environments.
%
%% For a type environment \(\tyEnv\) of \(\mu\)-arithmetic, we define a type environment \(\tr(\tyEnv)\) of PHFL as one satisfying
%% \(\dom(\tr(\tyEnv)) = \dom(\tyEnv)\)
%% and
%% \(\tr(\tyEnv)(X) = \tr(\tyEnv(X))\) for all \(X \in \dom(\tyEnv)\).
%
Following the above discussion, the translation of formulas of type \( N \) is given by
\begin{equation*}
  \tr(Z) = \toQuant{p_0}
  \qquad\mbox{and}\qquad
  \tr(S \, s) = \bigcirc((\bigcirc \tr(s) \land (p_1 \lor p_1')) \lor p_0).
\end{equation*}
The comparison operator can be translated as follows:
\begin{equation*}
  \tr(s \leq t) =
     [\bigcirc ((\bigcirc \tr(s) \land p_1) \lor (\tr(t) \land p_0'))]_{\geq \frac{1}{2}}.
\end{equation*}
The translation of other connectives is straightforward:
\begin{gather*}
  \tr(\muform_1 \land \muform_2) = \tr(\muform_1) \land \tr(\muform_2)
  \qquad
  \tr(\muform_1 \lor \muform_2) = \tr(\muform_1) \lor \tr(\muform_2)
  \qquad
  \tr(\lambda X. \muform) = \lambda X. \tr(\muform)
  \\
  \tr(X) = X
  \qquad
  \tr(\muform_1 \, \muform_2) = \tr(\muform_1) \, \tr(\muform_2)
  \qquad
  \tr(\mu X. \muform) = \mu X. \tr(\muform)
  \qquad
  \tr(\nu X. \muform) = \nu X. \tr(\muform).
\end{gather*}

The following lemma states that the translation preserves types.
\begin{lemma}
  If \(\tyEnv \muVdash \muform : A\), then \(\tr(\tyEnv) \vdash \tr(\muform) : \tr(A)\). \end{lemma}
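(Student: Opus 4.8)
The plan is to prove the statement by a straightforward structural induction on the derivation of \(\tyEnv \muVdash \muform : A\) (equivalently, on the structure of \(\muform\)), checking in each case that the translated formula is typable at the translated type by the corresponding PHFL rule of Figure~\ref{fig:PHFLTypes}. The first thing to record is that the translation of types collapses both base types to a single PHFL type: \(\tr(N) = \Quant\) and \(\tr(\muProp) = \Qual\) are literally the same type \(\mathit{Prop}\), since qualitative and quantitative propositions are identified in the present paper. Consequently every arithmetic gadget to be type-checked lives at the single base type \(\mathit{Prop}\), so there are no base-level constraints to violate. I would also note that \(\tr\) extends to type environments pointwise, so that \(\tr(\tyEnv, X\COL A) = \tr(\tyEnv), X\COL \tr(A)\), which is what the variable and binder cases need.

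For the compositional cases the induction is immediate, because \(\tr\) commutes with the corresponding constructors. For a variable \(X\), the PHFL variable rule applied to \(\tr(\tyEnv), X\COL\tr(A)\) types \(\tr(X)=X\) at \(\tr(A)\). For \(\lambda X.\muform\), \(\muform_1\,\muform_2\), \(\mu X.\muform\), \(\nu X.\muform\), \(\muform_1\land\muform_2\), and \(\muform_1\lor\muform_2\), the induction hypothesis supplies typings of the immediate subformulas, and the matching PHFL rule (abstraction, application, \(\mu\), \(\nu\), \(\land\), \(\lor\)) yields the claim after unfolding \(\tr(A\to T)=\tr(A)\to\tr(T)\) and \(\tr(\muProp)=\Qual=\tyProp\). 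None of these cases requires anything beyond re-applying the analogous rule.

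The cases that genuinely use the shape of the encoding are the three arithmetic primitives. For \(Z\), we have \(\tr(Z)=\toQuant{p_0}\), which the atomic-proposition rule types at \(\tyProp=\Quant=\tr(N)\). For \(S\,s\), the induction hypothesis gives \(\tr(\tyEnv)\vdash\tr(s):\Quant\); then \(\Avg\tr(s)\), the disjunction \(p_1\lor p_1'\), their conjunction, the disjunction with \(p_0\), and the outer \(\Avg\) each type-check at \(\Quant\) by the \(\Avg\), \(\lor\), \(\land\) rules, so \(\tr(S\,s):\Quant=\tr(N)\). For \(s\le t\), the hypotheses give \(\tr(s),\tr(t):\Quant\); the inner formula \(\Avg((\Avg\tr(s)\land p_1)\lor(\tr(t)\land p_0'))\) is again of type \(\Quant\), and applying the \([\cdot]_J\) rule with \(J\) equal to ``\(\ge \frac{1}{2}\)'' turns it into a formula of type \(\tyProp=\Qual=\tr(\muProp)\), as required.

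I do not expect a genuine obstacle, since this is a routine type-preservation lemma. The only points that need care are bookkeeping ones: recording that \(\tr(N)\) and \(\tr(\muProp)\) both coincide with \(\mathit{Prop}\), and verifying that each composite formula \(\tr(Z)\), \(\tr(S\,s)\), and \(\tr(s\le t)\) really does assemble from the \(\Avg\)/\(\land\)/\(\lor\)/\([\cdot]_J\) rules at the intended type. These three derivations are exactly where a wrong bracketing or a misplaced \([\cdot]_J\) would break typability, so I would spell them out explicitly and treat the remaining cases as immediate consequences of the induction hypothesis and the matching PHFL rule.
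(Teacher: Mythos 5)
Your proof is correct and follows the same route as the paper, which simply observes that the claim follows by straightforward induction on the typing derivation; your elaboration of the \(Z\), \(S\,s\), and \(s\le t\) cases (the only ones where the encoding's shape matters) is accurate, including the observation that \(\tr(N)\) and \(\tr(\muProp)\) both denote the single base type \(\Prop\) and that the \([\cdot]_{\ge 1/2}\) rule closes the comparison case.
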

\begin{proof}
  This follows by straightforward induction on the derivation of \(\tyEnv \muVdash \muform : A\).
\end{proof}

%\subsubsection{Semantical Relation}
\label{subsubsec:encodeMuSemant}

We prove the correctness of the translation.
For each type \(\muTy\) of the higher-order fixpoint arithmetic, we define a relation
\(({\semantRel{\muTy}}) \subseteq \muSemant{\muTy} \times \semant{\tr(\muTy)}_M\) by induction on \(\muTy\) as follows:
\begin{align*}
  n \semantRel{N} (r_0, r_0', r_1, r_1')
  &\quad\defarrow\quad
  r_0 = \frac{1}{2^n} \mbox{ and } r_0' = 1 - \frac{1}{2^n}
  \\
  b \semantRel{\muProp} (r_0, r_0', r_1, r_1')
  &\quad\defarrow\quad
  b = r_0
  \\
  f \semantRel{A \to T} g
  &\quad\defarrow\quad
  \forall x \in \muSemant{A}. \forall y \in \semant{\tr(A)}_M.\; x \semantRel{A} y \implies f \, x \semantRel{T} g\, y.
\end{align*}
%% \semantRel{N} &=
%% \left\{\left(n, \left(\frac{1}{2 ^ n}, 1 - \frac{1}{2 ^ n}, 0, 0\right)\right) | n \in \muSemant{N}\right\} \\
%% \semantRel{\muProp} &= \{(p, (p, 0, 0, 0)) | p \in \muSemant{\muProp}\}
%% \end{align*}
%% \begin{enumerate}
%% \item \( n \semantRel{N} (r_0, r_0', r_1, r_1') \) if and only if and \(\semantRel{\muProp}\) are defined as follows.
%% \begin{align*}
%% \semantRel{N} &=
%% \left\{\left(n, \left(\frac{1}{2 ^ n}, 1 - \frac{1}{2 ^ n}, 0, 0\right)\right) | n \in \muSemant{N}\right\} \\
%% \semantRel{\muProp} &= \{(p, (p, 0, 0, 0)) | p \in \muSemant{\muProp}\}
%% \end{align*}
%% \item The relation \(\semantRel{A \to T}\) is defined as follows.
%% \[
%% f \semantRel{A \to T} f' \defarrow
%% \forall g \in \muSemant{A}, g' \in \semant{\tr(A)}. g \semantRel{A} g' \implies f \, g \semantRel{T} f' \, g'
%% \]
%% \end{enumerate}
%%
This relation can be naturally extended to the interpretations of type environments: given a type environment \(\tyEnv\) of the \(\mu\)-arithmetic,
the relation \( ({\semantRel{\tyEnv}}) \subseteq \muSemant{\tyEnv} \times \semant{\tr(\tyEnv)}_M\) is defined by
\[
\muInterpret \semantRel{\tyEnv} \interpret
\quad\defarrow\quad
\forall X \in \dom(\tyEnv). \; \muInterpret(X) \semantRel{\tyEnv(X)} \interpret(X).
\]

The following theorem states the correspondence between
the source and the target of the translation.
A proof is provided in Appendix~\ref{sec:proofs}.
\begin{theorem}\label{thm:semantRel}
Let \( \Gamma \muVdash \muform : \muTy \) be a formula of the higher-order fixpoint arithmetic.
Assume \(\muInterpret \in \muSemant{\tyEnv}\) and \(\interpret \in \semant{\tr(\tyEnv)}\).
If
\(\muInterpret \semantRel{\tyEnv} \interpret\),
then
\(
\muSemant{\tyEnv \muVdash \muform : \muTy}(\muInterpret) \semantRel{\muTy} \semant{\tr(\tyEnv) \vdash \tr(\muform) : \tr(\muTy)}_M(\interpret)
\).
\end{theorem}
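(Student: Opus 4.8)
The plan is to prove the statement by induction on the derivation of \(\tyEnv \muVdash \muform : \muTy\) (equivalently, on the structure of \(\muform\)), establishing the logical relation \(\semantRel{\muTy}\) for each syntactic construct. The cases for variables, abstraction, and application are routine: for a variable \(X\) the conclusion is exactly the hypothesis \(\muInterpret \semantRel{\tyEnv} \interpret\) instantiated at \(X\); for \(\lambda X.\muform\) and \(\muform_1\,\muform_2\) I would unfold the definition of \(\semantRel{A \to T}\), feed in an arbitrary related pair \(x \semantRel{A} y\), extend the environments to \(\muInterpret[X\mapsto x] \semantRel{\tyEnv,X\COL A} \interpret[X\mapsto y]\), and appeal to the induction hypothesis.

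The genuinely arithmetic cases---\(Z\), \(S\,s\), and \(s \leq t\)---are discharged by the concrete computations on the Markov chain \(M\) already carried out in the discussion preceding the translation. For \(Z\) I check \(0 \semantRel{N} (1,0,0,0)\) directly; for \(S\,s\) I invoke the induction hypothesis to obtain \(\semant{\tr(s)}_M = (1/2^n, 1-1/2^n, \_, \_)\) with \(n = \muSemant{s}(\muInterpret)\), and then the four-line evaluation of \(\bigcirc((\bigcirc\tr(s) \land (p_1 \lor p_1')) \lor p_0)\) shows its value is \((1/2^{n+1}, 1-1/2^{n+1}, \_, \_)\), i.e.\ \(n+1 \semantRel{N} \semant{\tr(S\,s)}_M\); for \(s \leq t\) I use the identity \(n \le m \Leftrightarrow 1/2^n + (1 - 1/2^m) \ge 1\) together with the computation of the \(s_0\)-component of \(\tr(s \le t)\) to conclude that \(\semant{\tr(s\le t)}_M(s_0) = 1\) iff \(\muSemant{s}(\muInterpret) \le \muSemant{t}(\muInterpret)\). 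The propositional connectives \(\land\) and \(\lor\) follow immediately from the induction hypothesis, since \(\semantRel{\muProp}\) identifies a boolean with the \(s_0\)-component and the lattice operations on \(\{0,1\}\) agree on both sides.

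The crux is the pair of fixpoint cases \(\mu X.\muform\) and \(\nu X.\muform\). Here I first need an auxiliary \emph{admissibility} lemma: for every type \(T\), the relation \(\semantRel{T}\) is closed under suprema of related increasing families and under infima of related decreasing families---precisely, if \(\{u_\alpha\}\) and \(\{v_\alpha\}\) are monotone (transfinite) chains in \(\muSemant{T}\) and \(\semant{\tr(T)}_M\) with \(u_\alpha \semantRel{T} v_\alpha\) for all \(\alpha\), then \(\bigJoin_\alpha u_\alpha \semantRel{T} \bigJoin_\alpha v_\alpha\), and dually for \(\bigMeet\). This is proved by induction on \(T\): at \(T = \muProp\) it reduces to the fact that the supremum (resp.\ infimum) of a chain of \(\{0,1\}\)-values in the \(s_0\)-component equals the boolean supremum (resp.\ infimum); at \(T = A \to T'\) it follows because suprema and infima in these function lattices are computed pointwise, so the claim at \(A \to T'\) reduces to the claim at \(T'\) applied argumentwise. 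Crucially, although the PHFL fixpoints are taken over lattices valued in the whole interval \([0,1]\), relatedness pins the \(s_0\)-component of every related element to \(\{0,1\}\), so no spurious intermediate values can arise along the iteration.

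With admissibility in hand, the fixpoint cases go by the standard transfinite-iteration argument. Writing \(F_\mu(v) = \muSemant{\tyEnv,X\COL T \muVdash \muform : T}(\muInterpret[X\mapsto v])\) and \(F(v) = \semant{\tr(\tyEnv),X\COL\tr(T) \vdash \tr(\muform) : \tr(T)}_M(\interpret[X\mapsto v])\), the induction hypothesis applied to the body \(\muform\) yields the preservation property \(u \semantRel{T} v \implies F_\mu(u) \semantRel{T} F(v)\). Since \(\LFP\) on both sides is the limit of the upward transfinite iteration from the least elements, and since bottoms are related (\(\bot_T \semantRel{T} \bot_{\tr(T)}\), again by induction on \(T\)), transfinite induction through the iteration---using preservation at successor stages and admissibility at limit stages---shows that every iterate of \(F_\mu\) is related to the corresponding iterate of \(F\), whence the two least fixpoints are related; the \(\nu\) case is dual, iterating downward from the related tops and using closure under infima. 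The main obstacle is exactly this fixpoint argument: one must verify admissibility in both directions and check that the relation is maintained along the entire (possibly transfinite) iteration even though the PHFL side lives in a continuous lattice---the observation that keeps this under control is that relatedness forces the decisive \(s_0\)-component to remain in \(\{0,1\}\) throughout.
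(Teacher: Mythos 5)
Your proposal is correct and follows essentially the same route as the paper's proof: induction on the structure of the formula, direct computation for the arithmetic primitives, and a transfinite-iteration argument for the fixpoint cases resting on an admissibility lemma (the paper's Lemma~\ref{lem:appx:logical-relation-limit}, stating that \(\semantRel{T}\) is preserved by least upper bounds, proved by induction on \(T\) with the function cases handled pointwise). The only cosmetic differences are that the paper states the limit lemma for arbitrary related families rather than chains and leaves the dual (infima/\(\nu\)) direction implicit, whereas you spell it out.
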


%%To prove the theorem above, we first prepare the following lemma.
%%\section{Proof of Theorem~\ref{thm:semantRel}}\label{section:proofAnalytical}

\begin{corollary}\label{cor:semantRel}
  The validity problem of the order-\(k\) fixpoint arithmetic 
%  (where \(k>0\))\tk{I wonder why this assumption is needed.}
is reducible to the order-\(k\)
PHFL model-checking problem.
\end{corollary}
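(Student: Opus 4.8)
The plan is to combine the type-preservation lemma, Theorem~\ref{thm:semantRel}, and an order-preservation argument to exhibit the map \(\muform \mapsto (\tr(\muform), M)\) as an effective, order-respecting many-one reduction. First I would observe that \(\tr\) is manifestly computable—it is defined by structural recursion with explicit right-hand sides—and that the Markov chain \(M\) of Figure~\ref{fig:encodeMu} is a single fixed finite object, independent of \(\muform\). Thus the map sending a closed formula \(\muform\) of type \(\muProp\) to the pair \((\tr(\muform), M)\) is computable, and by the type-preservation lemma \(\tr(\muform)\) is a closed \PHFL{} formula of type \(\tr(\muProp) = \Qual\), hence a legitimate instance of the \PHFL{} model-checking problem.

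For correctness I would instantiate Theorem~\ref{thm:semantRel} at the empty type environment, taking \(\muInterpret = \interpret = \emptyset\), which are vacuously related. This yields \(\muSemant{\muform} \semantRel{\muProp} \semant{\tr(\muform)}_M\). Unfolding the definition of \(\semantRel{\muProp}\), which relates a truth value \(b\) to a tuple exactly when \(b\) equals its \(s_0\)-component, gives \(\muSemant{\muform} = \semant{\tr(\muform)}_M(s_0)\). Since \(s_0 = \inState\), the formula \(\muform\) is valid—i.e.~\(\muSemant{\muform} = 1\)—if and only if \(\semant{\tr(\muform)}_M(\inState) = 1\), that is, if and only if \(M \models \tr(\muform)\). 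This establishes the reduction at the level of decision problems.

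It remains to check that the reduction respects the order, so that order-\(k\) arithmetic formulas are sent to order-\(k\) \PHFL{} formulas. I would first prove \(\order(\tr(A)) = \order(A)\) for every type \(A\) by a routine induction: the base cases \(\tr(N) = \Quant\) and \(\tr(\muProp) = \Qual\) are order \(0\), and \(\tr(A \to T) = \tr(A) \to \tr(T)\) gives \(\order(\tr(A \to T)) = \max(\order(\tr(A))+1, \order(\tr(T)))\), matching \(\order(A \to T)\) by the induction hypothesis. The remaining point, which is the one genuine subtlety, is that the translations of the arithmetic primitives do not inflate the order: \(\tr(Z)\), \(\tr(S\,s)\), and \(\tr(s \le t)\) are built solely from atomic propositions and the order-\(0\) operations \(\bigcirc\), \(\land\), \(\lor\), and \([\cdot]_J\) applied to the (order-\(0\), type-\(\Quant\)) translations of their subterms, so every type appearing newly in their derivations has order \(0\). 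Consequently every type order occurring in the derivation of \(\tr(\muform)\) already occurs, via \(\tr\), in the derivation of \(\muform\) (and conversely, together with \(0\), which is always present), whence \(\order(\tr(\muform)) = \order(\muform)\).

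Because the genuine semantic content is already carried by Theorem~\ref{thm:semantRel}, the corollary itself is largely bookkeeping: the only place demanding care is the order-preservation claim, and specifically the verification that the fixed encodings of \(Z\), \(S\), and \(\le\) remain at order \(0\) and therefore never raise the maximal order of the translated derivation above that of the source. With \(\order(\tr(\muform)) = \order(\muform) \le k\) and the equivalence \(\muSemant{\muform} = 1 \iff M \models \tr(\muform)\) in hand, the map \(\muform \mapsto (\tr(\muform), M)\) is the desired reduction from the order-\(k\) fixpoint arithmetic validity problem to the order-\(k\) \PHFL{} model-checking problem.
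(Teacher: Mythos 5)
Your proof is correct and follows essentially the same route as the paper: instantiate Theorem~\ref{thm:semantRel} at the empty environment, unfold \(\semantRel{\muProp}\) to get \(\muSemant{\muform} = \semant{\tr(\muform)}_M(s_0)\), and note that the map \(\muform \mapsto (\tr(\muform), M)\) is effective and order-preserving. The only difference is that you spell out the order-preservation and effectiveness checks that the paper dismisses as obvious, which is a reasonable elaboration rather than a different argument.
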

\begin{proof}
  Assume \( \emptyset \muVdash \muform : \muProp \).
  By Theorem~\ref{thm:semantRel}, \( \muSemant{\muform} \semantRel{\muProp} \semant{\tr(\muform)}_M \).
  Therefore, \( \muSemant{\muform} = 1 \) if and only if \( \semant{\tr(\muform)}_M(s_0) = 1 \), i.e.~\( M \models \tr(\muform) \).
  The mapping \( \muform \mapsto (\tr(\muform), M) \) is obviously effective, and preserves the order.
\end{proof}
Theorem~\ref{thm:analytical} is an immediate consequence of Theorem~\ref{thm:muArithHardness} and Corollary~\ref{cor:semantRel}.

\section{Decidable Subclass of Order-1 \PHFL{} Model Checking}\label{section:fragment}
As we have seen in Section~\ref{section:undecidability},
\PHFL{} model checking is undecidable, even for order 1.
In this section, we identify a decidable subclass of the
order-1 \PHFL{} model-checking problems
(i.e., a set of pairs \((\form,\M)\) such that whether \(\M{\models} \form\) is decidable).
We identify the subclass by using a type system: we define a type system \(\TA{\M}\)
for \(\PHFL{}\) formulas, parameterized by \(\M\),  so that if \(\form\) is a 
proposition well-typed in \(\TA{\M}\), then \(\M{\models}\form\) is decidable.

This section is structured as follows.
In Section~\ref{sec:type-based-restriction}, we introduce the type system \(\TA{\M}\),
and prove that the semantics of any order-1 well-typed formula is an affine function.
Section~\ref{sec:decidability} introduces a matrix representation of affine functions
and shows the decidability of \(\M{\models}\form\) by appealing to the decidability
of the first-order theory of reals~\cite{Tarski51}. Section~\ref{sec:expressivity} shows
that the restricted fragment is reasonably expressive, by giving an encoding of the
termination problem for recursive Markov chains into the restricted fragment of PHFL model checking.
% The rest of this section is structured as follows.
% Section~\ref{defineFragmentSection} defines the type system \(\TA{\M}\)
% and state the decidability result.
% %Section~\ref{fixpointEquationSection} reviews 
% %decidability results on polynomial equations, which will be used in our decidability proof.
% Sections~\ref{matRepSection} and \ref{modelCheckingSection} give a
% model-checking algorithm for the decidable class.
% To show that the decidable class is reasonably large, we show
% in Section~\ref{fragmentExpressivePowerSection} that the problem of computing
% termination probabilities of Recursive Markov chains can be encoded into a
% \PHFL{} model-checking problem in the decidable class.
%%%we show that  we introduce probabilistic higher-order recursion scheme (PHORS)
%%%and show that termination probability of an order-$1$ PHORS can be represented by a formula in the PHFL fragment.
%%%
% From Sections \ref{defineFragmentSection} to \ref{modelCheckingSection},
% we fix a Markov Chain $M = (S, P, \iAP, \inState)$ over which formulas are interpreted.
% We also assume \(S = \{s_1, s_2, \dots, s_n\}\).

\subsection{Type-based Restriction of Order-1 \PHFL{}}
\label{sec:type-based-restriction}
We first explain the idea of the restriction imposed by our type system.
By definition, the semantics of a (closed) order-\(1\) PHFL formula \(\phi\)
of type \(\Quant \to \Quant\) with respect to the Markov chain \(M\)
is a map \(f_\phi\) from the set of functions \(S \to [0, 1]\) to the same set,
where \(S\) is the set of states of \(M\).
Thus, if \(S = \{s_1, s_2, \dots, s_n\}\) is fixed, \(f_\phi\)
can be regarded as a function from \([0, 1]^n\) to \([0, 1]^n\).
Now, if the function \(f_\phi\) were affine,
i.e., if
there are functions \(f_1, f_2, \dots, f_n\) such that
\(f_\phi(r_1, r_2, \dots, r_n) = (f_1(r_1, r_2, \dots, r_n), \dots, f_n(r_1, r_2, \dots, r_n))\),
where \(f_i(r_1, r_2, \dots, r_n) = c_{i, 0} + c_{i, 1}r_1 + \dots + c_{i, n}r_n\)
for some real numbers \(c_{i, j}\),
then the function \(f_\phi\) would be representable by a finite number of reals \(c_{i, j}\).
The semantics of a fixpoint formula would then be
given as a solution of a fixpoint equation on the coefficients,
which is solvable by appealing to the decidability of first-order theories of reals~\cite{Tarski51}.

Based on the observation above, we introduce a type system to restrict the formulas
so that the semantics of every well-typed order-\(1\) formula is affine.
The conjunction \(\phi_1 \land \phi_2\) is one of the problematic logical connectives
that may make the semantics of an order-\(1\) formula non-affine:
recall that the \(\min\) operator was used to define the semantics of conjunction.
We require that for every subformula of the form \(\phi_1 \land \phi_2\) and for each state \(s \in S\),
one of the values \(\semant{\phi_1}(s)\) and \(\semant{\phi_2}(s)\) is the constant \(0\) or \(1\).
We can then remove the \(\min\) operator,
since we have \(\min(0, x) = 0\) and \(\min(1, x) = x\)
for every \(x \in [0, 1]\).

We parameterize the type system by the Markov chain \(M\),
since it often depends on \(M\)
whether the semantics of an order-\(1\) formula is affine. For example,
the semantics of \((p\land \phi_1)\lor (q\land \phi_2)\) is affine
if the semantics of \(\phi_1\) and \(\phi_2\) are affine \emph{and if}
\(p\) and \(q\) cannot be simultaneously true (i.e., if \(\iAPM{M}(p)\cap \iAPM{M}(q)=\emptyset\)). Without the parameterization, the resulting type system would
be too conservative.

The discussion above motivates us to refine the type \(\Quant\) of propositions
to \(\PropTU{T}{U}\) where \(T, U \subseteq S\) and \(T \cap U = \emptyset\).
Intuitively, the type \(\PropTU{T}{U}\) describes propositions \(\phi \in \Quant\)
such that \(\semant{\phi}(s) = 0\) for all \(s \in T\) and
\(\semant{\phi}(s) = 1\) for all \(s \in U\);
there is no guarantee on the value of \(\semant{\phi}(s)\)
for \(s\in S\setminus(T\cup U)\).
The syntax of \emph{refined types} is given by:
\begin{align*}
%%\tyQual &::= \decTy \bnf \Qual &
\decTy &::= \PropTU{T}{U} \bnf \PropTU{T}{U} \to \decTy
\end{align*}
where \(T\) and \(U\) range over the set of subsets of \(S\)
satisfying \(T \cap U = \emptyset\).
Note that each type \(\decTy\) %\neq \Qual\)
can be expressed as
\(\PropTU{T_1}{U_1} \to \PropTU{T_2}{U_2} \to \dots \to \PropTU{T_k}{U_k} \to \PropTU{T}{U}\)
where \(k \geq 0\).

We define the translation from the set of refined types %%in \(\decTypeSys_M\)
to the set of types in PHFL by
\begin{align*}
  %\tr(\Qual) &= \Qual &
  \tr(\PropTU{T}{U}) &= \Quant &
\tr(\decTy_1 \to \decTy_2) &= \tr(\decTy_1) \to \tr(\decTy_2) %%&&
\end{align*}
and the translation of type environment \(\decTyEnv\) by \((\tr(\decTyEnv))(x) = \tr(\decTyEnv(x))\).
The semantics of refined types is defined as follows.
As explained above, the values of function types are restricted to affine functions.
\begin{definition}\label{dfn:decidableSemant}
For each refined type \(\decTy\), %%\decTy \neq \Qual\) in the type system \(\decTypeSys_M\),
we define the subset \(\semant{\decTy} \subseteq \semant{\tr(\decTy)} \) as follows.
\[
\begin{array}{l}
  \semant{\PropTU{T}{U}} = \set{v \in D_{\Quant} \mid \forall s \in T. v(s) = 0, \forall s \in U. v(s) = 1}\\
  \semant{\PropTU{T_1}{U_1} \to \dots \PropTU{T_k}{U_k} \to \PropTU{T}{U}} =\\\qquad
  \{
  f \in \semant{\Quant^k \to \Quant} \mid\\\qquad\qquad
%%  D_{\tyProp^k\to\tyProp}\mid\\\qquad\qquad
%%  \semant{\Quant^k\to \Quant} \mid\\\qquad\qquad
f\mbox{ is affine on \( \semant{\PropTU{T_1}{U_1}} \times \dots \times \semant{\PropTU{T_k}{U_k}} \), and}
  \\\qquad\qquad
  \forall v_1\in \semant{\PropTU{T_1}{U_1}},\dots,v_k\in \semant{\PropTU{T_k}{U_k}}. f\,v_1\,\cdots\,v_k\in \semant{\PropTU{T}{U}}\}
% f \LEQ_{\PropTU{T}{U}} g \defarrow \forall s\in S.f(s)\le g(s)\\
% \bot_{\PropTU{T}{U}} = \lambda s\in S.\left\{\begin{array}{ll}
% 1 & \mbox{if $s\in U$}\\
% 0 & \mbox{otherwise}\\
% \end{array}\right.
% \qquad
% \top_{\PropTU{T}{U}} = \lambda s\in S.\left\{\begin{array}{ll}
% 0 & \mbox{if $s\in T$}\\
% 1 & \mbox{otherwise}\\
% \end{array}\right.\\
%  f \LEQ_{\PropTU{T_1}{U_1} \to \dots \to \PropTU{T_k}{U_k} \to \PropTU{T}{U}}g \defarrow\\\qquad
%  \forall
%  v_1\in D_{\PropTU{T_1}{U_1}},\dots,v_k\in D_{\PropTU{T_k}{U_k}}.
%  f\,v_1\,\cdots\,v_k\LEQ_{\PropTU{T}{U}}g\,v_1\,\cdots\,v_k\\
%  \bot_{\PropTU{T_1}{U_1} \to \dots \PropTU{T_k}{U_k} \to \PropTU{T}{U}}  =
%  \lambda v_1\in D_{\PropTU{T_1}{U_1}}.\cdots\lambda v_k\in D_{\PropTU{T_k}{U_k}}.\bot_{\PropTU{T}{U}}\\
%  \top_{\PropTU{T_1}{U_1} \to \dots \PropTU{T_k}{U_k} \to \PropTU{T}{U}}  =
%  \lambda v_1\in D_{\PropTU{T_1}{U_1}}.\cdots\lambda v_k\in D_{\PropTU{T_k}{U_k}}.\top_{\PropTU{T}{U}}\\
\end{array}
\]
%%%\begin{enumerate}
%%%\item For \(\decTy = \PropTU{T}{U}\),
%%%\(D_\decTy\) is the set \(\{v \in \semant{\Quant} \mid \forall s \in T. v(s) = 0, \forall s \in U. v(s) = 1\}\)
%%%and \(f_1 \sqleq_\decTy f_2\) if and only if
%%%\(\forall s \in S. f_1(s) \leq f_2(s)\).
%%%%\(f_1 \sqleq_{\tr(\decTy)} f_2\).
%%%%\begin{align*}
%%%%D_\decTy &= \{v \in \semant{\Quant} | \forall s \in T. v(s) = 0, \forall s \in U. v(s) = 1\} \\
%%%%v_1 \sqleq_\decTy v_2 &\defarrow v_1 \sqleq_\Quant v_2
%%%%\end{align*}
%%%\item For \(\decTy = \PropTU{T_1}{U_1} \to \PropTU{T_2}{U_2} \to \dots \PropTU{T_k}{U_k} \to \PropTU{T}{U}\) (\(k \geq 1\)),
%%%\(D_\decTy\) is the set of affine functions
%%%\(f: ([0, 1]^n)^k \to [0, 1]^n\) which belong to \(\semant{\tr(\decTy)}\)
%%%(with the identification between \([0, 1]^S\) and \([0, 1]^n\)),
%%%and \(f_1 \sqleq_\decTy f_2\) if and only if
%%%for every tuple
%%%\((v_1, v_2, \dots, v_k)\) in
%%%\(\semant{\PropTU{T_1}{U_1}} \times \dots \times \semant{\PropTU{T_k}{U_k}}\),
%%%the relation \(f_1 \, v_1 \, v_2 \, \dots \, v_k \sqleq f_2 \, v_1 \, v_2 \, \dots \, v_k\) holds.
%%%%and \(f_1 \sqleq_\decTy f_2\) if and only if \(f_1 \sqleq_{\tr(\decTy)} f_2\).
%%%\end{enumerate}
\end{definition}
\noindent
In the definition above, by ``\(f\) is affine on \(D_1 \times \dots \times D_k \)'', we mean that, for each state \( s_\ell \in S \),
there exist some coefficients \( c^\ell_0,c^\ell_{1,1},\ldots,c^\ell_{k,n} \) such that, for every \( (v_1,\dots,v_k) \in D_1 \times \dots \times D_k \),
\[f\,v_1\,\cdots\, v_k\,s_\ell =
c^{\ell}_{0}+\sum_{i\in\set{1,\ldots,k},j\in\set{1,\ldots,n}}c^{\ell}_{i,j} v_i(s_j).\]
% \[f((x_{1,1},\ldots,x_{1,n}),\ldots,(x_{k,1},\ldots,x_{k,n})) =
% c_{0}+\sum_{i\in\set{1,\ldots,k},j\in\set{1,\ldots,n}}c_{i,j}x_{i,j}\]

\begin{remark}
  \label{rem:closure-property}
  Note that \( \semant{\decTy} \) is not closed under various operations.
  For example,
  the greatest lower bound of affine functions \(\lambda (x,y).x\) and
  \(\lambda (x,y).y \in [0,1]^2\to [0,1]\) is \( \lambda (x,y). \min(x,y) \), which is not affine.
  This means that the conjunction does not preserve affinity, as mentioned above.
  A similar observation applies to fixpoints: for a monotone function \( h \) on \( \semant{\tr(\decTy)} \), even if \( h\,x \in \semant{\decTy} \) for every \( x \in \semant{\decTy} \), it is not necessarily the case that \( \LFP(h) \in \semant{\decTy} \).
  For example, let \(S=\set{s}\), \(\decTy=\PropTU{\emptyset}{\emptyset}\to \PropTU{\emptyset}{\set{s}}\),
  and \(h(f) = \lambda v.\lambda s.\max(f(v)(s), v(s)^2)\). For any \(f\in \semant{\decTy}\) and \(v\in\PropTU{\emptyset}{\emptyset}\),
  \(h(f)(v)(s) = \max(f(v)(s), v(s)^2)=\max(1,v(s)^2)=1\), hence \(h(f)\in\semant{\decTy}\).
  However, \(\LFP(h)=\lambda v.\lambda s.v(s)^2\not\in \semant{\decTy}\).
%% cf. Lemma~{lem:decidableAffineLFP}
\end{remark}

We restrict PHFL formulas by a type system parameterized by a Markov chain \(M\).
%%\begin{align*}
%%\psi &::= [\phi]_J &
%%\phi &::=
%%\toQuant{p} \bnf X \bnf \phi_1 \land \phi_2 \bnf \phi_1 \lor \phi_2
%%\bnf \bigcirc \phi \bnf
%%\mu X. \phi \bnf \lambda X. \phi \bnf \phi_1 \, \phi_2
%%\end{align*}
%%and further restrict them by using
We consider a type judgment of the form:
\(\decTyEnv; \lamTE \pM \form:\decTy\).
Here, \(\decTyEnv\) is a type environment of the form
\(X_1\COL\decTy_1,\ldots,X_k\COL\decTy_k\); it is for fixpoint variables, i.e.,
those bound by \(\mu\) or \(\nu\). The other type environment \(\lamTE\)
is of the form \(Y_1\COL\PropTU{T_1}{U_1},\ldots,Y_m\COL\PropTU{T_\ell}{U_\ell}\);
it is for variables bound by \(\lambda\). We require that the domains of
\(\decTyEnv\) and \(\lamTE\) are disjoint.
The intended meaning of the judgment
\(\decTyEnv; \lamTE \pM \form:\decTy\),
where \(\lamTE=Y_1\COL\PropTU{T_1}{U_1},\ldots,Y_\ell\COL\PropTU{T_\ell}{U_\ell}\)
and \(\decTy=\PropTU{T_{\ell+1}}{U_{\ell+1}}\to\cdots \to \PropTU{T_{\ell+m}}{U_{\ell+m}}\to\PropTU{T}{U}\)
is as follows. Assume: (i) each fixpoint variable \(X\) is bound
to an affine function as described by \(\decTyEnv(X)\),
(ii) each \(Y_i\;(1\leq i\leq \ell+m)\) is bound to a value \((x_{i,1},\ldots,x_{i,n})\)
described by \(\PropTU{T_i}{U_i}\). Then 
the value of \(\form\,Y_{\ell+1}\,\cdots\,Y_{\ell+m}\) is an affine function
on \(x_{i,j}\). Note that the value of \(\form\,Y_{\ell+1}\,\cdots\,Y_{\ell+m}\)
need not be affine on the values of fixpoint variables.
Below \(\lamTE\) is treated as a \emph{sequence} of type bindings, while
\(\decTyEnv\) is treated as a \emph{set}.

%%In the figure, the type environment \(\decTyEnv\) maps each variable to
%%a type in the set ranged over by \(\decTy\). %%\(\decTypeSys_M\).
%%The operator \(\bracketOp\) has been restricted to the top-level,
%%and the operators \(\lozenge, \square\) and \(\nu\) have been removed.
%%Note that \(\psi\) is a qualitative formula
%%and \(\phi\) is a quantitative formula.

\begin{figure}
  %\begin{center}
  %  \begin{multicols}{2}
  \begin{minipage}[t]{0.42\textwidth}
    \typicallabel{T-Weak}
\infrule[T-WeakTU]{\decTyEnv;\lamTE \vdash_M \phi : \PropTU{T}{U}\\
  T' \subseteq T\andalso U' \subseteq U}
        {\decTyEnv;\lamTE \vdash_M \phi : \PropTU{T'}{U'}}
        \vspace*{1ex}
        \infrule[T-Weak]{\decTyEnv;\lamTE \vdash_M \phi : \decTy
          %\andalso\lamTE\subseteq \lamTE'
        }
{\decTyEnv;X\COL\PropTU{T}{U},\lamTE \vdash_M \phi : \decTy}
        \vspace*{1ex}
%%        \infrule[T-Ex]{\decTyEnv;\lamTE,Y_1\COL\PropTU{T_1}{U_1},Y_2\COL\PropTU{T_2}{U_2},
%%          \lamTE' \vdash_M \phi : \decTy
%%        }
%%{\decTyEnv;\lamTE,Y_2\COL\PropTU{T_2}{U_2},Y_1\COL\PropTU{T_1}{U_1},
%%          \lamTE' \vdash_M \phi : \decTy}
%%        \vspace*{1ex}
    \infrule[T-AP]{}{\decTyEnv;\emptyset  \pM p : \PropTU{\cmpl{\iAP(p)}}{\iAP(p)}}
        \vspace*{1ex}
    \infrule[T-FVar]{}{\decTyEnv,X\COL\decTy;\emptyset\pM X:\decTy}
        \vspace*{1ex}
%    \infrule[T-Var]{}{\decTyEnv;\lamTE,X\COL\PropTU{T}{U} \pM X:\PropTU{T}{U}}
    \infrule[T-Var]{}{\decTyEnv;X\COL\PropTU{T}{U} \pM X:\PropTU{T}{U}}
        \vspace*{1ex}
        \infrule[T-Mu]{\decTyEnv, X\COL\decTy;\emptyset \vdash_M \phi : \decTy\\
        \mbox{$\decTy$ is of the form $\cdots\to \PropTU{T}{\emptyset}$}}
        {\decTyEnv;\emptyset \vdash_M \mu X.\phi : \decTy}
        \vspace*{1ex}
        \infrule[T-Nu]{\decTyEnv, X\COL\decTy;\emptyset \vdash_M \phi : \decTy\\
        \mbox{$\decTy$ is of the form $\cdots\to \PropTU{\emptyset}{U}$}
        }
        {\decTyEnv;\emptyset \vdash_M \nu X.\phi : \decTy}
        \vspace*{1ex}
\infrule[T-Abs]{\decTyEnv;\lamTE, X\COL\PropTU{T}{U} \vdash_M \phi : \decTy}
        {\decTyEnv;\lamTE \vdash_M \lambda X.\phi : \PropTU{T}{U}\to\decTy}
        \vspace*{1ex}
\infrule[T-App]{\decTyEnv;\lamTE \vdash_M \phi_1 : \PropTU{T}{U}\to\decTy\\
          \decTyEnv;\lamTE \vdash_M \phi_2 : \PropTU{T}{U}}
        {          \decTyEnv;\lamTE \vdash_M \phi_1\phi_2 : \decTy}
  \end{minipage}
  \begin{minipage}[t]{0.57\textwidth}
    \typicallabel{T-And}
        \infrule[T-Ex]{\decTyEnv;\lamTE_1,X\COL\PropTU{T}{U},Y\COL\PropTU{T'}{U'},\lamTE_2 \vdash_M \phi : \decTy
          %\andalso\lamTE\subseteq \lamTE'
        }
{\decTyEnv;\lamTE_1,Y\COL\PropTU{T'}{U'},X\COL\PropTU{T}{U},\lamTE_2 \vdash_M \phi : \decTy}
        \vspace*{1ex}
\infrule[T-Conj]{\decTyEnv; \lamTE \vdash_M \phi_1 : \PropTU{T_1}{U_1}\andalso
 \decTyEnv; \lamTE \vdash_M \phi_2 : \PropTU{T_2}{U_2}\\
 \lamTE\neq \emptyset\imply T_1 \cup U_1 \cup T_2 \cup U_2 = S_M}
{\decTyEnv;\lamTE \vdash_M \phi_1 \land \phi_2 : \PropTU{(T_1 \cup T_2)}{(U_1 \cap U_2)}}
        \vspace*{1ex}
\infrule[T-Disj]{\decTyEnv; \lamTE \vdash_M \phi_1 : \PropTU{T_1}{U_1}\andalso
 \decTyEnv; \lamTE \vdash_M \phi_2 : \PropTU{T_2}{U_2}\\
 \lamTE\neq \emptyset\imply T_1 \cup U_1 \cup T_2 \cup U_2 = S_M}
{\decTyEnv;\lamTE \vdash_M \phi_1 \lor \phi_2 : \PropTU{(T_1 \cap T_2)}{(U_1 \cup U_2)}}

        \vspace*{1ex}
\infrule[T-J]{\decTyEnv; \emptyset\pM \phi : \PropTU{T}{U}}
        {\decTyEnv;\emptyset \pM [\phi]_J : \PropTU{T}{U}}

        \vspace*{1ex}
        \infrule[T-Min]{\decTyEnv;\emptyset \pM \phi : \PropTU{T}{U}\\
        T' = \set{s\in S_M \mid \exists s'\in T.P_M(s,s')>0}\\
        U' = \set{s\in S_M \mid \forall s'\in S_M.P_M(s,s')>0\imply s'\in U}}
   {\decTyEnv; \emptyset \pM \All \phi : \PropTU{T'}{U'}}

        \vspace*{1ex}
\infrule[T-Max]{\decTyEnv; \emptyset \pM  \phi : \PropTU{T}{U}\\
        T' = \set{s\in S_M \mid \forall s'\in S_M.P_M(s,s')>0\imply s'\in T}\\
        U' = \set{s\in S_M \mid \exists s'\in U.P_M(s,s')>0}}
{\decTyEnv; \emptyset \pM \Some \phi : \PropTU{T'}{U'}}
        \vspace*{1ex}
\infrule[T-Avg]{\decTyEnv;\lamTE \pM  \phi : \PropTU{T}{U}\\
        T' = \set{s\in S_M \mid \forall s'\in S_M.P_M(s,s')>0\imply s'\in T}\\
        U' = \set{s\in S_M \mid \forall s'\in S_M.P_M(s,s')>0\imply s'\in U}}
{\decTyEnv; \lamTE\pM \Avg \phi : \PropTU{T'}{U'}}
\end{minipage}
%\end{multicols}
%\end{center}
\caption{Typing Rules for the Decidable Fragment}
\label{fig:decSubclass}
\end{figure}

The typing rules are given in Figure~\ref{fig:decSubclass}. We explain
key rules below.
The rule \rn{T-WeakTU} is for weakening the information represented by
\(T\) and \(U\); this rule is required, for example, for adjusting the types
between a function and its argument.
The rule \rn{T-Weak} is a usual weakening rule for adding
type bindings to \(\lamTE\).
The rule \rn{T-AP} is for atomic propositions; recall that
\(\iAP(p)\) denotes the set of states where \(p\) holds with probability \(1\).
The rule \rn{T-Mu} is for least fixpoint formulas. The second premise means
that \(\decTy\) is of the form \(\PropTU{T_1}{U_1} \to \dots \to \PropTU{T_k}{U_k} \to \PropTU{T}{U}\), where \(U=\emptyset\).\footnote{The condition \(U=\emptyset\)
  is sometimes too restrictive. For example, consider the formula
  \(\mu X.\true\). We can only assign \(\PropTU{\emptyset}{\emptyset}\),
  although \(\PropTU{\emptyset}{S_M}\) can be assigned to
  the equivalent formula \(\true\). To remedy this problem, it suffices to add
  the rule for unfolding:
  \infrule{\decTyEnv;\lamTE \vdash_M [\mu X.\phi/X]\phi : \decTy}{\decTyEnv;\lamTE \vdash_M \mu X.\phi : \decTy}
For the sake of simplicity, we do not consider this rule.}
Without this restriction, the value of \(\form\,\form_1\,\cdots\,\form_k\) at a
state in \(U\) may be wrongly estimated to be \(1\). For example,
consider the case where \(\phi=X\) and the simple type of \(X\) is \(\Quant\).
Then, the value of \(\mu X.X\)
should be the map \(f\) such that \(f(s)=0\) for every state. Without the restriction,
however, we could wrongly derive \(\mu X.X: \PropTU{\emptyset}{S_M}\).
Note also that \(\lamTE\) is empty in \rn{T-Mu}; this is just for technical
convenience, and is not a fundamental restriction. Indeed,
if \(\mu X.\phi\) contains a free variable \(Y\) of type \(\PropTU{T}{U}\), then
we can replace it with
\((\mu X'.\lambda Y.[X'\,Y/X]\phi)Y\), without changing the semantics.
Analogous conditions are imposed in the rule \rn{T-Nu} for greatest fixpoint formulas.
In the rule \rn{T-Conj} for conjunctions, the first two premises imply that the value
of \(\form_i\) at a state in \(T_i\) is \(0\); therefore, the value
of \(\form_1\land \form_2\) at a state in \(T_1\cup T_2\) is \(0\), which
explains \(T_1\cup T_2\) in the conclusion. Similarly for \(U_1\cap U_2\).
The third premise (on the second line) ensures that the value of
\(\phi_1\land \phi_2\) is an affine function on the value of the variables in
\(\Delta\). That is guaranteed if \(\Delta=\emptyset\). Otherwise, we require
\(T_1\cup U_1\cup T_2\cup U_2=S_M\); recall the earlier discussion on a
sufficient condition for the semantics of an order-1 formula to be affine.
The rule \rn{T-Disj} for disjunctions is analogous.
In the rules \rn{T-J}, \rn{T-Min}, and \rn{T-Max}, we require that
the type environment \(\lamTE\) for \(\lambda\)-bound variables be empty,
since the operators \([\cdot]_J, \All\), and \(\Some\) break the affinity.
The sets \(T'\) and \(U'\) in the conclusions of those rules are conservatively
approximated. In \rn{T-J}, recall that we have excluded out trivial
bounds such as \(>1\) and \(\ge 0\); thus, the value of \([\form]_J\) is \(0\)
(\(1\), resp.)
if the value of \(\form\) is \(0\) (\(1\), resp.).
In the rule \rn{T-Avg}, we need not require \(\lamTE\) to be empty,
as the average of affine functions is again affine.

%%%A key rule in Figure~\ref{fig:decSubclass} is for conjunctions. Note that 
%%%\(\semant{\phi_1 \land \phi_2}(s)=0\) if
%%%either \(\semant{\phi_1}(s) = 0\) or \(\semant{\phi_2}(s) = 0\) holds;
%%%hence \(s\in T_1\cup T_2\) implies \(\semant{\phi_1 \land \phi_2}(s)=0\).
%%%Note also that \(\semant{\phi_1 \land \phi_2}(s)=1\) if both \(\semant{\phi_1}(s) = 1\)
%%%\emph{and} \(\semant{\phi_2}(s) = 1\) hold. Thus, 
%%%\(s\in U_1\cap U_2\) implies \(\semant{\phi_1 \land \phi_2}(s)=1\).
%%%This is why \(\phi_1 \land \phi_2\) has type \(\PropTU{T_1 \cup T_2}{U_1 \cap U_2}\).
%%%The extra condition \(T_1\cup U_1\cup T_2\cup U_2=S\) requires that,
%%%for each state \(s\), either 
%%%\(\semant{\phi_1}(s)\) or  \(\semant{\phi_2}(s)\) is 
%%%the constant \(0\) or \(1\); recall the earlier discussion on a
%%%sufficient condition for the semantics of an order-1 formula to be affine.
%%%%%%for ,
%%%%%%\(1\), one of \(\phi_i(s)\)'s. This condition ensures that the value of
%%%%%%a fixpoint-free formula \(\phi\) at state \(s\)
%%%%%%can be expressed by an affine expression on the values of
%%%%%%the subformulas of \(\phi\), which will be exploited by the matrix representation of
%%%%%%the semantics of a formula given in Appendix~\ref{section:decidable}.
%%%The rule for disjunctions is analogous.
%%%In the rule for applications, the formula in the conclusion is in a fully applied form;
%%%this does not lose generality, as we consider the order-1 fragment of PHFL here.
%%%The last rule is for conservatively estimating the sets \(T\) and \(U\).

\renewcommand\toQuant[1]{#1}

\begin{example}
  \label{ex:affine-formula}
  Let \(M\) be an element of $\mathcal{M}$ in the proof of Theorem~\ref{expressiveThm},
  i.e., a Markov chain \((S, P, \iAP, \inState)\) that satisfies the following conditions.
  \begin{itemize}
\item $S = \{s_0, s_1, \dots, s_n\}$ for a positive integer $n$,
\item $P(s_i, s_{i + 1}) = 1$ ($0 \leq i \leq n - 1$), $P(s_n, s_n) = 1$ and $P(s_i, s_j) = 0$ otherwise.
\item There are three atomic propositions $a, b, c$ with $\iAP(a) \cup \iAP(b) = \{s_0, s_1, \dots, s_{n - 1}\}$,
$\iAP(a) \cap \iAP(b) = \emptyset$ and $\iAP(c) = \{s_n\}$.
\item The initial state is $\inState = s_0$.
\end{itemize}
  Let \(\phi_1\) be the formula:
  \[
%%(\mu F. (\lambda X. [(\toQuant{a} \land \lozenge \toQuant{X}) \lor
%%\lozenge F (\lozenge (\toQuant{b} \land \toQuant{X})))]_{\geq 1}) (b \land [\bigcirc \toQuant{c}]_{\geq 1}).
(\mu F. \lambda X. a \land \Avg (X \lor
  F (b \land \Avg X)))(b\land \lozenge c),
  \]
  which is a variation of the formula \(\phi\) considered in the proof, obtained by replacing two occurrences of
   \(\Some\) with
   \(\Avg\). Since \(M\) has only deterministic transitions, \(\phi_1\) has
   the same value as \(\phi\).
   Let \(\decTyEnv = F\COL \PropTU{\iAP(a)}{\emptyset}\to \PropTU{\iAP(b)\cup\iAP(c)}{\emptyset}\)
   and \(\lamTE=X\COL \PropTU{\iAP(a)}{\emptyset}\).
   Then,
   we have:
   \[
   \footnotesize
   \infers{\decTyEnv;\lamTE\pM
     a \land \Avg (X \lor
     F (b \land \Avg X)): \PropTU{\iAP(b)\cup\iAP(c)}{\emptyset}}
   {\decTyEnv;\lamTE\pM  a:\PropTU{\iAP(b)\cup\iAP(c)}{\iAP(a)}
     & \hspace*{-5em}
     \infers{
     \decTyEnv;\lamTE\pM
     \Avg (X \lor
     F (b \land \Avg X)): \PropTU{\emptyset}{\emptyset}}
            {
    \infers{\decTyEnv;\lamTE\pM
     X \lor
     F (b \land \Avg X): \PropTU{\emptyset}{\emptyset}}
           {\infers{\decTyEnv;\lamTE\pM X:\PropTU{\iAP(a)}{\emptyset}
               }{}
           &
           \infers{\decTyEnv;\lamTE\pM F (b \land \Avg X): \PropTU{\iAP(b)\cup\iAP(c)}{\emptyset}}{\cdots}
              }}
   }
   \]
   Here, \(\decTyEnv;\lamTE\pM F (b \land \Avg X): \PropTU{\iAP(b)\cup\iAP(c)}{\emptyset}\) is derived as follows.
   \[\footnotesize
   \infers{\decTyEnv;\lamTE\pM F (b \land \Avg X): \PropTU{\iAP(b)\cup\iAP(c)}{\emptyset}}
          {\decTyEnv;\lamTE\pM F:
            \PropTU{\iAP(a)}{\emptyset}\to \PropTU{\iAP(b)\cup\iAP(c)}{\emptyset}
            &\hspace*{-4em}
            \infers{ \decTyEnv;\lamTE\pM b \land \Avg X:\PropTU{\iAP(a)}{\emptyset}}{
            \infers{ \decTyEnv;\lamTE\pM b \land \Avg X:\PropTU{\iAP(a)\cup\iAP(c)}{\emptyset}}
                   {\decTyEnv;\lamTE\pM b:\PropTU{\iAP(a)\cup\iAP(c)}{\iAP(b)}
                     & \infers{\decTyEnv;\lamTE\pM \Avg X:\PropTU{\emptyset}{\emptyset}}
                     {\decTyEnv;\lamTE\pM X:\PropTU{\iAP(a)}{\emptyset}}
            }}
          }
          \]
          We can thus obtain
          \[\emptyset;\emptyset\pM
          (\mu F. \lambda X. a \land \Avg (X \lor
          F (b \land \Avg X)))(b\land \lozenge c): \PropTU{\iAP(b)\cup\iAP(c)}{\emptyset}.\]
          Note that, by the same argument as the proof of Theorem~\ref{expressiveThm},
          there exists no \(\mup\)-calculus formula equivalent to \(\phi_1\).
          \qed
\end{example}

The following lemma states that a formula that is well-typed in \(\decTypeSys_M\) is also well-typed
in the original PHFL type system.
\begin{lemma}\label{lem:decidableWellTyped}
  Let \(\phi\) be a PHFL formula such that \(\decTyEnv;\lamTE \vdash_M \phi: \decTy\).
Then we have \(
\tr(\decTyEnv,\lamTE) \vdash \phi : \tr(\tyQual)
\).
\end{lemma}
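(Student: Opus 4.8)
The plan is to proceed by induction on the derivation of $\decTyEnv;\lamTE \pM \phi : \decTy$. The key observation is that the translation $\tr$ simply forgets the quantitative annotations: it collapses every refined base type $\PropTU{T}{U}$ to $\Quant$ and commutes with $\to$. Consequently, every side condition appearing in the rules of Figure~\ref{fig:decSubclass} — the shape constraints on $\decTy$ in \rn{T-Mu} and \rn{T-Nu}, the covering condition $T_1 \cup U_1 \cup T_2 \cup U_2 = S_M$ in \rn{T-Conj} and \rn{T-Disj}, and the set-theoretic definitions of $T'$ and $U'$ in \rn{T-J}, \rn{T-Min}, \rn{T-Max}, and \rn{T-Avg} — is discarded under $\tr$, and what remains in each case is exactly one rule of the PHFL type system of Figure~\ref{fig:PHFLTypes}. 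The content of the lemma is thus that the refined system $\decTypeSys_M$ is a decoration of the plain PHFL system.

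Concretely, for each rule I would check that applying $\tr$ to its conclusion is derivable from the $\tr$-images of its premises (supplied by the induction hypothesis) via the corresponding PHFL rule. The logical, fixpoint, and modal rules line up one-to-one: \rn{T-AP} matches the axiom for $p$; \rn{T-FVar} and \rn{T-Var} match the variable axiom; \rn{T-Abs} and \rn{T-App} match abstraction and application (using $\tr(\PropTU{T}{U}\to\decTy) = \Quant\to\tr(\decTy)$); \rn{T-Mu} and \rn{T-Nu} match the $\mu$/$\nu$ rules; \rn{T-Conj} and \rn{T-Disj} match $\land$ and $\lor$; and \rn{T-J}, \rn{T-Min}, \rn{T-Max}, \rn{T-Avg} match the rules for $[\cdot]_J$, $\square$, $\lozenge$, and $\Avg$ respectively (recall $\All=\square$, $\Some=\lozenge$, and $\tyProp = \Quant$). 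The rule \rn{T-WeakTU} becomes the identity on the PHFL side, since both $\PropTU{T}{U}$ and $\PropTU{T'}{U'}$ translate to $\Quant$, so the induction hypothesis already yields the desired judgment verbatim.

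The only cases that call for a remark rather than a direct rule correspondence are the structural rules \rn{T-Weak} and \rn{T-Ex}, which have no explicit counterpart in Figure~\ref{fig:PHFLTypes} because there the PHFL type environment is a single map. For \rn{T-Ex}, exchanging two bindings in the sequence $\lamTE$ leaves the underlying map $\tr(\decTyEnv,\lamTE)$ unchanged, so the induction hypothesis applies directly. For \rn{T-Weak}, I would appeal to the admissibility of weakening in the PHFL system: adding an unused binding $X\COL\Quant$ preserves derivability, which is itself a trivial structural induction. Merging the two refined environments into the single PHFL environment $\tr(\decTyEnv,\lamTE)$ is well-defined precisely because $\decTypeSys_M$ requires $\dom(\decTyEnv)$ and $\dom(\lamTE)$ to be disjoint. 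I do not anticipate any genuine obstacle; the proof is routine bookkeeping, and the mild subtlety is only to make explicit that the treatment of $\decTyEnv$ as a set and $\lamTE$ as a sequence is absorbed harmlessly into the map-based PHFL environment.
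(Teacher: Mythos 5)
Your proof is correct and takes exactly the route the paper does: the paper's entire proof is the single sentence that the claim ``follows by a straightforward induction on the derivation of \(\decTyEnv;\lamTE \vdash_M \phi: \decTy\)'', and your case analysis merely spells out the bookkeeping (erasure of refinements, collapse of each refined rule to its PHFL counterpart, admissibility of weakening/exchange) that the paper leaves implicit.
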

\begin{proof}
  This follows by a straightforward induction on the derivation of
  \(\decTyEnv;\lamTE \vdash_M \phi: \decTy\).
\end{proof}

The following lemma states that the refined type system does not impose any
restriction on the order-0 fragment of PHFL.
Thus, together with the observation in Example~\ref{ex:affine-formula},
the lemma implies that our
decidable fragment is strictly more expressive than the \(\mup\)-calculus.
\begin{lemma}
  \label{lem:mup-subsumed}
  If \(\tyEnv \vdash \phi: \ty\), and \(\phi\) is an order-0 formula,
  then \(\decTyEnv;\emptyset\pM \phi: \PropTU{\emptyset}{\emptyset}\)
  where \(\decTyEnv\) is the type environment such that \(\dom(\decTyEnv)=\dom(\tyEnv)\)
  and
  \(\decTyEnv(X)=\PropTU{\emptyset}{\emptyset}\) for every \(X\in \dom(\decTyEnv)\).
\end{lemma}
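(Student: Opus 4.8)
The plan is to prove the claim by structural induction on $\phi$ (equivalently, on the derivation of $\tyEnv\vdash\phi:\ty$), with $\decTyEnv$ the all-trivial environment assigning $\PropTU{\emptyset}{\emptyset}$ to every variable in the current domain. First I would record two structural facts that make the whole argument routine. Since $\lambda$-abstraction and application both introduce a function type of order at least $1$, an order-$0$ formula contains neither; hence $\ty=\Quant$, every type in the derivation is $\Quant$, and in the target derivation in $\decTypeSys_M$ every refined type is a \emph{base} type $\PropTU{T}{U}$ and the $\lambda$-environment $\lamTE$ stays empty throughout. The target type $\PropTU{\emptyset}{\emptyset}$ is the information-free refinement of $\Quant$ (its denotation is all of $D_\Quant$), and the point of the proof is that this choice trivialises every side condition in Figure~\ref{fig:decSubclass}.

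Next I would run through the cases. For an atomic proposition $p$, rule \rn{T-AP} gives $\decTyEnv;\emptyset\pM p:\PropTU{\cmpl{\iAP(p)}}{\iAP(p)}$, and since $\emptyset\subseteq\cmpl{\iAP(p)}$ and $\emptyset\subseteq\iAP(p)$, one application of \rn{T-WeakTU} lowers this to $\PropTU{\emptyset}{\emptyset}$; this is the only place genuine $T,U$-weakening is needed. A variable $X$ necessarily sits in $\decTyEnv$ with $\decTyEnv(X)=\PropTU{\emptyset}{\emptyset}$, so \rn{T-FVar} applies directly. For $\phi_1\lor\phi_2$ and $\phi_1\land\phi_2$ the induction hypothesis types both subformulas as $\PropTU{\emptyset}{\emptyset}$; because $\lamTE=\emptyset$, the premise $\lamTE\neq\emptyset\imply(\cdots)$ of \rn{T-Disj} and \rn{T-Conj} holds vacuously, and the conclusions are $\PropTU{\emptyset\cap\emptyset}{\emptyset\cup\emptyset}$ and $\PropTU{\emptyset\cup\emptyset}{\emptyset\cap\emptyset}$, both $\PropTU{\emptyset}{\emptyset}$. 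Rule \rn{T-J} preserves $\PropTU{\emptyset}{\emptyset}$ verbatim. For the modal operators I would verify that $\PropTU{\emptyset}{\emptyset}$ is preserved by \rn{T-Min}, \rn{T-Max}, and \rn{T-Avg}: instantiating $T=U=\emptyset$ in their conclusions, each of $T'$ and $U'$ is either of the form $\{s\mid\exists s'\in\emptyset.\ \cdots\}=\emptyset$ or $\{s\mid\forall s'.\,P_M(s,s')>0\imply s'\in\emptyset\}$, and the latter is also empty because in a Markov chain every state has at least one successor. Finally, for $\mu X.\phi'$ I would take the bound refined type in \rn{T-Mu} to be $\decTy=\PropTU{\emptyset}{\emptyset}$, whose codomain component is $\emptyset$, satisfying the $U=\emptyset$ side condition; the induction hypothesis applied to $\phi'$ under the extended environment $\decTyEnv,X\COL\PropTU{\emptyset}{\emptyset}$ (again of the required all-trivial shape) supplies $\decTyEnv,X\COL\PropTU{\emptyset}{\emptyset};\emptyset\pM\phi':\PropTU{\emptyset}{\emptyset}$, and \rn{T-Mu} concludes. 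The case $\nu X.\phi'$ is symmetric, using that \rn{T-Nu} requires $T=\emptyset$.

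There is no substantial obstacle here; the only content is the bookkeeping of side conditions. The two points most worth double-checking are (i) that the modal rules really return $\PropTU{\emptyset}{\emptyset}$ rather than a larger type, which is exactly where the totality of the Markov transition relation (the probabilities summing to $1$) is used, and (ii) that the trivial type meets the $U=\emptyset$ constraint of \rn{T-Mu} and the $T=\emptyset$ constraint of \rn{T-Nu}, so the fixpoint rules apply without invoking the unfolding rule. Combined with Example~\ref{ex:affine-formula}, which exhibits a typable order-$1$ formula not expressible in the $\mup$-calculus, this lemma shows that the decidable fragment strictly subsumes the $\mup$-calculus.
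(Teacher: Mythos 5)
Your proposal is correct and matches the paper's proof, which is exactly the same straightforward induction on the typing derivation, with the key observation (also made explicitly in the paper) that $\lamTE$ remains empty throughout so the side conditions of \rn{T-Conj} and \rn{T-Disj} are vacuous. Your additional checks — that \rn{T-WeakTU} handles atomic propositions, that the $U'$ sets in the modal rules are empty because every state has a successor, and that $\PropTU{\emptyset}{\emptyset}$ satisfies the side conditions of \rn{T-Mu} and \rn{T-Nu} — are just the routine details the paper leaves implicit.
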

\begin{proof}
  This follows by a straightforward induction on the derivation of
  \(\tyEnv \vdash \phi: \ty\). Note that since \(\lamTE\) is always empty,
  the condition \(T_1\cup U_1\cup T_2\cup U_2\) in \rn{T-Conj} and \rn{T-Disj}
  is irrelevant.
\end{proof}

%%We study semantic properties of the type system.
%%The first property is soundness, i.e.~the interpretation of a typable formula is affine in a certain sense.

In the rest of this subsection, we prove the following properties.
\begin{enumerate}
\item The type system is sound in the sense that
the semantics of any formula \(\form\) of type \(\decTy\) indeed belongs to \(\semant{\decTy}\);
see Theorem~\ref{thm:decidableTypeSoundness} for the precise statement.
\item The calculation of the semantics of a well-typed formula (especially, the least/greatest fixpoint
  computation) can be
  performed up to the equivalence relation \(\sim_{\decTy}\), where
  \(f\sim_{\PropTU{T_1}{U_1} \to \dots \to \PropTU{T_m}{U_m} \to \PropTU{T}{U}}g \) just if
  \(f\) and \(g\) are equivalent on the intended domain, i.e., if
  \(f\,v_1\,\cdots\,v_m = g\,v_1\,\cdots\,v_m\) for any \(v_1\in\semant{\PropTU{T_1}{U_1}},
  \ldots,v_m\in\semant{\PropTU{T_m}{U_m}}\); see Lemmas~\ref{lem:decidableModuloSim} and
  \ref{cor:decidableFixpointRepresentation}.
\end{enumerate}

The reason why the type system ensures affinity has been intuitively explained already,
except for the fixpoints.
Here we show (in Lemma~\ref{lem:decidableAffineLFP})
that the fixpoint of a typable fixpoint operator is indeed affine.
The key observation is that \( \semant{\decTy} \subseteq \semant{\tr(\decTy)}\) is closed under the limit of chains, as stated in the following lemma.
\begin{lemma}\label{lem:decidableChainComplete}
  Let \( \decTy \) be a refined type and \( \gamma \) be an ordinal number.
  For every increasing chain \( (f_\alpha)_{\alpha < \gamma} \) of elements in \( \semant{\decTy} \), the limit \( \bigsqcup_{\alpha < \gamma} f_\alpha \) in \( \semant{\tr(\decTy)} \) belongs to \( \semant{\decTy} \).
  Similarly, for every decreasing chain \( (f_\alpha)_{\alpha < \gamma} \), the limit \( \bigsqcap_{\alpha < \gamma} x_\alpha \) is in \( \semant{\decTy} \).
\end{lemma}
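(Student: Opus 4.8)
The plan is to prove the statement by induction on the structure of the refined type $\decTy$, reducing the whole claim to a single analytic fact about chains of affine functions. The first thing to record is that every supremum and infimum in $\semant{\tr(\decTy)}$ is computed pointwise: on the base type $D_\Quant = S \to [0,1]$ the order is pointwise, and on arrow types it is pointwise in the argument. Hence, writing $f := \bigsqcup_{\alpha<\gamma} f_\alpha$, for all arguments $v_1,\dots,v_k$ we have $f\,v_1\cdots v_k = \bigsqcup_{\alpha<\gamma}(f_\alpha\,v_1\cdots v_k)$ in $D_\Quant$, and dually for infima. This interchange of application with the limit is what makes the induction go through.

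For the base case $\decTy = \PropTU{T}{U}$ the claim is immediate: if $f_\alpha(s)=0$ for all $\alpha$ and all $s\in T$, then the pointwise supremum is again $0$ at $s$, and if $f_\alpha(s)=1$ for all $\alpha$ and $s\in U$, then the supremum is $1$; the dual holds for infima, since an infimum of constantly-$0$ (resp.\ constantly-$1$) values is $0$ (resp.\ $1$). Thus $\semant{\PropTU{T}{U}}$ is closed under limits of monotone chains. For the inductive step I treat $\decTy$ in its flattened form $\PropTU{T_1}{U_1}\to\dots\to\PropTU{T_k}{U_k}\to\PropTU{T}{U}$ and verify the two conditions of Definition~\ref{dfn:decidableSemant} for $f$. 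The codomain condition follows from the base case: fixing $v_i\in\semant{\PropTU{T_i}{U_i}}$, the family $(f_\alpha\,v_1\cdots v_k)_{\alpha<\gamma}$ is a monotone chain in $\semant{\PropTU{T}{U}}$, so its limit $f\,v_1\cdots v_k$ lies in $\semant{\PropTU{T}{U}}$.

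The remaining and main obstacle is affinity: I must show that the pointwise supremum (resp.\ infimum) of a monotone chain of affine functions on the convex domain $D := \prod_i\semant{\PropTU{T_i}{U_i}}$ is again affine (note $D$ is convex, being a product of intervals that are each $\{0\}$, $\{1\}$, or $[0,1]$). This is where the \emph{chain} hypothesis is essential, and I would isolate it as a lemma: if $(g_\alpha)_{\alpha<\gamma}$ is an increasing chain of functions $D\to[0,1]$ each preserving convex combinations, then so does $g := \sup_\alpha g_\alpha$. Indeed, for $v,v'\in D$ and $t\in[0,1]$ we have $tv+(1-t)v'\in D$, and affinity of each $g_\alpha$ gives $g(tv+(1-t)v') = \sup_\alpha\bigl(t\,g_\alpha(v)+(1-t)\,g_\alpha(v')\bigr)$. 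The crucial point is that a supremum of a totally ordered family distributes over the sum of two increasing families and over multiplication by the nonnegative scalars $t, 1-t$: given $\varepsilon>0$, pick indices witnessing $g_\alpha(v)$ and $g_\alpha(v')$ near their suprema and use total ordering to choose one index dominating both, obtaining $g(tv+(1-t)v') = t\,g(v)+(1-t)\,g(v')$. Applying this one state $s_\ell$ at a time to the affine maps $(v_1,\dots,v_k)\mapsto f_\alpha\,v_1\cdots v_k\,s_\ell$ shows that $f$ preserves convex combinations on $D$; since a convex-combination-preserving map on a convex subset of $\mathbb{R}^N$ is the restriction of a genuine affine functional, $f$ admits the required representation $c^\ell_0 + \sum_{i,j} c^\ell_{i,j}\,v_i(s_j)$.

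The decreasing-chain case is identical with infima in place of suprema, again using total ordering to distribute $\bigsqcap$ over sums. The step demanding care is precisely this distributivity: it fails for arbitrary families of affine functions, since a pointwise supremum of affine functions is only convex in general, and it is the totality of the chain order that rescues affinity. Everything else — the pointwise computation of limits, the base-case closure conditions, and the passage from preservation of convex combinations to an explicit affine representation — is routine and can be checked directly.
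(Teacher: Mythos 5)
Your proof is correct and follows essentially the same route as the paper's: both reduce affinity to preservation of convex combinations on the convex domain $\prod_i\semant{\PropTU{T_i}{U_i}}$ and then use the fact that the pointwise limit of an increasing (or decreasing) chain commutes with convex combinations. Your version is somewhat more explicit — you spell out the base case, the codomain condition, and the $\varepsilon$-argument for why totality of the chain makes the supremum distribute over sums — where the paper simply invokes ``$\lim_{\alpha<\gamma}$ commutes with linear operations,'' but the underlying argument is the same.
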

\begin{proof}
  We prove the former.
  Assume \( \kappa = \PropTU{T_1}{U_1} \to \dots \to \PropTU{T_k}{U_k} \to \PropTU{T}{U} \).

  We first give an alternative characterization of affinity.
  For each \( i \le k \), given \( v_i \in \semant{\PropTU{T_i}{U_i}} \) and \( r \in [0,1] \), we define \( r \cdot v_i \) by \( (r \cdot v_i)(s) = r (v_i(s)) \).
  Note that \( r \cdot v_i \) may not be a member of \( \semant{\PropTU{T_i}{U_i}} \), but \( r \cdot v_i + (1-r) \cdot v_i' \in \semant{\PropTU{T_i}{U_i}} \) for every \( v_i,v_i' \in \semant{\PropTU{T_i}{U_i}} \) and \( r \in [0,1] \) (here the sum is the point-wise sum on reals).
  % \begin{equation*}
  %   (r \cdot v_i)(s) =
  %   \begin{cases}
  %     1 & \mbox{if \( s \in U \)} \\
  %     r (v_i(s)) & \mbox{if \( s \notin U \).}
  %   \end{cases}
  % \end{equation*}
  % Then \( r \cdot v_i \in \semant{\PropTU{T_i}{U_i}} \).
  % We define \( \bot_i \in \semant{\PropTU{T_i}{U_i}} \) by \( \bot_i(s) = 1 \) if \( s \in U_i \) and \( \bot_i(s) = 0 \) otherwise.
  Then \( f \in \semant{\Prop^k \to \Prop} \) is affine on \( \semant{\PropTU{T_1}{U_1}} \times \dots \times \semant{\PropTU{T_k}{U_k}} \) if and only if, for every \( (v_1, \dots, v_k), (v'_1,\dots,v'_k) \in \semant{\PropTU{T_1}{U_1}} \times \dots \times \semant{\PropTU{T_k}{U_k}} \) and \( r \in [0,1] \),
  \begin{equation*}
    f (r \cdot v_1 + (1-r) \cdot v_1') \dots (r \cdot v_k + (1-r) \cdot v_k')
    =
    r (f\,v_1\,\dots\,v_k) + (1-r) (f\,v_1'\,\dots\,v_k').
    % =
    % f\,(r \cdot v_1) \dots (r \cdot v_k) - f\,\bot_1\,\dots\,\bot_k
  \end{equation*}

  Let \( (f_\alpha)_{\alpha < \gamma} \) be an increasing chain and \( f = \bigsqcup_{\alpha<\gamma} f_\alpha \).
  Then \( f \) can be characterized in terms of the limits in real numbers as \( f\,y_1\,\dots\,y_k = \lim_{\alpha < \gamma} (f_\alpha\,y_1\,\dots\,y_k) \) for every \( y_1,\dots,y_k \in \semant{\Prop} \).
  Since \( \lim_{\alpha < \gamma} \) commutes with linear operations,
  for every \( (v_1, \dots, v_k)\),
   \((v'_1,\dots,v'_k) \in \semant{\PropTU{T_1}{U_1}} \times \dots \times \semant{\PropTU{T_k}{U_k}} \) and \( r \in [0,1] \), we have:
  \begin{align*}
    & f\,(r \cdot v_1 + (1-r) \cdot v'_1) \dots (r \cdot v_k + (1-r) \cdot v'_k)
    \\
    &=
    \lim_{\alpha<\gamma}(f_\alpha\,(r \cdot v_1 + (1-r) \cdot v'_1) \dots (r \cdot v_k + (1-r) \cdot v'_k)
    \\
    &=
    \lim_{\alpha<\gamma} \big(r(f_\alpha\,v_1\dots v_k) + (1-r) (f_\alpha\,v'_1\,\dots\,v'_k)\big)
    \\
    &=
    r \big(\lim_{\alpha<\gamma} (f_\alpha\,v_1\,\dots v_k)\big) +
    (1-r) \big( \lim_{\alpha<\gamma} (f_\alpha\,v'_1\,\dots\,v'_k) \big)
    \\
    &=
    r (f\,v_1\,\dots\,v_k) + (1-r) (f\,v_1'\,\dots\,v_k').
  \end{align*}

%  The latter can be proved similarly.
  The latter is the dual of the former, and can be proved in the same manner,
  by just replacing \(\bigsqcup\) with \(\bigsqcap\).
\end{proof}

\begin{lemma}\label{lem:decidableAffineLFP}
  Let \( \decTy = \PropTU{T_1}{U_1} \to \dots \to \PropTU{T_k}{U_k} \to \PropTU{T}{U} \) be a refined type and \( h \) be a monotone function on \( \tr(\decTy) \) such that \( x \in \semant{\decTy} \) implies \( h\,x \in \semant{\decTy} \).
  \begin{itemize}
    \item If \( U = \emptyset \), then \( \LFP(h) \in \semant{\decTy} \).
    \item If \( T = \emptyset \), then \( \GFP(h) \in \semant{\decTy} \).
  \end{itemize}
  Here \( \LFP \) and \( \GFP \) are taken in \( \semant{\tr(\decTy)} \).
\end{lemma}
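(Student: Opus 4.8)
The plan is to construct the fixpoints by the standard transfinite (Kleene--Knaster--Tarski) iteration inside the complete lattice \( \semant{\tr(\decTy)} \), and then verify that every iterate stays inside the subset \( \semant{\decTy} \), so that the limit---which is the actual least (resp.\ greatest) fixpoint---belongs to \( \semant{\decTy} \) as well. Concretely, for the \( \LFP \) case I would set \( h^0 = \bot_{\tr(\decTy)} \), \( h^{\alpha+1} = h(h^\alpha) \), and \( h^\lambda = \bigsqcup_{\alpha < \lambda} h^\alpha \) at limit ordinals, recalling that \( \LFP(h) = h^\gamma \) for the least \( \gamma \) at which the increasing chain stabilizes; dually for \( \GFP \), starting from \( \top_{\tr(\decTy)} \) and iterating downwards with \( \bigsqcap \).

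The crux is a transfinite induction showing \( h^\alpha \in \semant{\decTy} \) for every \( \alpha \). The base case is exactly where the side conditions enter: the bottom element \( \bot_{\tr(\decTy)} \) maps every tuple of arguments to the constant \( 0 \) function, so it lies in \( \semant{\decTy} \) precisely when the set \( U \) (on which values are forced to equal \( 1 \)) is empty, i.e.\ when \( U = \emptyset \); it is trivially affine. Symmetrically, \( \top_{\tr(\decTy)} \) returns the constant \( 1 \) function and hence lies in \( \semant{\decTy} \) precisely when \( T = \emptyset \). This is the only place the hypotheses \( U = \emptyset \) / \( T = \emptyset \) are used, and it is precisely why the rules \rn{T-Mu} and \rn{T-Nu} impose them. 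The successor step is immediate from the hypothesis that \( h \) maps \( \semant{\decTy} \) into itself. The limit step is handled by Lemma~\ref{lem:decidableChainComplete}: for \( \LFP \) the chain \( (h^\alpha)_{\alpha<\lambda} \) is increasing (since \( \bot \LEQ h(\bot) \) and \( h \) is monotone), so its join remains in \( \semant{\decTy} \); for \( \GFP \) the chain is decreasing and the dual half of the same lemma applies.

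Once the induction is complete, \( \LFP(h) \) (resp.\ \( \GFP(h) \)) is itself one of the \( h^\alpha \)---equivalently, the limit of the whole chain---and therefore lies in \( \semant{\decTy} \), which is the claim. I expect no serious technical obstacle, since the affinity-preserving content is entirely packaged in Lemma~\ref{lem:decidableChainComplete} and the preservation of \( \semant{\decTy} \) by \( h \) is assumed. The one point that needs care, and is the conceptual heart of the statement, is the base-case observation: the iteration must be started from \( \bot \) (resp.\ \( \top \)) for it to converge to the \emph{least} (resp.\ \emph{greatest}) fixpoint, and this starting point lies in \( \semant{\decTy} \) only under the stated emptiness condition---starting from an arbitrary element of \( \semant{\decTy} \) would not in general yield the extremal fixpoint.
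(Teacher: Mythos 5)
Your proposal is correct and follows essentially the same route as the paper's proof: transfinite Kleene iteration from $\bot_{\tr(\decTy)}$ (resp.\ $\top$), with the base case of the transfinite induction justified by $U=\emptyset$ (resp.\ $T=\emptyset$), the successor step by the assumption on $h$, and the limit step by Lemma~\ref{lem:decidableChainComplete}. Your added remark pinpointing the base case as the sole place where the emptiness hypotheses are used matches the paper's own commentary following the lemma.
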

\begin{proof}
  We prove the former; the latter can be proved similarly.
  We define \( h^{\alpha}(\bot) \) for ordinals \( \alpha \) as follows:
  \begin{align*}
    h^{\alpha}(\bot) =
    \begin{cases}
      \bot_{\tr(\decTy)} & \mbox{if \( \alpha = 0 \)} \\
      h(h^{\beta}(\bot)) & \mbox{if \( \alpha = \beta + 1 \)} \\
      \bigsqcup_{\beta < \alpha} h^{\beta}(\bot) & \mbox{if \( \alpha \) is a limit ordinal}.
    \end{cases}
  \end{align*}
  Then \( \LFP(h) = h^{\alpha}(\bot) \) for some sufficiently large \( \alpha \).

  It suffices to show that \( h^{\alpha}(\bot) \in \semant{\decTy} \) for every \( \alpha  \).
  We prove this claim by transfinite induction.
  For \( \alpha = 0 \), \( \bot_{\tr(\decTy)} \in \semant{\decTy} \) since \( U = \emptyset \).
  For \( \alpha = \beta + 1 \), we have \( h(h^{\beta}(\bot)) \in \semant{\decTy} \) since \( h^{\beta}(\bot) \in \semant{\decTy} \).
  If \( \alpha \) is a limit cardinal, we appeal to Lemma~\ref{lem:decidableChainComplete}; note that \( (f^{\beta}(\bot))_{\beta < \alpha} \) is an increasing chain.
\end{proof}
Note that in the above lemma,
\(\LFP(h)\in\semant{\decTy}\) would not hold if we drop
the condition \(U=\emptyset\); recall Remark~\ref{rem:closure-property}.
That justifies the corresponding conditions in rule \rn{T-Mu} and \rn{T-Nu}.

% \begin{remark}
%   One may wonder whether the condition on \( h \) can be weakened to \( h\,x\in \semant{\decTy} \) for every \( x \in \semant{\decTy} \).
%   Unfortunately the weaker version has a counterexample.
%   For example, let \( \decTy \) be \( \PropTU{\emptyset}{S} \) and \( h \) the identity function on \( \semant{\Prop} \).
%   Obviously \( h\,x \in \semant{\decTy} \) if \( x \in \semant{\decTy} \).
%   However \( \LFP(h) = \bot_\Prop \notin \semant{\decTy} = \{ \top_{\Prop} \} \). 
%   %
%   \qed
% \end{remark}

The following theorem is soundness of the type system.
Let \( \semant{\decTyEnv} \) be the subset of \( \semant{\tr(\decTyEnv)} \) consisting of interpretations \( \interpret \) such that \( \interpret(X) \in \semant{\decTyEnv(X)} \) for every \( X \in \dom(\decTyEnv) \).
For \( \lamTE = (Y_1 : \PropTU{T_1}{U_1}, \dots, Y_m : \PropTU{T_m}{U_m}) \), we write \( \lambda \lamTE.\varphi \) for \( \lambda Y_1. \dots \lambda Y_m. \varphi \) and \( \lamTE \to \decTy \) for \( \PropTU{T_1}{U_1} \to \dots \to \PropTU{T_m}{U_m} \to \decTy \).
\begin{theorem}\label{thm:decidableTypeSoundness}
  Let \(\phi\) be a PHFL formula such that \(\decTyEnv;\lamTE \pM \phi: \decTy\).
  Then, for every \( \interpret \in \semant{\decTyEnv} \), we have \(\semant{\tr(\decTyEnv) \vdash \lambda \lamTE.\phi: \tr(\lamTE \to \decTy)}(\interpret) \in \semant{\lamTE \to \decTy}\).
\end{theorem}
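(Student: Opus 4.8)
The plan is to proceed by induction on the derivation of $\decTyEnv;\lamTE \pM \phi:\decTy$, showing in each case that $\semant{\lambda\lamTE.\phi}(\interpret)$ meets the two requirements defining membership in $\semant{\lamTE\to\decTy}$: the boundary conditions (value $0$ on $T$-states and $1$ on $U$-states at the base type) and genuine joint affinity in all argument slots, namely those coming from $\lamTE$ together with those coming from the function type $\decTy$. Bundling the $\lambda$-bound variables as the outermost arguments via the $\lambda\lamTE$ wrapping is what lets the induction hypothesis carry exactly the right information, and it makes \rn{T-Abs} essentially a tautology: its premise already asserts the claim for $\phi$ under the extended environment $\lamTE,X\COL\PropTU{T}{U}$, and $\semant{\lambda\lamTE.\lambda X.\phi}$ is literally $\semant{\lambda(\lamTE,X).\phi}$.

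The structural and base rules are routine. For \rn{T-AP} the semantics of $p$ is the indicator of $\iAP(p)$, which is $0$ on $\cmpl{\iAP(p)}$ and $1$ on $\iAP(p)$; variables are handled by $\interpret\in\semant{\decTyEnv}$ (for \rn{T-FVar}) and by projection (for \rn{T-Var}), while \rn{T-Weak}, \rn{T-WeakTU}, \rn{T-Ex} only shrink $T,U$, add a variable on which the function is constant, or permute arguments. For \rn{T-J} I would check directly that, since we excluded the trivial bounds ``$\ge 0$'' and ``$>1$'', we always have $0\notin J$ and $1\in J$, so $[\phi]_J$ inherits the boundary conditions of $\phi$; affinity is vacuous because $\lamTE=\emptyset$ and the result is of base type. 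The modal rules \rn{T-Min} and \rn{T-Max} follow by tracing how $\min$/$\max$ over successors propagates a constant $0$ or $1$ (again with $\lamTE=\emptyset$), and \rn{T-Avg} preserves affinity because a weighted average of affine functions is affine, while the sets $T',U'$ force all relevant successors to be constant.

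The two genuinely interesting cases are the logical connectives and application. For \rn{T-Conj} (and dually \rn{T-Disj}) I would argue state by state: at a state in $T_1\cup U_1\cup T_2\cup U_2$ at least one conjunct is the constant $0$ or $1$, so using $\min(0,x)=0$ and $\min(1,x)=x$ the value is either constant or equal to the other (affine, by hypothesis) conjunct; the third premise guarantees that whenever $\lamTE\neq\emptyset$ this set is all of $S_M$, so affinity holds at every state, and when $\lamTE=\emptyset$ the result is a base-type constant and only the boundary sets $T_1\cup T_2$ and $U_1\cap U_2$ must be verified, which is immediate. For \rn{T-App}, writing $\vec y$ for the $\lamTE$-arguments and $\vec w$ for the remaining arguments of $\decTy$, the induction hypothesis gives that $\semant{\phi_1}$ is jointly affine in $(\vec y, z, \vec w)$ (with $z$ the consumed $\PropTU{T}{U}$-argument) and maps $\semant{\PropTU{T}{U}}$-inputs into $\semant{\decTy}$, while $\semant{\phi_2}$ is affine in $\vec y$ and lands in $\semant{\PropTU{T}{U}}$; substituting the affine-in-$\vec y$ value $z=\semant{\phi_2}(\vec y)$ into the affine-in-$(\vec y,z,\vec w)$ function yields a function still affine in $(\vec y,\vec w)$, and the target boundary conditions hold because $\semant{\phi_2}(\vec y)$ is a legitimate $\PropTU{T}{U}$-input.

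Finally, for \rn{T-Mu} and \rn{T-Nu} I would set $h = \lambda v.\,\semant{\phi}(\interpret[X\mapsto v])$; since $\interpret[X\mapsto v]\in\semant{\decTyEnv,X\COL\decTy}$ whenever $v\in\semant{\decTy}$, the induction hypothesis gives $h(v)\in\semant{\decTy}$ for all such $v$, and $h$ is monotone by the underlying PHFL semantics, so Lemma~\ref{lem:decidableAffineLFP} (applicable precisely because the side conditions force $U=\emptyset$ for $\mu$ and $T=\emptyset$ for $\nu$) yields $\LFP(h),\GFP(h)\in\semant{\decTy}$, which are exactly $\semant{\mu X.\phi}(\interpret)$ and $\semant{\nu X.\phi}(\interpret)$. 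I expect the main obstacle to be the application case: one must make precise that ``affine'' means genuine joint affinity of degree at most one, so that the composition of an affine $\phi_1$ with the affine-in-$\lamTE$ argument $\phi_2$ stays affine — there are no cross-terms to worry about exactly because neither formula contains products of argument values. The conjunction/disjunction case is the other delicate point, since the covering side condition must be invoked precisely to rule out states where a $\min$/$\max$ of two non-constant affine functions — which is not affine, cf.\ Remark~\ref{rem:closure-property} — could otherwise appear.
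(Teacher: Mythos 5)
Your proposal is correct and follows essentially the same route as the paper: induction on the typing derivation, invoking Lemma~\ref{lem:decidableAffineLFP} for \rn{T-Mu}/\rn{T-Nu} and the covering condition \(T_1\cup U_1\cup T_2\cup U_2=S_M\) to reduce \(\min\)/\(\max\) to a constant or to the other operand in \rn{T-Conj}/\rn{T-Disj}. The paper states only this sketch, so your more detailed treatment of \rn{T-App} (joint affinity of degree one ruling out cross-terms under substitution) is a welcome elaboration rather than a deviation.
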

\begin{proof}
  By induction on the structure of derivation \( \decTyEnv;\lamTE \pM \phi : \decTy \), with case
  analysis on the last rule used.
  We use Lemma~\ref{lem:decidableAffineLFP} for the cases of \rn{T-Mu} and \rn{T-Nu}. % of fixpoints.
  For the cases of \rn{T-Conj} and \rn{T-Disj}, %% where \(\Delta\neq \emptyset\),
  the condition \(T_1 \cup U_1 \cup T_2 \cup U_2 = S_M\) plays the key role.
  Note that if \(s\) belongs to \(T_i\cup U_i\), then the semantics of \(\lambda \lamTE.\form_0\land\form_1\)
  at \(s\) coincides with either that of \(\lambda \lamTE.\form_{1-i}\) or
  a constant function \(\lambda \tilde{v}.0\).
%%%  , we have \(\form=\form_1\land\form_2\) and
%%%  \(\decTy=\PropTU{T_1\cup T_1}{U_1\cap U_2}\), with
%%%   \(\semant{\tr(\decTyEnv) \vdash \lambda \lamTE.\phi_1:
%%%    \tr(\lamTE \to \decTy)}(\interpret) \in \semant{\lamTE \to \PropTU{T_1}{U_1}}\),
%%%   \(\semant{\tr(\decTyEnv) \vdash \lambda \lamTE.\phi_2:
%%%     \tr(\lamTE \to \decTy)}(\interpret) \in \semant{\lamTE \to \PropTU{T_2}{U_2}}\),
%%%   and \(\lamTE\neq \emptyset\imply T_1 \cup U_1 \cup T_2 \cup U_2 = S_M\).
%%%   The result follows immediately \(\lamTE=\emptyset\). Suppose
%%%   \(\lamTE\neq\emptyset\).
%%%   Then for every \(s\in S_M\), \(s\) belongs to at least one of \(T_1,U_1,T_2\), and \(U_2\).
%%%   If \(s\in T_i\), then 
  Other cases are easy.
\end{proof}

Let us discuss another important property of the type system.
%%In the type system, arguments of a function of type \( \PropTU{T}{U} \to \decTy \) must be of type \( \PropTU{T}{U} \).
%%Hence two functions \( f, g \) of type \( \PropTU{T}{U} \to \decTy \) is indistinguishable if they coincide on \( \semant{\PropTU{T}{U}} \).
For \( \decTy = \PropTU{T_1}{U_1} \to \dots \to \PropTU{T_k}{U_k} \to \PropTU{T}{U} \) and \( f,g \in \semant{\tr(\decTy)} \),
we write \( f \precsim_{\decTy} g \) if \( f\,v_1\,\dots\,v_k \LEQ_{\Quant} g\,v_1\,\dots\,v_k \) for every \((v_1,\dots,v_k) \in \semant{\PropTU{T_1}{U_1}} \times \dots \times \semant{\PropTU{T_k}{U_k}} \).
For interpretations \( \interpret_0, \interpret_1 \in \semant{\decTyEnv} \), we write \( \interpret_0 \precsim_{\decTyEnv} \interpret_1 \) if \( \interpret_0(X) \precsim_{\decTyEnv(X)} \interpret_1(X) \) for every \( X \in \dom(\decTyEnv) \).
We write \(f\sim_{\decTy}g\) for \(f\preceq_{\decTy}g\land g\preceq_{\decTy}f\), and analogously for interpretations. The results in the rest of this subsection
allows us to compute the semantics of a formula of type
\(\decTy\) up to \(\sim_{\decTy}\).
\begin{lemma}\label{lem:decidableSimFixpoint}
  Let \( \decTy = \PropTU{T_1}{U_1} \to \dots \to \PropTU{T_k}{U_k} \to \PropTU{T}{U} \) be a refined type and \( h_0 \) and \( h_1 \) be monotone functions on \( \tr(\decTy) \) such that \( x \in \semant{\decTy} \) implies \( h_0\,x, h_1\,x \in \semant{\decTy} \) and that \( x_0 \precsim_{\decTy} x_1 \) implies \( h_0\,x_0 \precsim_{\decTy} h_1\,x_1 \).
  \begin{itemize}
    \item If \( U = \emptyset \), then \( \LFP(h_0) \precsim_{\decTy} \LFP(h_1) \).
    \item If \( T = \emptyset \), then \( \GFP(h_0) \precsim_{\decTy} \GFP(h_1) \).
  \end{itemize}
  Here \( \LFP \) and \( \GFP \) are taken in \( \semant{\tr(\decTy)} \).
\end{lemma}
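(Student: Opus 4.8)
The plan is to prove both statements by transfinite induction along the same Kleene iteration sequences used in the proof of Lemma~\ref{lem:decidableAffineLFP}. I treat the first claim (\(U = \emptyset\)) in detail; the second is dual. Recall that for a monotone function \(h\) on the complete lattice \(\semant{\tr(\decTy)}\), the least fixpoint is \(h^\alpha(\bot)\) for all sufficiently large ordinals \(\alpha\), where \(h^0(\bot) = \bot\), \(h^{\beta+1}(\bot) = h(h^\beta(\bot))\), and \(h^\alpha(\bot) = \bigsqcup_{\beta<\alpha} h^\beta(\bot)\) at limit ordinals. Writing \(x^i_\alpha := h_i^\alpha(\bot)\) for \(i \in \{0,1\}\), it suffices to establish \(x^0_\alpha \precsim_{\decTy} x^1_\alpha\) for every ordinal \(\alpha\): choosing \(\alpha\) large enough that both iterations have stabilized (legitimate since \(\semant{\tr(\decTy)}\) is a set and each sequence is eventually stationary) then gives \(\LFP(h_0) = x^0_\alpha \precsim_{\decTy} x^1_\alpha = \LFP(h_1)\).

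First I would record two facts that the induction relies on. \emph{(i)} Along the iteration, every \(x^i_\alpha\) stays in \(\semant{\decTy}\). This is exactly the content of Lemma~\ref{lem:decidableAffineLFP} applied to each \(h_i\): the base case \(\bot \in \semant{\decTy}\) uses the hypothesis \(U = \emptyset\), successors use the assumed closure \(x \in \semant{\decTy} \Rightarrow h_i\,x \in \semant{\decTy}\), and limits use Lemma~\ref{lem:decidableChainComplete}. Keeping the iterates inside \(\semant{\decTy}\) is what lets us invoke the compatibility hypothesis in the successor step below. \emph{(ii)} Since least upper bounds in \(\semant{\tr(\decTy)}\) are computed componentwise and the order on \(\semant{\Quant} = \fspace{S_M}{[0,1]}\) is pointwise, application and state-evaluation commute with suprema: for any chain \((x_\beta)_{\beta<\alpha}\), any arguments \(v_1,\dots,v_k\), and any \(s \in S_M\),
\[
\Big(\bigsqcup_{\beta<\alpha} x_\beta\Big)\,v_1\cdots v_k\,(s) \;=\; \sup_{\beta<\alpha}\big(x_\beta\,v_1\cdots v_k\,(s)\big).
\]

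The transfinite induction then splits on \(\alpha\). For \(\alpha = 0\), \(x^0_0 = \bot = x^1_0\) and \(\bot \precsim_{\decTy} \bot\) by reflexivity. For a successor \(\alpha = \beta+1\), the induction hypothesis gives \(x^0_\beta \precsim_{\decTy} x^1_\beta\) with both iterates in \(\semant{\decTy}\) by fact (i), so the compatibility hypothesis ``\(x_0 \precsim_{\decTy} x_1 \Rightarrow h_0\,x_0 \precsim_{\decTy} h_1\,x_1\)'' yields \(x^0_{\beta+1} \precsim_{\decTy} x^1_{\beta+1}\). For a limit \(\alpha\), fix \(v_1 \in \semant{\PropTU{T_1}{U_1}}, \dots, v_k \in \semant{\PropTU{T_k}{U_k}}\) and \(s \in S_M\); by the induction hypothesis \(x^0_\beta\,v_1\cdots v_k\,(s) \le x^1_\beta\,v_1\cdots v_k\,(s)\) for all \(\beta < \alpha\), and since suprema of reals preserve \(\le\), fact (ii) gives \(x^0_\alpha\,v_1\cdots v_k\,(s) \le x^1_\alpha\,v_1\cdots v_k\,(s)\), that is \(x^0_\alpha \precsim_{\decTy} x^1_\alpha\). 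The second claim is obtained identically by iterating downward from \(\top\) (which lies in \(\semant{\decTy}\) precisely because \(T = \emptyset\)), using that \(\GFP(h_i)\) is the transfinite meet-iteration and that infima of reals preserve \(\le\).

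The routine parts are the base and successor cases. The main thing to get right is the limit case, where I must argue that \(\precsim_{\decTy}\)---which is coarser than \(\LEQ_{\tr(\decTy)}\) because it compares values only on the intended argument domain---survives passage to the limit; this rests entirely on the componentwise/pointwise computation of suprema in fact (ii), which lets the comparison be pushed inside the limit. The second delicate point, and the place where the side conditions \(U = \emptyset\) (resp.\ \(T = \emptyset\)) are actually used, is fact (i): it guarantees that the iterates remain in \(\semant{\decTy}\), so that the compatibility hypothesis may legitimately be applied at each successor step.
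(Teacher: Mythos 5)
Your proof is correct and follows essentially the same route as the paper: transfinite induction along the Kleene iterations \(h_i^\alpha(\bot)\), with the successor step handled by the compatibility hypothesis and the limit step by the pointwise computation of suprema, the \(\GFP\) case being dual. The only difference is your explicit fact (i) that the iterates stay in \(\semant{\decTy}\); the paper does not need this since its compatibility hypothesis is stated for arbitrary \(x_0 \precsim_{\decTy} x_1\), but including it is harmless.
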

\begin{proof}
  We prove the former; the proof of the latter is analogous.
  As discussed in the proof of Lemma~\ref{lem:decidableAffineLFP},
  \( \LFP(h_0) = h_0^{\alpha}(\bot) \) and \( \LFP(h_1) =  h_1^{\alpha}(\bot) \) for some sufficiently large ordinal \( \alpha \).
  We prove \( h_0^{\alpha}(\bot) \precsim_{\decTy} h_1^{\alpha}(\bot) \) by induction on \( \alpha \).
  Trivially \( h_0^{\alpha}(\bot) = \bot \precsim_{\decTy} \bot = h_1^{\alpha}(\bot) \).
  For \( \alpha = \beta+1 \), since \( h_0^{\beta}(\bot) \precsim_{\decTy} h_1^{\beta}(\bot) \) by the induction hypothesis, \( h_0(h_0^{\beta}(\bot)) \precsim_{\decTy} h_1(h_1^{\beta}(\bot)) \) follows from the assumption.
  If \( \alpha \) is a limit cardinal, then \( h_i^{\alpha} = \bigsqcup_{\beta<\alpha} h_i^{\beta}(\bot) \).
  For every \( (v_1,\dots,v_k) \in \semant{\PropTU{T_1}{U_1}} \times \dots \times \semant{\PropTU{T_k}{U_k}} \), we have \( h_i^{\alpha}(\bot)(v_1)\dots(v_k) = \bigsqcup_{\beta<\alpha} (h_i^{\beta}(\bot)(v_1)\dots(v_k)) \) by definition.
  Since \( h_0^{\beta}(\bot)(v_1)\dots(v_k) \precsim_{\decTy} h_1^{\beta}(\bot)(v_1)\dots(v_k) \) holds for every \( \beta<\alpha \) by the induction hypothesis, we have \( h_0^{\alpha}(\bot)(v_1)\dots(v_k) \precsim_{\decTy} h_1^{\alpha}(\bot)(v_1)\dots(v_k) \).
\end{proof}

\begin{lemma}\label{lem:decidableModuloSim}
  Let \(\phi\) be a PHFL formula such that \(\decTyEnv;\lamTE \pM \phi: \decTy\).
  Then, for every \(\interpret,\interpret' \in \semant{\decTyEnv} \) such that \( \interpret \precsim_{\decTyEnv} \interpret' \), we have \( \semant{\lambda \lamTE.\phi}(\interpret) \precsim_{\lamTE \to \decTy} \semant{\lambda \lamTE.\phi}(\interpret') \).
\end{lemma}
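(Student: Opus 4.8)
The plan is to proceed by induction on the derivation of \(\decTyEnv;\lamTE \pM \phi:\decTy\), with a case analysis on the last rule applied. Because the statement is phrased in terms of \(\semant{\lambda\lamTE.\phi}\) and \(\lamTE\to\decTy\), the abstraction rule \rn{T-Abs} becomes purely definitional: its premise types \(\phi_0\) under \(\lamTE, X\COL\PropTU{T}{U}\), and since \(\lambda(\lamTE, X\COL\PropTU{T}{U}).\phi_0 = \lambda\lamTE.\lambda X.\phi_0\) and \((\lamTE, X\COL\PropTU{T}{U})\to\decTy = \lamTE\to(\PropTU{T}{U}\to\decTy)\), the induction hypothesis for the premise is literally the desired conclusion. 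The variable cases are immediate: for \rn{T-FVar} we have \(\semant{X}(\interpret)=\interpret(X)\) and \(\semant{X}(\interpret')=\interpret'(X)\), and \(\interpret(X)\precsim_{\decTyEnv(X)}\interpret'(X)\) is exactly the assumption \(\interpret\precsim_{\decTyEnv}\interpret'\); for \rn{T-Var} and \rn{T-AP} the relevant semantics is independent of the interpretation of fixpoint variables, so the two sides coincide. The structural rules \rn{T-Weak}, \rn{T-WeakTU}, and \rn{T-Ex} follow directly from the induction hypothesis, using that \(\precsim\) at a base type is just the pointwise order \(\LEQ_\Quant\) and hence insensitive to the annotations \(T,U\).

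The compositional rules for the remaining connectives are handled uniformly by combining the induction hypothesis with the monotonicity of the corresponding semantic operation. For \rn{T-Conj} and \rn{T-Disj} the operations are \(\min\) and \(\max\); for \rn{T-J}, \rn{T-Min}, \rn{T-Max}, \rn{T-Avg} they are \([\cdot]_J\) (monotone because \(J\) is upward closed), \(\All\), \(\Some\), and \(\Avg\). In each case, applying the induction hypothesis to the immediate subformulas and then the monotone operation, after instantiating the \(\lamTE\)-arguments, yields the claim. The application rule \rn{T-App} needs slightly more care. Fixing arguments \(\vec v\) in the domains prescribed by \(\lamTE\) and writing \(\interpret_{\vec v}\) for \(\interpret\) extended with these bindings, we must compare \(\semant{\phi_1}(\interpret_{\vec v})\,w\) with \(\semant{\phi_1}(\interpret'_{\vec v})\,w'\), where \(w=\semant{\phi_2}(\interpret_{\vec v})\) and \(w'=\semant{\phi_2}(\interpret'_{\vec v})\). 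By Theorem~\ref{thm:decidableTypeSoundness} both \(w,w'\) lie in \(\semant{\PropTU{T}{U}}\), and the induction hypothesis for \(\phi_2\) gives \(w\LEQ w'\). The induction hypothesis for \(\phi_1\), instantiated at \(\vec v\) and \(w\in\semant{\PropTU{T}{U}}\), gives \(\semant{\phi_1}(\interpret_{\vec v})\,w \precsim_{\decTy} \semant{\phi_1}(\interpret'_{\vec v})\,w\), while monotonicity of the function \(\semant{\phi_1}(\interpret'_{\vec v})\) (every element of a function domain is monotone) gives \(\semant{\phi_1}(\interpret'_{\vec v})\,w \LEQ_{\decTy} \semant{\phi_1}(\interpret'_{\vec v})\,w'\); since \(\LEQ_{\decTy}\) refines \(\precsim_{\decTy}\), transitivity of \(\precsim_{\decTy}\) then closes the case.

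The principal case is that of the fixpoint rules \rn{T-Mu} and \rn{T-Nu}, where naive monotonicity does not suffice and we instead invoke Lemma~\ref{lem:decidableSimFixpoint}. Consider \rn{T-Mu}, so \(\phi=\mu X.\phi_0\), \(\lamTE=\emptyset\), and \(\decTy\) has the form \(\cdots\to\PropTU{T}{\emptyset}\). Set \(h_0 = \lambda v.\semant{\phi_0}(\interpret[X\mapsto v])\) and \(h_1 = \lambda v.\semant{\phi_0}(\interpret'[X\mapsto v])\), so that \(\semant{\mu X.\phi_0}(\interpret)=\LFP(h_0)\) and \(\semant{\mu X.\phi_0}(\interpret')=\LFP(h_1)\). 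Both are monotone, and Theorem~\ref{thm:decidableTypeSoundness} gives the closure property that \(v\in\semant{\decTy}\) implies \(h_0\,v, h_1\,v\in\semant{\decTy}\). The crucial hypothesis of Lemma~\ref{lem:decidableSimFixpoint}, namely that \(x_0\precsim_{\decTy}x_1\) implies \(h_0\,x_0\precsim_{\decTy}h_1\,x_1\), is supplied precisely by the induction hypothesis: for \(x_0\precsim_{\decTy}x_1\) in \(\semant{\decTy}\) we have \(\interpret[X\mapsto x_0]\precsim_{\decTyEnv,X\COL\decTy}\interpret'[X\mapsto x_1]\), whence the induction hypothesis for \(\phi_0\) yields \(h_0\,x_0\precsim_{\decTy}h_1\,x_1\). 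Since \(U=\emptyset\), Lemma~\ref{lem:decidableSimFixpoint} delivers \(\LFP(h_0)\precsim_{\decTy}\LFP(h_1)\), as required; \rn{T-Nu} is symmetric, using \(T=\emptyset\) and \(\GFP\).

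The main obstacle is exactly this fixpoint case: one cannot argue by ordinary monotonicity of the fixpoint operator, because \(\precsim_{\decTy}\) compares functions only on the intended domains, and the two fixpoints are taken with respect to \emph{different} interpretations. The load is therefore carried by Lemma~\ref{lem:decidableSimFixpoint}, whose transfinite-induction proof keeps the iterates inside \(\semant{\decTy}\), so that the induction hypothesis, which establishes the monotonicity implication only for arguments in \(\semant{\decTy}\), is applicable at every stage. The remaining subtlety is the interplay in \rn{T-App} between monotonicity in the interpretation and monotonicity in the argument, which is resolved by the two-step transitivity argument above together with the soundness theorem, used to guarantee that the intermediate argument stays in the intended domain.
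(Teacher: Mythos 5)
Your proposal is correct and follows essentially the same route as the paper's proof: induction on the typing derivation, with the fixpoint cases discharged by Lemma~\ref{lem:decidableSimFixpoint} (whose monotonicity hypothesis is supplied by the induction hypothesis and whose closure hypothesis by Theorem~\ref{thm:decidableTypeSoundness}), and the application case handled by combining the induction hypotheses for both subformulas with monotonicity in the argument. Your treatment of \textsc{T-App} is in fact slightly more explicit than the paper's about why the intermediate argument lies in \(\semant{\PropTU{T}{U}}\), but the argument is the same.
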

\begin{proof}
  By induction on the structure of derivation \( \decTyEnv;\lamTE \pM \phi: \decTy \).
  The most cases are easy.
  For fixpoints, we appeal to Lemma~\ref{lem:decidableSimFixpoint}.
  The details are given in Appendix~\ref{sec:proofs}.
\end{proof}

The above lemma shows that the semantic domain for \( \decTy \) can be regarded as \( \semant{\decTy} \) modulo \( \sim_{\decTy} \).
Furthermore the least and greatest fixpoints can be characterized in terms of the preorder \(\precsim_\decTy\).
\begin{lemma}\label{cor:decidableFixpointRepresentation}
%%  Let \( \decTy \) be a refined type and \( h \) be a function in the previous lemma.
    Let \( \decTy \) be a refined type and \( h \) be a monotone function on \( \tr(\decTy) \) such that \( x \in \semant{\decTy} \) implies \( h\,x \in \semant{\decTy} \) and that \( x \precsim_{\decTy} x' \) implies \( h\,x \precsim_{\decTy} h\,x' \).
  If \( \LFP(h) \in \semant{\decTy} \), then \( \LFP(h) \) is a least element in \( \{ x \in \semant{\decTy} \mid h\,x \precsim_{\decTy} x \} \) with respect to \( \precsim_{\decTy} \).
  Similarly, if \( \GFP(h) \in \semant{\decTy} \), then \( \GFP(h) \) is a greatest postfixpoint
 of \( h \) w.r.t~\( \precsim_{\decTy} \). 
\end{lemma}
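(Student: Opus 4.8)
The plan is to establish the first claim (the $\GFP$ statement is its order-dual) by verifying the two defining properties of a least element of the set $\{x \in \semant{\decTy} \mid h\,x \precsim_{\decTy} x\}$ with respect to the preorder $\precsim_{\decTy}$: namely, that $\LFP(h)$ belongs to this set, and that $\LFP(h) \precsim_{\decTy} x$ for every member $x$. Two preliminary observations organize the whole argument. First, $\precsim_{\decTy}$ is a preorder (reflexive and transitive), inherited from the pointwise order $\LEQ_{\Quant}$ on reals. Second, the full order $\LEQ$ on $\semant{\tr(\decTy)}$ refines $\precsim_{\decTy}$, i.e.\ $f \LEQ g$ implies $f \precsim_{\decTy} g$, because $\precsim_{\decTy}$ only compares functions on the restricted arguments drawn from $\semant{\PropTU{T_i}{U_i}}$.

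For membership I would argue as follows. Since $\LFP(h)$ is a genuine fixpoint computed in $\semant{\tr(\decTy)}$, we have $h(\LFP(h)) = \LFP(h) \LEQ \LFP(h)$, and so $h(\LFP(h)) \precsim_{\decTy} \LFP(h)$ by the refinement observation; together with the hypothesis $\LFP(h) \in \semant{\decTy}$ this places $\LFP(h)$ in the set. The substantive step is the second property. Fixing any $x \in \semant{\decTy}$ with $h\,x \precsim_{\decTy} x$, and writing $h^\alpha(\bot)$ for the ordinal-indexed Kleene iteration of $h$ from $\bot_{\tr(\decTy)}$ as in the proof of Lemma~\ref{lem:decidableAffineLFP} (so that $\LFP(h) = h^\alpha(\bot)$ for some large $\alpha$), I would prove $h^\alpha(\bot) \precsim_{\decTy} x$ by transfinite induction. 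The base case uses $\bot \LEQ x$, hence $\bot \precsim_{\decTy} x$. For a successor $\alpha = \beta+1$, the induction hypothesis gives $h^\beta(\bot) \precsim_{\decTy} x$; here I would invoke the assumed $\precsim_{\decTy}$-monotonicity of $h$ to get $h(h^\beta(\bot)) \precsim_{\decTy} h\,x$, and then transitivity with $h\,x \precsim_{\decTy} x$ to conclude $h^{\beta+1}(\bot) \precsim_{\decTy} x$. It is worth remarking that ordinary $\LEQ$-monotonicity would not suffice here, since $h^\beta(\bot) \precsim_{\decTy} x$ does not entail $h^\beta(\bot) \LEQ x$; this is exactly why the $\precsim_{\decTy}$-monotonicity hypothesis is present.

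The limit case is the only place where genuine care is required, and it is the step I would flag as the main (mild) obstacle. For a limit ordinal $\alpha$ we have $h^\alpha(\bot) = \bigsqcup_{\beta < \alpha} h^\beta(\bot)$, the supremum in $\semant{\tr(\decTy)}$ of the $\LEQ$-increasing chain $(h^\beta(\bot))_{\beta<\alpha}$. The key fact is that this supremum is computed argumentwise and statewise: for every tuple $(v_1,\dots,v_k) \in \semant{\PropTU{T_1}{U_1}} \times \dots \times \semant{\PropTU{T_k}{U_k}}$ and every state $s$, $h^\alpha(\bot)(v_1)\cdots(v_k)(s) = \sup_{\beta<\alpha} h^\beta(\bot)(v_1)\cdots(v_k)(s)$. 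Since the induction hypothesis supplies $h^\beta(\bot)(v_1)\cdots(v_k)(s) \le x(v_1)\cdots(v_k)(s)$ for each $\beta$, the supremum is bounded by $x(v_1)\cdots(v_k)(s)$ as well, which is precisely $h^\alpha(\bot) \precsim_{\decTy} x$. This commutation of the function-space join with restricted statewise evaluation is the same mechanism already used in Lemmas~\ref{lem:decidableChainComplete} and~\ref{lem:decidableSimFixpoint}, so it can be cited rather than re-derived.

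Instantiating at the ordinal $\alpha$ for which $\LFP(h) = h^\alpha(\bot)$ yields $\LFP(h) \precsim_{\decTy} x$, completing the proof that $\LFP(h)$ is a least element of the set, necessarily only up to $\sim_{\decTy}$ since $\precsim_{\decTy}$ is not antisymmetric. Finally, the greatest-postfixpoint statement for $\GFP$ follows by the order-reversed version of this argument: one iterates $h$ downward from $\top_{\tr(\decTy)}$ along a $\LEQ$-decreasing chain, replaces joins and suprema by meets and infima, and shows $x \precsim_{\decTy} h^\alpha(\top)$ for every $x$ with $x \precsim_{\decTy} h\,x$, using the dual commutation $\bigsqcap$ from Lemma~\ref{lem:decidableChainComplete}.
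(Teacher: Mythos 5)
Your overall skeleton --- transfinite induction along the Kleene iterates, with the successor step using $\precsim_{\decTy}$-monotonicity plus transitivity and the limit step using argumentwise suprema --- matches the paper's, but you start the iteration at a different place, and that difference matters. You iterate from $\bot_{\tr(\decTy)}$ so that the iteration converges to $\LFP(h)$ itself. The paper instead iterates from $\bot_{\decTy}$, the \emph{least element of $\semant{\decTy}$} (which is $1$ on the sets $U$ and $U_i$, not identically $0$), obtains a $\precsim_{\decTy}$-least prefixpoint $x_\alpha$ lying in $\semant{\decTy}$, and only then bridges to $\LFP(h)$ via $\LFP(h)\LEQ x_\alpha$, hence $\LFP(h)\precsim_{\decTy}x_\alpha\precsim_{\decTy}x$. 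This detour is not cosmetic: since this lemma (unlike Lemma~\ref{lem:decidableAffineLFP}) does \emph{not} assume $U=\emptyset$, the global bottom $\bot_{\tr(\decTy)}$ and its images $h^\beta(\bot_{\tr(\decTy)})$ need not belong to $\semant{\decTy}$ at all, whereas the paper's chain $x_\alpha$ stays inside $\semant{\decTy}$ by the closure hypothesis and Lemma~\ref{lem:decidableChainComplete}.

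The consequence is that your successor step applies the hypothesis ``$x\precsim_{\decTy}x'$ implies $h\,x\precsim_{\decTy}h\,x'$'' to the pair $(h^\beta(\bot_{\tr(\decTy)}),x)$, whose first component may lie outside $\semant{\decTy}$. If one reads that hypothesis as quantified over all of $\semant{\tr(\decTy)}$, your argument is complete and in fact more direct than the paper's. But the only source of this monotonicity in the paper's application is Lemma~\ref{lem:decidableModuloSim}, which guarantees it \emph{only} for interpretations valued in $\semant{\decTy}$; the paper's proof is written so that $h$'s $\precsim_{\decTy}$-monotonicity is never invoked outside $\semant{\decTy}$, and your proof cannot make that claim. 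So either strengthen the hypothesis you are willing to assume (and then justify it in the application, which Lemma~\ref{lem:decidableModuloSim} does not do), or repair the argument by starting the induction from $\bot_{\decTy}$ as the paper does and adding the final comparison $\LFP(h)\LEQ x_\alpha$. The same remark applies to your dual argument for $\GFP$, where $\top_{\tr(\decTy)}$ need not lie in $\semant{\decTy}$ when $T\neq\emptyset$. Your membership step and the limit step are fine as written.
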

%%\end{corollary}
\begin{proof}
  Assume \( \decTy = \PropTU{T_1}{U_1} \to \dots \to \PropTU{T_k}{U_k} \to \PropTU{T}{U} \).
  Let \(\bot_\decTy\) be the least element of \(\semant{\decTy}\), which is defined by:
  for every \( s \in S_M \),
  %%  \begin{equation*}
  \[
  \begin{array}{l}
    \bot_{\PropTU{T}{U}}(s) = 
    \begin{cases}
      1 & \mbox{if \(s\in U \)} \\
      0 & \mbox{otherwise.}
    \end{cases}\\
    \bot_\decTy\,v_1\,\dots\,v_k =
    \begin{cases}
      \bot_{\PropTU{T}{U}} & \mbox{if \(v_i\GEQ\bot_{\PropTU{T_i}{U_i}}\) for every \(i\)}\\
%%    1&\mbpx{if \( (v_1,\dots,v_k) \in
%%\semant{\PropTU{T_1}{U_1}} \times \dots \times \semant{\PropTU{T_k}{U_k}}, s\in U \)} \\
      \lambda s.0 & \mbox{otherwise.}
    \end{cases}
  \end{array}
  \]
%%  \end{equation*}
  For an ordinal number \( \alpha \), we define \( x_{\alpha}\) by \( x_0 = \bot_\decTy\),
  \( x_{\beta+1} = h(x_{\beta}) \) and \( x_{\alpha} = \bigsqcup_{\beta<\alpha} x_\beta \) if \( \alpha \) is a limit ordinal. By the monotonicity of \(h\) and Lemma~\ref{lem:decidableChainComplete},
  the sequence \(\set{x_\alpha}_\alpha\) is well-defined and forms an increasing chain.
  Thus, there exists an ordinal \(\alpha\) such that \(x_{\alpha}\) is a fixpoint of \(h\),
  and in particular, an element of
  \( \{ x \in \semant{\decTy} \mid h\,x \precsim_{\decTy} x \} \).
  It follows by straightforward induction on \(\alpha\) that \(x_{\alpha}\) is also a least element
  of \( \{ x \in \semant{\decTy} \mid h\,x \precsim_{\decTy} x \} \).
  Now, by the assumption that \(\LFP(h)\in\semant{\decTy}\),
  \(\LFP(h)\) is also an element of
  \( \{ x \in \semant{\decTy} \mid h\,x \precsim_{\decTy} x \} \),
  and since \(\LFP(h)\LEQ{} x_\alpha\), 
\(\LFP(h)\) is also a least element of   \( \{ x \in \semant{\decTy} \mid h\,x \precsim_{\decTy} x \} \).
%%  Obviously, \(\LFP(h)\) is an element of
%%  \( \{ x \in \semant{\decTy} \mid h\,x \precsim_{\decTy} x \} \).
%%  To show it is a least one, suppose \(h\,x \precsim_{\decTy} x\). 
%%  By Lemma~\ref{lem:fixpoint}, there exists a fixpoint \(y\) of \(h\)
%%  such that  \(y\sim_{\decTy}x\). Since \(\LFP(h)\) is the least fixpoint,
%%  we have \(\LFP(h)\LEQ{} y \sim_{\decTy} x\), which implies
%%  we have \(\LFP(h)\precsim_{\decTy}\) as required.
The proof for \(\GFP(h)\) is analogous.
\end{proof}
\noindent
If \( \decTyEnv;\emptyset \pM \mu X.\phi: \decTy \) and \( \interpret \in \semant{\decTyEnv} \), then \( h = \semant{\lambda X.\phi}(\interpret) \) satisfies the condition of the lemma above
(recall Lemma~\ref{lem:decidableModuloSim}).
Hence \( \semant{\mu X.\phi}(\interpret) \) is \( \sim_{\decTy} \)-equivalent to every least
%pre
fixpoint of \( \semant{\lambda X.\phi}(\interpret) \) with respect to \( \precsim_{\decTy} \);
in other words, to compute \(\semant{\mu X.\phi}(\interpret) \) up to 
\(\sim_{\decTy}\), it suffices to compute a least fixpoint of
\( \semant{\lambda X.\phi}(\interpret) \) in the quotient set \(\semant{\decTy}/\sim_{\decTy}\).
A similar statement holds for \( \nu X.\phi \).

\subsection{Decidability}
\label{sec:decidability}
This subsection gives a description of the interpretation of a PHFL formula using the first-order theory of reals,
and obtains the decidability of the restricted fragment of PHFL model checking.
To this end, we need to represent an element in \( \semant{\decTy} \) as a tuple of reals.
Each element of \( \semant{\PropTU{T}{U}} \) can be naturally written as an \( n \)-tuple of \( [0,1] \) where \( n \) is the number of states in \( M \).
What remains is a representation of functions \( \semant{\PropTU{T_1}{U_1} \to \dots \to \PropTU{T_m}{U_m} \to \PropTU{T}{U}} \).

The key result is Lemma~\ref{lem:decidableModuloSim}, which allows us to identify \( x,y \in \semant{\decTy} \) if \( x \sim_{\decTy} y \).
Let \( \decTy = \PropTU{T_1}{U_1} \to \dots \to \PropTU{T_m}{U_m} \to \PropTU{T}{U} \).)
Since \( x \in \semant{\decTy} \) is affine on \( \semant{\PropTU{T_1}{U_1}} \times \dots \times \semant{\PropTU{T_m}{U_m}} \), the \( \sim_{\decTy} \)-equivalence class of \( x \) contains an affine function on \( [0,1]^m \), i.e.,
\( x \sim_{\decTy} f \) holds for some
\[
f\,v_1\,\dots\,v_k\,s = c_s + \sum_{i =1}^{m} \sum_{s' \in S} c_{s, i,s'} v_i(s').
\]
We can use the tuple of coefficients \( (c_s, (c_{s,i,s'})_{i \le k, s' \in S})_{s \in S} \) as a representation of \( x \).

Henceforth, we assume the set \(S_M\) of states of \(M\) is \(\set{1,\ldots,n}\).
We define the \emph{affine semantics} \( \msem{\Prop^m \to \Prop} \) of type \( \Prop^m \to \Prop \) by
\begin{align*}
\msem{\Prop^m \to \Prop}
&=
\{(c_{i,j,k})_{i\in\set{1,\ldots,n},j\in\set{0,\ldots,m},k\in\set{0,\ldots,n}} \in [0,1]^{n(m+1)(n+1)}
\mid\\&\qquad\qquad
\sum_{j=0}^{m} \sum_{k=0}^{n} c_{i,j,k} \le 1 \mbox{ for every \( i = 1,\dots,n \).}
\}
\end{align*}
The tuple \((c_{i,j,k})_{i,j,k}\in\msem{\decTy} \)
%\msem{\PropTU{T_1}{U_1}\to \cdots \PropTU{T_m}{U_m}\to \PropTU{T}{U}}\)
represents
the function \(f\in\semant{\Quant^m\to\Quant}\) such that, for each $i\in S_M=\set{1,\ldots,n}$,
\[f\,g_1\,\cdots\,g_m\,i =
c_{i,0,0}+\Sigma_{j\in\set{1,\ldots,m},k\in\set{1,\ldots,n}}c_{i,j,k}\cdot g_j(k).\]
We write \(\tofun{(c_{i,j,k})_{i,j,k}}\) for the function \(f\) above.
The \emph{affine semantics} \( \msem{\decTy} \) of refined type \( \decTy \) is defined by
\begin{equation*}
\msem{\decTy} = \set{(c_{i,j,k})_{i,j,k} \in \msem{\tr(\decTy)} \mid \tofun{(c_{i,j,k})_{i,j,k}} \in \semant{\decTy} }.
\end{equation*}
% \[
% \begin{array}{l}
% \msem{\PropTU{T_1}{U_1}\to \cdots \PropTU{T_m}{U_m}\to \PropTU{T}{U}}\\
% = \{(c_{i,j,k})_{i\in\set{1,\ldots,n},j\in\set{0,\ldots,m},k\in\set{0,1,\ldots,n}} \in [0,1]^{n(m+1)(n+1)}
% \mid \\\qquad
% %%c_{i,j,k}\in [0,1]\mbox{ for each \(i\in\set{1,\ldots,n},j\in\set{0,\ldots,m},k\in\set{0,1,\ldots,n}\)}\\\qquad
% c_{i,0,k}=0 \mbox{ for each \(i\in\set{1,\ldots,n},k\in\set{1,\ldots,n}\)}\\\qquad
% c_{i,j,0}=0 \mbox{ for each \(i\in\set{1,\ldots,n},j\in\set{1,\ldots,m}\)}\\\qquad
% c_{i,j,k}=0 \mbox{ for each \(i\in T\cup U, j\in\set{1,\ldots,m},k\in\set{1,\ldots,n}\)}\\\qquad
% c_{i,0,0}=0 \mbox{ for each \(i\in T\)}\\\qquad
% c_{i,0,0}=1 \mbox{ for each \(i\in U\)}\\\qquad
% c_{i,j,k}=0 \mbox{ for each \(i\in\set{1,\ldots,n},j\in\set{1,\ldots,m},
%   k\in T_j\cup U_j\)}\\\qquad
% \sum_{j\in\set{0,\ldots,m},k\in\set{0,\ldots,n}} c_{i,j,k} \le 1 \mbox{ for each \(i\in\set{1,\ldots,n}\)}
% \\\quad
% \}.
% \end{array}
% \]
% Let \( \decTy = \PropTU{T_1}{U_1}\to \cdots \PropTU{T_m}{U_m}\to \PropTU{T}{U} \).
% Note that \(c_{i,0,k}\) and \(c_{i,j,0}\) are always \(0\) for \(j>0, k>0\);
% these entries have been added just for a notational convenience.
% The tuplu must satisfies the condition \( c_{i,j,k} = 0 \) if \( i \in T \cup U \) or \( k \in T_i \cup U_i \);
% this condition ensures that, for each \( x \in \semant{\decTy} \), the tuple \( (c_{i,j,k})_{i,j,k} \in \msem{\decTy} \) such that \( \tofun{(c_{i,j,k})_{i,j,k}} \sim_{\decTy} x \) is unique.
\newcommand{\coef}{\mathbf{c}}
\newcommand{\coefb}{\mathbf{d}}
\begin{lemma}
  For every \( x \in \semant{\decTy} \), there exists \( \coef = (c_{i,j,k})_{i,j,k} \in \msem{\decTy} \) such that \( x \sim_{\decTy} \tofun{\coef} \).
\end{lemma}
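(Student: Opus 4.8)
The plan is to define $\coef$ explicitly by reading off the values of $x$ at finitely many points of the intended domain, and then to verify that the resulting tuple meets the two constraints built into $\msem{\decTy}$. Write $\decTy = \PropTU{T_1}{U_1} \to \dots \to \PropTU{T_m}{U_m} \to \PropTU{T}{U}$, with $S_M = \{1,\dots,n\}$. Call a pair $(j,k)$ with $1\le j\le m$ and $1\le k\le n$ \emph{free} if $k\notin T_j\cup U_j$; for $v_j\in\semant{\PropTU{T_j}{U_j}}$ the free components $v_j(k)$ range over all of $[0,1]$, while the others are pinned ($0$ on $T_j$, $1$ on $U_j$). Let $\vec b=(b_1,\dots,b_m)$ be the least element of the intended domain, so $b_j(k)=1$ if $k\in U_j$ and $0$ otherwise, and for each free $(j,k)$ let $\vec b^{\,j,k}$ be $\vec b$ with its $(j,k)$-component raised to $1$; both points lie in $\semant{\PropTU{T_1}{U_1}}\times\dots\times\semant{\PropTU{T_m}{U_m}}$.

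Since $x$ is affine on that product, for each output state $i$ its restriction to the box spanned by the free components is affine in those components, and that affine form is unique because the free components sweep out a full-dimensional box. Its constant term is $x(\vec b)(i)$ and its slope in direction $(j,k)$ is the finite difference $x(\vec b^{\,j,k})(i)-x(\vec b)(i)$, independent of the remaining components by affinity. Accordingly I would set
\[
c_{i,0,0}=x(\vec b)(i),
\qquad
c_{i,j,k}=
\begin{cases}
x(\vec b^{\,j,k})(i)-x(\vec b)(i) & \text{if $(j,k)$ is free},\\
0 & \text{otherwise},
\end{cases}
\]
and set every dummy coefficient $c_{i,0,k}$ $(k\ge 1)$ and $c_{i,j,0}$ $(j\ge 1)$ to $0$. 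On the intended domain the pinned components contribute nothing to $\tofun{\coef}$ beyond what is already absorbed into $c_{i,0,0}$ (those on $U_j$ are multiplied by a zero coefficient, those on $T_j$ are themselves zero), so $\tofun{\coef}\,v_1\cdots v_m\,i = x\,v_1\cdots v_m\,i$ for all $(v_1,\dots,v_m)$ in the domain. This gives $x\sim_{\decTy}\tofun{\coef}$ at once, and since $x\in\semant{\decTy}$ yields $x\,v_1\cdots v_m\in\semant{\PropTU{T}{U}}$, the same equality shows $\tofun{\coef}\in\semant{\decTy}$.

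The step I expect to be the crux is checking $\coef\in\msem{\tr(\decTy)}$, namely that every coefficient lies in $[0,1]$ and that $\sum_{j,k}c_{i,j,k}\le 1$ for each $i$; this is exactly where the two standing properties of $x$ are used. Monotonicity of $x$ (inherited from $\semant{\decTy}\subseteq\semant{\tr(\decTy)}$, whose function-type elements are monotone) makes each finite difference nonnegative, so every $c_{i,j,k}\ge 0$, and $x(\vec b)(i)\in[0,1]$ gives $c_{i,0,0}\in[0,1]$. For the row sum, evaluate the affine form at the greatest element $\vec v^{\max}$ of the domain (all free components equal to $1$): this gives $x(\vec v^{\max})(i)=c_{i,0,0}+\sum_{j,k}c_{i,j,k}$, which is $\le 1$ because $x$ maps into $[0,1]$; since all terms are nonnegative it follows that each $c_{i,j,k}\le 1$ as well. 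Hence $\coef\in\msem{\decTy}$. The only mildly delicate point behind this is the uniqueness of the free-variable affine representation, which is what justifies computing the slopes by single finite differences from $\vec b$ and guarantees they are well defined irrespective of the witnessing coefficients supplied by the definition of affinity.
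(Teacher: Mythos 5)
Your proposal is correct and follows essentially the same route as the paper: read off the affine coefficients of \(x\) on the intended domain, zero out the coefficients attached to pinned components (\(k \in T_j \cup U_j\)) and the dummy indices, and observe that the resulting \(\tofun{\coef}\) agrees with \(x\) there, hence inherits membership in \(\semant{\decTy}\). The one genuine addition is your explicit check that the finite-difference coefficients are nonnegative (by monotonicity of \(x\)) and satisfy \(c_{i,0,0}+\sum_{j,k}c_{i,j,k}\le 1\) (by evaluating at the greatest element of the domain) — the paper simply asserts \((c_{i,j,k})_{i,j,k}\in\msem{\Prop^m\to\Prop}\) without justifying these range constraints, so your argument fills in a step the paper leaves implicit.
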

\begin{proof}
  Let \( \decTy = \PropTU{T_1}{U_1} \to \dots \to \PropTU{T_m}{U_m} \to \PropTU{T}{U} \).
  Since \( x \in \semant{\decTy} \), it is affine on \( \semant{\PropTU{T_1}{U_1}} \times \dots \times \semant{\PropTU{T_m}{U_m}} \), i.e.~for every \( (v_1,\dots,v_m) \in \semant{\PropTU{T_1}{U_1}} \times \dots \times \semant{\PropTU{T_m}{U_m}} \) and \( i \in S_M \),
  \begin{equation*}
    x\,v_1\,\dots\,v_m\,i = c_{i,0,0} + \sum_{j=1}^{m} \sum_{k = 1}^{n} c_{i,j,k} v_j(k).
  \end{equation*}
  We can assume without loss of generality that \( c_{i,j,k} = 0 \) if \( k \in T_j \) since \( v_j(k) = 0 \) in this case.
  Let \( c_{i,0,k} = c_{i,j,0} = 0 \) for \( j,k \ge 0 \).
  Then \( (c_{i,j,k})_{i,j,k} \in \msem{\Prop^m \to \Prop} \) and \( \tofun{(c_{i,j,k})_{i,j,k}} \sim_{\decTy} x \).
  Therefore \( (c_{i,j,k})_{i,j,k} \in \msem{\decTy} \) since the latter implies \( \tofun{(c_{i,j,k})_{i,j,k}} \in \semant{\decTy} \).
\end{proof}
\begin{remark}\label{rem:decidableCanonicalRepl}
  For \( x \in \semant{\decTy} \), \( \coef \in \msem{\decTy} \) such that \( x \sim_{\decTy} \tofun{\coef} \) is not necessarily unique.
  The representation becomes unique if one imposes the following conditions.
  Assume that \( \decTy = \PropTU{T_1}{U_1} \to \dots \to \PropTU{T_m}{U_m} \to \PropTU{T}{U} \).
  \begin{itemize}
    \item \( c_{i,j,0} = c_{i,0,k} = 0 \) for \( j,k > 0 \).  Note that these coefficients are not used in \( \tofun{(c_{i,j,k})_{i,j,k}} \).
    \item \( c_{i,j,k} = 0 \) if \( k \in T_j \).  If \( k \in T_j \), then \( v_j(k) = 0 \) and thus \( c_{i,j,k} \) does not affect the value of \( \tofun{(c_{i,j,k})_{i,j,k}} \).
    So we can assume without loss of generality that \( c_{i,j,k} = 0 \).
    \item \( c_{i,j,k} = 0 \) if \( k \in U_j \).  If \( k \in U_j \), then \( v_j(k) = 1 \).
    So, by adding \( c_{i,j,k} \) to the constant part \( c_{i,0,0} \) if necessarily, we obtain another representation that belongs to the same \( \sim_{\decTy} \)-equivalence class and that \( c_{i,j,k} = 0 \).
  \end{itemize}
  A representation \( \coef \in \msem{\decTy} \) is \emph{canonical} if it satisfies the above conditions.
  It is not difficult to see that each \( \sim_{\decTy} \)-equivalence class has exactly one canonical representation.

  The second and third conditions are convenient for computing the affine semantics of \(\form_1\lor\form_2\) and
  \(\form_1\land \form_2\). Suppose \(\emptyset;\Delta\pM \form_i:\PropTU{T_i}{U_i}\) for \(i\in\set{1,2}\) and
  \(T_1\cup U_1\cup T_2\cup U_2=S_M\).
  If \((c_{i,j,k})_{i,j,k}\) and \((c'_{i,j,k})_{i,j,k}\) are canonical representations for the (affine) semantics
  of \(\lambda \Delta.\form_1\) and \(\lambda \Delta.\form_2\) respectively, then
%%  the semantics of \(\lambda \Delta.\form_1\lor\form_2\) can be obtained by the pointwise maximum:
%%  \((\max(c_{i,j,k},c'_{i,j,k}))_{i,j,k}\),
   the semantics \((c''_{i,j,k})_{i,j,k})\) of \(\lambda \Delta.\form_1\land\form_2\)
   is also obtained pointwise by:
   \[
    \begin{minipage}[b]{.8\textwidth} \centering
    \(
       c''_{i,j,k} = \left\{\begin{array}{ll}
   \min(c_{i,0,0},c'_{i,0,0}) & \mbox{if $j=k=0$}\\
   c'_{i,j,k} & \mbox{if $i\in U_1$}\\
   c_{i,j,k} & \mbox{if $i\in U_2\setminus U_1$}\\
   0 & \mbox{otherwise}
   \end{array}\right.
    \)
    \\[-1em] % a bit of negative space
    ~
    \end{minipage}
    \tag*{\qed}
  \]
%%%  In contrast, the pointwise operation is not valid for non-canonical representations.
%%%  For example, let \(S_M=\set{1}\), and define \((c_{1,j,k})_{j\in\set{0,1},k\in\set{0,1}}\) and
%%%  \((c'_{1,j,k})_{j\in\set{0,1},k\in\set{0,1}}\) by
%%%  \[
%%%  c_{1,j,k}=\left\{\begin{array}{ll}
%%%    1 & \mbox{if \(j=k=0\)}\\
%%%    0 & \mbox{otherwise}\\
%%%  \end{array}\right.
%%%  \qquad
%%%  c'_{1,j,k}=\left\{\begin{array}{ll}
%%%    \frac{1}{2} & \mbox{if \(j=k=0\) or \(j=k=1\)}\\
%%%    0 & \mbox{otherwise}\\
%%%  \end{array}\right.
%%%  \]
%%%  Both are representations for elements of \(\semant{\PropTU{\emptyset}{\set{1}}\to \PropTU{\emptyset}{\set{1}}}\),
%%%  and express the functions \(f(x)=1\) and \(f'(x)=\frac{1}{2}+\frac{1}{2}x\) respectively.
%%%  They represent funare equivalent up to \(\sim_{\PropTU{\emptyset}{\set{1}}\to \PropTU{\emptyset}{\set{1}}}\) (note that
%%%  the intended domain of \(f\) and \(f'\) is the singleton set \(\set{1}\)), hence their least upper bound
%%%  is \(\tofun{(c_{1,j,k})_{j,k}}\). The pointwise maximum, however, yields
%%%  \((c''_{1,j,k})_{j\in\set{0,1},k\in\set{0,1}}\) where
%%%  \[
%%%  c_{1,j,k}=\left\{\begin{array}{ll}
%%%    1 & \mbox{if \(j=k=0\)}\\
%%%    \frac{1}{2} & \mbox{if \(j=k=1\)}\\
%%%    0 & \mbox{otherwise}\\
%%%  \end{array}\right.
%%%  \]
%%%  (or, with the above functional representation, \(f''(x)=1+\frac{1}{2}x\)),
%%%  which is not a valid element of \(\msem{\PropTU{\emptyset}{\set{1}}\to \PropTU{\emptyset}{\set{1}}}\).
\end{remark}

As stated before,
we will describe the affine semantics of a well-typed PHFL formula using the first-order theory of reals.
Before doing so, however, we show an example of directly computing the affine semantics.
\begin{example}
  \label{ex:affine-semantics}
  Let \(M\) be the Markov chain \((\set{1,2}, \Ptr, \iAP, 1)\)
  where \(\iAP(p_i)=\set{i}\) for \(i\in\set{1,2}\), and \(\Ptr(1,1)=\Ptr(1,2)=0.5\) and \(\Ptr(2,2)=1\),
  as depicted below.
  %  in Figure~\ref{fig:markov}
  \begin{center}
    \includegraphics[scale=0.5]{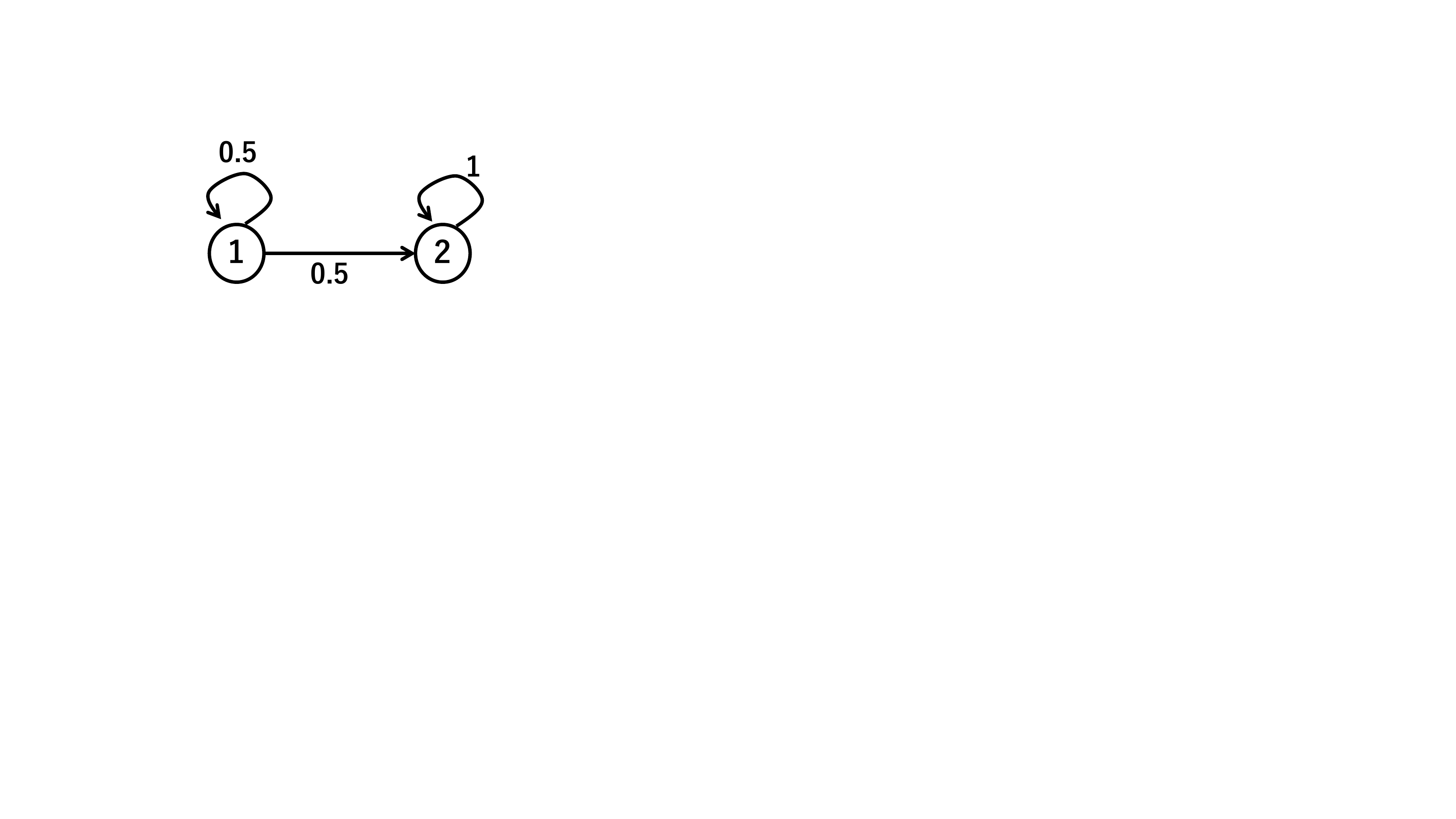}
  \end{center}
  Let \(\form\) be \(\mu X.\lambda Y.(p_2\land Y)\lor \Avg (p_1\land X(Y))\).
For \(\rho\in \msem{\decTyEnv}\),
we write \(\msem{\decTyEnv;\lamTE\pM \form:\decTy}(\rho)\) for
the (canonical) matrix representation of \(\semant{\tr(\decTyEnv)\p \lambda \tr(\lamTE).\form}(\tofun{\rho})\).
 Let us compute \(\msem{\emptyset;\emptyset\pM \form:\decTy}\)
    where \(\decTy=\PropTU{\emptyset}{\emptyset}\to\PropTU{\emptyset}{\emptyset}\).
%%    we omit the derivation \(\pi\) for the judgment since it is not important.
    
  Let \(\rho\) be \(\set{X\mapsto (v_{i,j,k})_{i\in\set{1,2},j\in\set{0,1},k\in\set{0,1,2}}}\).
    We write below an element \((c_{i,j,k})_{i,j,k}\) of
  \(\msem{\decTy}\)
  as a matrix \((M_1\; M_2)\), where \(M_i\) denotes \((c_{i,j,k})_{j,k}\).
  We can compute the affine semantics of subexpressions as follows.
%%    By the definition, \(\msem{X\COL \decTy;Y\COL\PropTU{\emptyset}{\emptyset}\pM Y:\PropTU{\emptyset}{\emptyset}}\) is:
  \begin{align*}
  \msem{X\COL \decTy;Y\COL\PropTU{\emptyset}{\emptyset}\pM Y:\PropTU{\emptyset}{\emptyset}}(\rho)&=
      \left(\begin{array}{lllclll}
         0 & 0 & 0 && 0 & 0 & 0\\ 
         0 & 1 & 0 && 0 & 0 & 1\\ 
      \end{array}\right)\\
  \msem{X\COL \decTy;Y\COL\PropTU{\emptyset}{\emptyset}\pM p_2:\PropTU{\set{1}}{\set{2}}}(\rho)&=
      \left(\begin{array}{lllclll}
         0 & 0 & 0 && 1 & 0 & 0\\ 
         0 & 0 & 0 && 0 & 0 & 0\\ 
      \end{array}\right)\\
      \msem{X\COL \decTy;Y\COL\PropTU{\emptyset}{\emptyset}\pM p_2\land Y:\PropTU{\set{1}}{\emptyset}}(\rho)\hspace*{-.3em}&=
      \left(\begin{array}{lllclll}
         0 & 0 & 0 && 0 & 0 & 0\\ 
         0 & 0 & 0 && 0 & 0 & 1\\ 
      \end{array}\right)\\
&\hspace*{-1cm}\mbox{(Recall the discussion in the latter half of Remark~\ref{rem:decidableCanonicalRepl}.)}\\
  \msem{X\COL \decTy;Y\COL\PropTU{\emptyset}{\emptyset}\pM X(Y):\PropTU{\emptyset}{\emptyset}}(\rho)&=
      \left(\begin{array}{lllclll}
         v_{1,0,0} & 0 & 0 && v_{2,0,0} & 0 & 0\\ 
         0 & v_{1,1,1} & v_{1,1,2} && 0 & v_{2,1,1} & v_{2,1,2}\\ 
      \end{array}\right)
  \end{align*}
  \begin{align*}&
  \msem{X\COL \decTy;Y\COL\PropTU{\emptyset}{\emptyset}\pM \Avg X(Y):\PropTU{\emptyset}{\emptyset}}(\rho)\\&\qquad =
      \left(\begin{array}{lllclll}
         \frac{1}{2}(v_{1,0,0}+v_{2,0,0}) & 0 & 0 && v_{2,0,0} & 0 & 0\\ 
         0 & \frac{1}{2}(v_{1,1,1}+v_{2,1,1}) & \frac{1}{2}(v_{1,1,2}+v_{2,1,2}) && 0 & v_{2,1,1} & v_{2,1,2}\\ 
      \end{array}\right)
\hfill      \tag*{($\Avg$ can be computed pointwise)}\\&
  \msem{X\COL \decTy;Y\COL\PropTU{\emptyset}{\emptyset}\pM p_1:\PropTU{\set{2}}{\set{1}}}(\rho)=
      \left(\begin{array}{lllclll}
         1 & 0 & 0 && 0 & 0 & 0\\ 
         0 & 0 & 0 && 0 & 0 & 0\\ 
      \end{array}\right)\\&
  \msem{X\COL \decTy;Y\COL\PropTU{\emptyset}{\emptyset}\pM p_1\land \Avg X(Y):\PropTU{\set{2}}{\emptyset}}(\rho)\\&\qquad =
      \left(\begin{array}{lllclll}
         \frac{1}{2}(v_{1,0,0}+v_{2,0,0}) & 0 & 0 && 0 & 0 & 0\\ 
         0 & \frac{1}{2}(v_{1,1,1}+v_{2,1,1}) & \frac{1}{2}(v_{1,1,2}+v_{2,1,2}) && 0 & 0 & 0
      \end{array}\right)\\&
\tag{Remark~\ref{rem:decidableCanonicalRepl}.}\\&
    \msem{X\COL \decTy;Y\COL\PropTU{\emptyset}{\emptyset}\pM
      (p_2\land Y)\lor (p_1\land \Avg X(Y)):\PropTU{\emptyset}{\emptyset}}(\rho)\\&\qquad =
      \left(\begin{array}{lllclll}
         \frac{1}{2}(v_{1,0,0}+v_{2,0,0}) & 0 & 0 && 0 & 0 & 0\\ 
         0 & \frac{1}{2}(v_{1,1,1}+v_{2,1,1}) & \frac{1}{2}(v_{1,1,2}+v_{2,1,2}) && 0 & 0 & 1
      \end{array}\right)\\&
\tag{Analogous to the computation of \(\land\) discussed in Remark~\ref{rem:decidableCanonicalRepl}.}\\&
    \msem{X\COL \decTy;\emptyset\pM
     \lambda Y. (p_2\land Y)\lor (p_1\land \Avg X(Y)):\decTy}(\rho)\\&\qquad =
      \left(\begin{array}{lllclll}
         \frac{1}{2}(v_{1,0,0}+v_{2,0,0}) & 0 & 0 && 0 & 0 & 0\\ 
         0 & \frac{1}{2}(v_{1,1,1}+v_{2,1,1}) & \frac{1}{2}(v_{1,1,2}+v_{2,1,2}) && 0 & 0 & 1
      \end{array}\right)
\end{align*}
  Thus, we have:
  \[\small
  \begin{array}{l}
  \msem{\emptyset;\emptyset\pM \form:\decTy}(\emptyset) \\
=  \LFP(\lambda 
      \left(\begin{array}{lllclll}
         v_{1,0,0} & \_ & \_ && v_{2,0,0} & \_ & \_\\ 
         \_ & v_{1,1,1} & v_{1,1,2} && \_ & v_{2,1,1} & v_{2,1,2}
      \end{array}\right). \\\qquad\qquad
      \left(\begin{array}{lllclll}
         \frac{1}{2}(v_{1,0,0}+v_{2,0,0}) & 0 & 0 && 0 & 0 & 0\\ 
         0 & \frac{1}{2}(v_{1,1,1}+v_{2,1,1}) & \frac{1}{2}(v_{1,1,2}+v_{2,1,2}) && 0 & 0 & 1
      \end{array}\right)\\
  = 
      \left(\begin{array}{lllclll}
         0 & 0 & 0 && 0 & 0 & 0\\ 
         0 & 0 & 1 && 0 & 0 & 1\\ 
      \end{array}\right).\\
  \end{array}\\
  \]
  Using the result above, we obtain: %%can compute the semantics of \(\form(p_2)\) as;
  \[
  \msem{\emptyset;\emptyset\pM \form(p):\decTy}(\emptyset)
  = 
      \left(\begin{array}{lllclll}
        1 & 0 & 0 && 1 & 0 & 0\\
        \end{array}
         \right).
         \]
         Thus, \(\form(p_2)\) holds at state \(1\) with probability \(1\).
         \qed
\end{example}

In the example above, we have directly computed the affine semantics.
In general, however, we describe the affine semantics \((c_{i,j,k})_{i,j,k}\) by using logical formulas,
to deal with arbitrary alternations of fixpoint operators.
All the operations and properties on \( \semant{\decTy} \)
required for computing the affine semantics
are definable by using the first-order theory of reals.
Assume \( \decTy = \PropTU{T_1}{U_1} \to \dots \to \PropTU{T_m}{U_m} \to \PropTU{T}{U} \) and
let \( \coef = (c_{i,j,k})_{i,j,k} \in \msem{\decTy} \).
For example, the value \( \tofun{\mathbf{c}}\,v_1\,\dots\,v_m\,i \) can be represented as a term
\begin{equation*}
  c_{i,0,0} + \sum_{j=1}^{m} \sum_{k = 1}^{n} c_{i,j,k} v_j(k).
\end{equation*}
Then, for \( \coef, \coefb \in \msem{\decTy}\), the relation \( \coef \precsim_{\decTy} \coefb \) is written as
\begin{align*}
  &
  \forall v_1,\dots,v_m \in [0,1]^{n}.
  \\
  &
  \Big(\bigwedge_{j=1,\dots,m} v_j \in \msem{\PropTU{T_j}{U_j}}\Big)
  \Rightarrow \bigwedge_{i = 1,\dots,n} \Big[\big(\tofun{\coef}\,v_1\,\dots\,v_k\,i\big) \le \big(\tofun{\coefb}\,v_1\,\dots\,v_k\,i \big)\Big]
  %c_{i,0,0} + \sum_{j=1}^{m} \sum_{k = 1}^{n} c_{i,j,k} v_j(k) \le c'_{i,0,0} + \sum_{j=1}^{m} \sum_{k = 1}^{n} c'_{i,j,k} v_j(k)
\end{align*}
and \( \tofun{\coef} \sim_{\decTy} \tofun{\coefb} \) as
\begin{equation*}
  \tofun{\coef} \precsim_{\decTy} \tofun{\coefb} ~~\wedge~~ \tofun{\coefb} \precsim_{\decTy} \tofun{\coef}.
\end{equation*}
The meets and joins are also describable: for example, \( \tofun{\coef_1} \sqcap \tofun{\coef_2} \sim_{\decTy} \tofun{\coefb} \) is equivalent to
\begin{align*}
  &
  \forall v_1,\dots,v_m \in [0,1]^{n}.
  \Big(\bigwedge_{j=1,\dots,m} v_j \in \msem{\PropTU{T_j}{U_j}}\Big)
  \\
  & \qquad 
  \Rightarrow \bigwedge_{i = 1,\dots,n} \Big[\big(\tofun{\coefb}\,v_1\,\dots\,v_k\,i\big) = \max\Big((\tofun{\coef_1}\,v_1\,\dots\,v_k\,i),~ (\tofun{\coef_2}\,v_1\,\dots\,v_k\,i)\Big) \Big];
  %c_{i,0,0} + \sum_{j=1}^{m} \sum_{k = 1}^{n} c_{i,j,k} v_j(k) \le c'_{i,0,0} + \sum_{j=1}^{m} \sum_{k = 1}^{n} c'_{i,j,k} v_j(k)
\end{align*}
note that \( \max \) as well as \( \min \) is an operation definable in the first-order theory of reals.

\begin{lemma}\label{lem:decidableFormulaConstruction}
  Suppose \( \decTyEnv;\lamTE \pM \phi : \decTy \) and \( \decTyEnv = \{X_1 : \decTy_1, \dots, X_N : \decTy_N\} \).
  Then one can effectively construct a formula \( \Phi \) of the first-order real arithmetic such that, for every \( \coef_\ell \in \msem{\decTy_\ell} \) (\( \ell = 1,\dots,N \)) and \( \coefb \in \msem{\lamTE \to \decTy} \),
  \begin{equation*}
    \Phi(\coef_1,\dots,\coef_N,\coefb)
    ~~\Leftrightarrow~~
    \semant{\lambda \lamTE.\phi}(\{ X_\ell \mapsto \tofun{\coef_\ell} \mid 1 \le \ell \le N \}) \sim_{\lamTE \to \decTy} \tofun{\coefb}.
  \end{equation*}
\end{lemma}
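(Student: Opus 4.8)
The plan is to prove the lemma by induction on the derivation of \(\decTyEnv;\lamTE \pM \phi : \decTy\), building \(\Phi\) from the formulas obtained for the immediate subderivations. The construction rests on the observation, already illustrated in the discussion preceding the lemma, that every basic predicate and operation on affine representations is expressible in the first-order theory of reals: membership \(\coef \in \msem{\decTy}\) (the linear inequalities defining \(\msem{\tr(\decTy)}\) together with the affine constraints that cut out \(\semant{\decTy}\)), the preorder \(\coef \precsim_\decTy \coefb\) and the equivalence \(\coef \sim_\decTy \coefb\) (each a formula universally quantifying over argument tuples \(v_j\) ranging over \(\msem{\PropTU{T_j}{U_j}}\)), the value \(\tofun{\coef}\,v_1\cdots v_m\,i\) (a linear term in the \(c_{i,j,k}\) and the \(v_j(k)\)), and the pointwise \(\min\)/\(\max\) used for \(\land\)/\(\lor\). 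In each inductive step I introduce fresh real vectors for the representations of the subformulas, assert the corresponding inductive formulas \(\Phi_i\), and relate them to the output vector \(\coefb\); existentially quantifying over these intermediate vectors keeps \(\Phi\) first-order.

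For the structural cases the construction is routine. The axioms \rn{T-AP}, \rn{T-Var} and \rn{T-FVar} yield \(\Phi\) that asserts \(\coefb \sim_\decTy \coef_\ell\) (for a fixpoint variable \(X_\ell\)) or equates \(\coefb\) with the explicit constant tuple dictated by the rule. For \rn{T-Conj} and \rn{T-Disj}, \(\Phi\) existentially quantifies over the representations of \(\phi_1,\phi_2\) and asserts that \(\coefb\) is their pointwise meet/join, which by Remark~\ref{rem:decidableCanonicalRepl} is given by a definable \(\min\)/\(\max\) of coefficients. For \rn{T-Avg} the output is the linear image of the argument under \(P_M\), while \rn{T-Min}, \rn{T-Max} and \rn{T-J} apply definable \(\min\)/\(\max\)/threshold operations (these rules force \(\lamTE=\emptyset\), so no affinity in the \(\lamTE\)-arguments is needed). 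The rule \rn{T-Abs} requires no work, since we always track \(\semant{\lambda\lamTE.\phi}\), so moving a binder from \(\decTy\) into \(\lamTE\) leaves the tracked function unchanged. For \rn{T-App} I existentially quantify over representations \(\coef^{(1)},\coef^{(2)}\) of \(\lambda\lamTE.\phi_1\) and \(\lambda\lamTE.\phi_2\) and assert, by a universally quantified semantic equation over the \(\lamTE\)-arguments, that \(\tofun{\coefb}\) sends each admissible tuple \(\vec v\) to \(\tofun{\coef^{(1)}}(\vec v)\,(\tofun{\coef^{(2)}}(\vec v))\); affinity of the result, hence its expressibility, is guaranteed by Theorem~\ref{thm:decidableTypeSoundness}. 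The weakening and exchange rules \rn{T-WeakTU}, \rn{T-Weak}, \rn{T-Ex} only adjust index sets and the ordering of \(\lamTE\), which is a bookkeeping change to the quantifier domains.

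The crux is the fixpoint cases \rn{T-Mu} and \rn{T-Nu}. Consider \(\mu X.\phi\) with \(\decTyEnv,X\COL\decTy;\emptyset\pM\phi:\decTy\) and \(U=\emptyset\). The induction hypothesis gives a formula \(\Phi_1(\coef_1,\dots,\coef_N,\coef_X,\coefb)\) expressing \(h\,\tofun{\coef_X}\sim_\decTy\tofun{\coefb}\), where \(h=\semant{\lambda X.\phi}(\interpret)\) and \(\interpret=\{X_\ell\mapsto\tofun{\coef_\ell}\}\). Lemma~\ref{cor:decidableFixpointRepresentation} (whose hypotheses hold by Lemmas~\ref{lem:decidableAffineLFP} and~\ref{lem:decidableModuloSim}) tells us that \(\LFP(h)\) is, up to \(\sim_\decTy\), the \(\precsim_\decTy\)-least element of \(\{x\in\semant{\decTy}\mid h\,x\precsim_\decTy x\}\). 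I therefore define \(\Phi\) for \(\mu X.\phi\) to assert that \(\coefb\in\msem{\decTy}\) is a pre-fixpoint and is least among them:
\begin{equation*}
\coefb\in\msem{\decTy}\ \wedge\ \mathrm{PreFix}(\coefb)\ \wedge\ \forall \coefb'.\,\big(\coefb'\in\msem{\decTy}\wedge\mathrm{PreFix}(\coefb')\big)\Rightarrow \coefb\precsim_\decTy\coefb',
\end{equation*}
where \(\mathrm{PreFix}(\coefb)\) abbreviates \(\exists\coef_h.\,\Phi_1(\coef_1,\dots,\coef_N,\coefb,\coef_h)\wedge\coef_h\precsim_\decTy\coefb\). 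The case \rn{T-Nu} is dual, using \(\GFP(h)\), the greatest post-fixpoint, and \(T=\emptyset\). This step is the main obstacle: it is precisely here that the potentially transfinite fixpoint iteration is replaced by a single first-order quantifier alternation, and its correctness depends essentially on the affinity and chain-completeness machinery (Lemmas~\ref{lem:decidableChainComplete}, \ref{lem:decidableAffineLFP}, \ref{lem:decidableSimFixpoint}, \ref{cor:decidableFixpointRepresentation}) established earlier, together with the requirement that \(\coefb\) be quantified over \(\msem{\decTy}\) so that \(\tofun{\coefb}\) genuinely lies in \(\semant{\decTy}\) and the representation lemma applies.
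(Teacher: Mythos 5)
Your proposal is correct and follows essentially the same route as the paper: induction on the typing derivation, existential quantification over first-order-definable representations of subformulas, and a first-order characterization of the fixpoint via Lemma~\ref{cor:decidableFixpointRepresentation}. The only (harmless) difference is that you characterize \(\mu X.\phi\) as the \(\precsim_{\decTy}\)-least \emph{pre-fixpoint}, which is literally what that lemma provides, whereas the paper's formula asserts that \(\coefb\) is a fixpoint that is least among fixpoints; both are justified by the same lemma.
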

\begin{proof}
  By induction on the structure of derivation \( \decTyEnv;\lamTE \pM \phi : \decTy \).
  The most cases are easy.
  For example, consider the case that the last rule is (\textsc{T-Conj}).
  Then \( \phi = \phi_1 \wedge \phi_2 \) and \(\decTyEnv;\lamTE \pM \phi_i : \PropTU{T_i}{U_i} \) for \( i = 1,2 \).
  By the induction hypothesis, we have predicates \( \Phi_1 \) and \( \Phi_2 \) representing \( \semant{\lambda \lamTE.\phi_1} \) and \( \semant{\lambda \lamTE.\phi_2} \).
  Then \( \Phi(\coef_1,\dots,\coef_N, \coefb) \) is defined as
  \begin{align*}
    &
    \exists \coefb_1 \in \msem{\lamTE\to\PropTU{T_1}{U_1}}.
    ~
    \exists \coefb_2 \in \msem{\lamTE\to\PropTU{T_2}{U_2}}.
    \\ & \quad
    \Phi_1(\coef_1,\dots,\coef_N, \coefb_1)
    ~~\wedge~~
    \Phi_2(\coef_1,\dots,\coef_N, \coefb_2)
    ~~\wedge~~
    \tofun{\coefb} \sim_{\decTy} \tofun{\coefb_1} \sqcap \tofun{\coefb_2}.
  \end{align*}
  Since \( \semant{\lambda \lamTE.\phi_i}(\{ X_{\ell} \mapsto \tofun{\coef_\ell} \mid 1 \le \ell \le N \}) \in \semant{\lamTE \to \PropTU{T_i}{U_i}} \) by Theorem~\ref{thm:decidableTypeSoundness}, there exists a \( \coefb_i \in \msem{\lamTE\to\PropTU{T_i}{U_i}} \) such that \( \Phi_i(\coef_1,\dots,\coef_N,\coefb_i) \).

  The only non-trivial cases are fixpoints.
  Consider the case that the last rule is (\textsc{T-Mu}).
  Then \( \phi = \nu X. \phi_0 \) and \( \decTyEnv, X : \decTy; \lamTE \pM \phi_0 : \decTy \).
  By the induction hypothesis, we have \( \Phi_0 \) representing the semantics of \( \phi_0 \).
  Then \( \Phi(\coef_1,\dots,\coef_N,\coefb) \) is defined as
  \begin{equation*}
    \Phi_0(\coef_1,\dots,\coef_N,\coefb,\coefb)
    ~~\wedge~~
    \forall \coefb' \in \msem{\decTy}.
    \Phi_0(\coef_1,\dots,\coef_N,\coefb',\coefb') \Rightarrow \tofun{\coefb} \precsim_{\decTy} \tofun{\coefb'}.
  \end{equation*}
  The first condition says that \( \tofun{\coefb} \) is a fixpoint of \( \semant{\lambda X.\phi_0} \) modulo \( \sim_{\decTy} \).
  The second condition says that \( \tofun{\coefb} \) is a least element in the set of fixpoints \( \tofun{\coefb'} \) of \( \semant{\lambda X.\phi_0} \) modulo \( \sim_{\decTy} \).
  Correctness of the above formula follows from Corollary~\ref{cor:decidableFixpointRepresentation}.
\end{proof}

\begin{theorem}
  \label{th:decidable-fragment}
  Given a closed formula \(\form\) and a Markov chain \(M\) such that
  \(\emptyset;\emptyset \pM \form:\PropTU{T}{U}\),
  it is decidable whether \(M\models \form\).
\end{theorem}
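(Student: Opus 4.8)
The plan is to reduce the question ``$M\models\form$'' to the truth of a single sentence of the first-order theory of reals, which is decidable by Tarski~\cite{Tarski51}. Essentially all the machinery has already been assembled: Lemma~\ref{lem:decidableFormulaConstruction} produces, for any typable formula, an effectively constructible real-arithmetic formula describing its affine semantics up to $\sim_\decTy$. The hypothesis $\emptyset;\emptyset\pM\form:\PropTU{T}{U}$ places us in the simplest instance of that lemma, where both the fixpoint environment $\decTyEnv$ and the $\lambda$-environment $\lamTE$ are empty and the target type is the order-$0$ type $\PropTU{T}{U}$, so this theorem is close to an immediate corollary.

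First I would invoke Lemma~\ref{lem:decidableWellTyped} to note that $\form$ is a well-typed PHFL proposition, its PHFL type being $\tr(\PropTU{T}{U})=\Prop$; hence $\semant{\form}\in\fspace{S_M}{[0,1]}$ is defined, and $M\models\form$ means exactly $\semant{\form}(\inState)=1$. Applying Lemma~\ref{lem:decidableFormulaConstruction} with $N=0$ and $\lamTE=\emptyset$ yields a first-order real-arithmetic formula $\Phi(\coefb)$, with $\coefb$ ranging over $\msem{\PropTU{T}{U}}$, such that $\Phi(\coefb)$ holds iff $\semant{\form}\sim_{\PropTU{T}{U}}\tofun{\coefb}$.

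Next I would specialize to the order-$0$ type. Since $\PropTU{T}{U}$ has no function arguments, $\precsim_{\PropTU{T}{U}}$ is just the pointwise order $\LEQ_\Prop$, so $\sim_{\PropTU{T}{U}}$ is plain equality of functions in $\fspace{S_M}{[0,1]}$, and moreover $\tofun{\coefb}(i)=c_{i,0,0}$. Thus $\Phi(\coefb)$ holds precisely when $c_{i,0,0}=\semant{\form}(i)$ for every state $i$. By Theorem~\ref{thm:decidableTypeSoundness} we have $\semant{\form}\in\semant{\PropTU{T}{U}}$, and by the representability of affine semantics (the lemma preceding Remark~\ref{rem:decidableCanonicalRepl}) some $\coefb\in\msem{\PropTU{T}{U}}$ with $\Phi(\coefb)$ does exist. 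I would then form the closed sentence
\[
  \Psi \;\equiv\; \exists \coefb\in\msem{\PropTU{T}{U}}.\; \Phi(\coefb)\;\wedge\; c_{\inState,0,0}=1,
\]
where the quantifier abbreviates quantification over the reals $(c_{i,j,k})$ subject to the first-order-definable constraint $\coefb\in\msem{\PropTU{T}{U}}$. By the two observations above, $\Psi$ is true iff $\semant{\form}(\inState)=1$, i.e.\ iff $M\models\form$; and $\Psi$ is effectively constructible from $\form$ and $M$. Since $\Psi$ is a sentence of the first-order theory of reals, its truth is decidable, establishing the claim.

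The real difficulty in this chain lies not in the present theorem but in Lemma~\ref{lem:decidableFormulaConstruction}, whose fixpoint cases depend on the fixpoint characterization of Corollary~\ref{cor:decidableFixpointRepresentation} together with the well-definedness of the semantics modulo $\sim_\decTy$ (Lemma~\ref{lem:decidableModuloSim}). Granting that lemma, the only points needing care here are that $M\models\form$ can be expressed as extraction of the $\inState$-component $c_{\inState,0,0}$ of a representative, and that membership in $\msem{\PropTU{T}{U}}$ is itself first-order definable; both are immediate for the order-$0$ type at hand.
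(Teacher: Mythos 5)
Your proposal is correct and follows essentially the same route as the paper: both invoke Lemma~\ref{lem:decidableFormulaConstruction} in the degenerate case of empty environments and an order-$0$ target type, observe that $\sim_{\PropTU{T}{U}}$ collapses to equality of state-indexed values so that the representative's constant coefficients are exactly $\semant{\form}$, and then decide the resulting existentially closed sentence by Tarski's theorem. The only cosmetic difference is that you quantify over a full tuple in $\msem{\PropTU{T}{U}}$ while the paper quantifies directly over the $n$ values $r_1,\dots,r_n$ with $r_{\inState}$ fixed to $1$; these are interchangeable.
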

\begin{proof}
  Assume that \( \inStateM{M} = 1 \).
  By Lemma~\ref{lem:decidableFormulaConstruction}, one can effectively construct a formula \( \Phi \) of the first-order real arithmetic such that \( \Phi(r_1,\dots,r_n) \) is valid if and only if \( \semant{\form}_M(\emptyset)(i) = r_i \) for every \( i = 1,\dots,n\).
  Hence \( M \models \form \) if and only if \( \exists r_2,\dots,r_n. \Phi(1,r_2,\dots,r_n) \).
  The validity of this formula is decidable~\cite{Tarski51}.
\end{proof}

\begin{remark}
  The formula obtained in the proof above
  contains both universal and existential quantifiers in general;
  hence the complexity is doubly exponential time in general~\cite{Davenport88}.
  However, if \(\form\) is of the form \([\form']_{\ge r}\) where
  \(\form'\) contains no occurrences of \(\nu\) and \([\cdot]_J\),
  then we can express \( \Phi \) 
  %the condition \(c_{1,0,0}=1\)
  using only existential quantifiers, and appeal 
  to a decision algorithm for the existential theory of the reals, whose complexity
  is PSPACE~\cite{Canny88}.
  In fact, to deal with \(\mu\)-formulas in the restricted fragment,
  it suffices to consider only inequalities of the form
  \(\LFP(h)\preceq_{\decTy} x\), which can be represented by
  an existential formula \(\exists x'.(x'\precsim_{\decTy} x \land h\,x' \precsim_{\decTy} x')\).
  Note that \( \precsim \) is definable by a quantifier-free formula, as stated in the lemma below.
  Note also that the semantics of conjunctions and disjunctions can be expressed without using
  quantifiers, as discussed in Remark~\ref{rem:decidableCanonicalRepl}.
\end{remark}

% For \(\decTy=\PropTU{T_1}{U_1}\to \cdots \PropTU{T_m}{U_m}\to \PropTU{T}{U}\),
% we define the order \(\LEQ_{\msem{\decTy}}\) by:
% \[
% \begin{array}{l}
% (c_{i,j,k})_{i,j,k}\LEQ_{\msem{\decTy}}(c'_{i,j,k})_{i,j,k}
%   \defarrow
% \tofun{(c_{i,j,k})_{i,j,k}}\LEQ_{{\decTy}}\tofun{(c'_{i,j,k})_{i,j,k}}
% %%%\forall i\in\set{1,\ldots,n}.\forall J\subseteq \set{1,\ldots,m}\times\set{1,\ldots,n}.
% %%%c_{i,0,0}+\sum_{(j,k)\in J} c_{i,j,k} \le  c'_{i,0,0}+\sum_{(j,k)\in J} c'_{i,j,k}.
% \end{array}
% \]
% (where \(\tofun{(c_{i,j,k})_{i,j,k}}\) above is identified with its restriction to \(D_{\decTy}\)).

% Note that \(\LEQ_{\msem{\decTy}}\) is different from the element-wise order on reals.
% For example, for \(n=m=1\), we have \((c_{i,j,k})_{i,j,k}\LEQ_{\msem{\decTy}}
% \top_{\msem{\decTy}}\) for \(c_{1,0,0}=c_{1,1,1}=\frac{1}{2}\) and
% \(c_{1,0,1}=c_{0,1,0}=0\), but
% \(c_{1,1,1}=\frac{1}{2} \not\le 0=(\top_{\msem{\decTy}})_{1,1,1}\))
% Based on the definition, the inequality \(\LEQ_{\msem{\decTy}}\) above is definable in
% the first-order arithmetic on reals by using universal quantifiers. The following lemma
% implies that  \(\LEQ_{\msem{\decTy}}\) is actually definable without using quantifiers on reals.
\begin{lemma}
  \label{lem:quantifier-free-representation-of-leq}
  Let \( \decTy = \PropTU{T_1}{U_1} \to \dots \to \PropTU{T_m}{U_m} \to \PropTU{T}{U} \).
  For \( J \subseteq \set{1,\dots,m} \times \set{1,\dots,n} \), we write \( J \in \mathcal{J}(\decTy) \) if the characteristic function belongs to \( \prod_{i=1}^{m} \semant{\PropTU{T_i}{U_i}} \), i.e.,
  \begin{itemize}
    \item if \( k \in T_j \), then \( (k,j) \notin J \), and
    \item if \( k \in U_j \), then \( (k,j) \in J \).
  \end{itemize}
  Let \((c_{i,j,k})_{i,j,k}, (c'_{i,j,k})_{i,j,k} \in \msem{\decTy}\).
  Then, 
  \(\tofun{(c_{i,j,k})_{i,j,k}} \preceq_{\decTy} \tofun{(c'_{i,j,k})_{i,j,k}}\) if and only if:
  \[
  \bigwedge_{i\in\set{1,\ldots,n}} \bigwedge_{J \in \mathcal{J}(\decTy)}
  \left(
  c_{i,0,0}+\sum_{(j,k)\in J} c_{i,j,k} \le  c'_{i,0,0}+\sum_{(j,k)\in J} c'_{i,j,k}
  \right).
  \]
\end{lemma}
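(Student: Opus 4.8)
The plan is to unfold the definition of \( \preceq_{\decTy} \) into a family of universally quantified affine inequalities, one for each state, and then to eliminate the universal quantifier by the standard fact that an affine function on a polytope attains its extrema at the vertices.

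First I would recall that, by definition, \( \tofun{(c_{i,j,k})_{i,j,k}} \preceq_{\decTy} \tofun{(c'_{i,j,k})_{i,j,k}} \) holds if and only if for every tuple \( (v_1,\dots,v_m) \in \prod_{j=1}^{m} \semant{\PropTU{T_j}{U_j}} \) and every state \( i \in \set{1,\dots,n} \),
\[
c_{i,0,0} + \sum_{j=1}^{m}\sum_{k=1}^{n} c_{i,j,k}\,v_j(k)
\;\le\;
c'_{i,0,0} + \sum_{j=1}^{m}\sum_{k=1}^{n} c'_{i,j,k}\,v_j(k).
\]
The crucial observation is that the admissible domain \( \prod_{j=1}^{m} \semant{\PropTU{T_j}{U_j}} \) is a box: the coordinate \( v_j(k) \) is forced to \( 0 \) when \( k \in T_j \), forced to \( 1 \) when \( k \in U_j \), and otherwise ranges freely over \( [0,1] \). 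This box is a convex polytope whose vertices are exactly the \( \set{0,1} \)-valued assignments compatible with those constraints, i.e.\ the characteristic functions of the subsets \( J \in \mathcal{J}(\decTy) \) (setting \( v_j(k) = 1 \) iff \( (j,k) \in J \)).

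The direction from \( \preceq_{\decTy} \) to the displayed conjunction is immediate: for each \( J \in \mathcal{J}(\decTy) \) the characteristic function of \( J \) is a legal choice of \( (v_1,\dots,v_m) \), and substituting it into the inequality above yields exactly the \( J \)-th conjunct after collecting the terms with \( v_j(k) = 1 \). For the converse I would fix an arbitrary \( (v_1,\dots,v_m) \) in the box and a state \( i \), and consider the difference \( h(v) := \big(\tofun{(c'_{i,j,k})_{i,j,k}}\,v_1\cdots v_m\,i\big) - \big(\tofun{(c_{i,j,k})_{i,j,k}}\,v_1\cdots v_m\,i\big) \), which is an affine function of the free coordinates. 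Since every point of the box is a convex combination \( \sum_t \lambda_t w_t \) of its vertices \( w_t \) (each a characteristic function of some \( J_t \in \mathcal{J}(\decTy) \)), and an affine map commutes with convex combinations, we get \( h(v) = \sum_t \lambda_t\, h(w_t) \ge 0 \), because each \( h(w_t) \ge 0 \) by the assumed conjunction and \( \sum_t \lambda_t = 1 \) with \( \lambda_t \ge 0 \). This yields the required inequality at \( v \), completing the equivalence.

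The only step that is not entirely routine is the reduction to vertices; but since the domain is literally a product of closed intervals (and singletons) and the objective is affine, this is just the elementary fact that an affine function attains its extrema at an extreme point of a compact convex set, so I expect no genuine obstacle here.
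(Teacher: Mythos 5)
Your proposal is correct and follows essentially the same route as the paper: the ``only if'' direction is the identical substitution of characteristic functions, and your convex-combination argument for the converse is exactly what the paper does, except that the paper constructs the convex decomposition of \((g_1,\dots,g_m)\) into vertices explicitly (by sorting the values \(g_j(k)\) and taking super-level sets \(J_\ell\) with weights \(r_\ell = s_\ell - s_{\ell-1}\)) rather than appealing to the general fact that every point of the box is a convex combination of its \(\set{0,1}\)-valued vertices.
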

\begin{proof}
  \begin{itemize}
    \item[] \textbf{If:} 
Suppose \(g_1\in \semant{\PropTU{T_1}{U_1}},\ldots,g_m\in \semant{\PropTU{T_m}{U_m}}\).
We regard \( (g_1,\dots,g_m) = (g_j(k))_{j\in\set{1,\dots,m}, k \in \set{1,\dots,n}} \) as an \( mn \)-dimensional vector.
We also identify \( J \in \mathcal{J}(\decTy) \) with \( (h_j(k))_{j,k} \) where \( h_j(k) = 1 \) if \( (j,k) \in J \) and \( h_j(k) = 0 \) if \( (j,k) \notin J \).

Let \( 0 = s_0 < s_1 < s_2 < \dots < s_N = 1 \) be the sequence obtained by sorting \( \{ g_j(k) \mid 1 \le j \le m, 1 \le k \le n \} \cup \{ 0, 1 \} \).
For each \( \ell > 0 \), we define \( J_{\ell} \) and \( r_\ell \) by
\begin{equation*}
  J_{\ell} = \{ (j,k) \mid s_\ell \le h_j(k) \}
  \qquad\mbox{and}\qquad
  r_\ell = s_\ell - s_{\ell-1}.
\end{equation*}
Then
\begin{equation*}
  (g_1,\dots,g_m) = \sum_{\ell} r_\ell J_\ell.
\end{equation*}
Hence
\begin{align*}
  \tofun{(c_{i,j,k})_{i,j,k}}\,g_1\,\dots\,g_m\,i
  &=
  \sum_{\ell} r_\ell (c_{i,0,0} + \sum_{(j,k) \in J_{\ell}} c_{i,j,k})
  \\ &\le
  \sum_{\ell} r_\ell (c'_{i,0,0} + \sum_{(j,k) \in J_{\ell}} c'_{i,j,k})
  \\ &=
  \tofun{(c_{i,j,k})_{i,j,k}}\,g_1\,\dots\,g_m\,i.
\end{align*}

% We need to show
% \[\tofun{(c_{i,j,k})_{i,j,k}}g_1\,\cdots\,g_m\,i \le
% \tofun{(c'_{i,j,k})_{i,j,k}}g_1\,\cdots\,g_m\,i\]
% Let
% \[g_{j_1}(k_1) \le g_{j_2}(k_2) \le \cdots \le g_{j_{mn}}(k_{mn})\]
% be an increasing sequence obtained by sorting 
% \(g_{j}(k)\) (\(j\in \set{1,\ldots,m},k\in\set{1,\ldots,n}\)).
% Then, we have
% \[
% \begin{array}{l}
% \tofun{(c_{i,j,k})_{i,j,k}}g_1\,\cdots\,g_m\,i\\
% = (c_{i,0,0}+\sum_{1\le p\le mn}c_{i,j_p,k_p})\cdot g_{j_1}(k_1)
% + (c_{i,0,0}+\sum_{2\le p\le mn}c_{i,j_p,k_p})\cdot (g_{j_2}(k_2) - g_{j_1}(k_1))\\\quad
% + \cdots
% + (c_{i,0,0}+c_{i,j_{mn},k_{mn}})\cdot (g_{j_{mn}}(k_{mn}) - g_{j_{mn-1}}(k_{mn-1})
% + c_{i,0,0}\cdot (1-g_{j_{mn}}(k_{mn}))\\
% \le (c'_{i,0,0}+\sum_{1\le p\le mn}c'_{i,j_p,k_p})\cdot g_{j_1}(k_1)
% + (c'_{i,0,0}+\sum_{2\le p\le mn}c'_{i,j_p,k_p})\cdot (g_{j_2}(k_2) - g_{j_1}(k_1))\\\quad
% + \cdots
% + (c'_{i,0,0}+c'_{i,j_{mn},k_{mn}})\cdot (g_{j_{mn}}(k_{mn}) - g_{j_{mn-1}}(k_{mn-1})
% + c'_{i,0,0}\cdot (1-g_{j_{mn}}(k_{mn}))\\
% = \tofun{(c'_{i,j,k})_{i,j,k}}g_1\,\cdots\,g_m\,i,
% \end{array}
% \]
% as required.

  \item[] \textbf{Only if:}
    % Let \(\decTy=\PropTU{T_1}{U_1}\to \cdots \PropTU{T_m}{U_m}\to \PropTU{T}{U}\)
    % and \(J\subseteq \set{1,\ldots,m}\times \set{1,\ldots,n}\).
%%  Suppose \(\tofun{(c_{i,j,k})_{i,j,k}}\LEQ_{\tr(\decTy)} \tofun{(c'_{i,j,k})_{i,j,k}}\),
%%  and \(J\subseteq \set{1,\ldots,m}\times\set{1,\ldots,n}\).
  For each \( J \in \mathcal{J}(\decTy) \), let %%\(g_j\in \semant{\PropTU{T_j}{U_j}}\)
  \(g_j\) be the element of \(\semant{\Quant}\)
  such that \(g_j(k)=1\) if \((j,k)\in J\)
and \(g_j(k)=0\) if \((j,k)\not\in J\).
Then \( g_j \in \semant{\PropTU{T_j}{U_j}} \) for every \( j = 1,\dots,m\).
We have
\[c_{i,0,0}+\sum_{(j,k)\in J} c_{i,j,k} =
\tofun{(c_{i,j,k})_{i,j,k}}\,g_1\,\cdots\,g_m\,i \le
\tofun{(c'_{i,j,k})_{i,j,k}}\,g_1\,\cdots\,g_m\,i =
c'_{i,0,0}+\sum_{(j,k)\in J} c'_{i,j,k},
\]
as required.
\end{itemize}  
\end{proof}

\subsection{Expressivity}
\label{sec:expressivity}
We have already seen in Example~\ref{ex:affine-formula} and Lemma~\ref{lem:mup-subsumed}
that the decidable fragment of the order-1 \PHFL{} model checking problem
strictly subsumes $\mup{}$-calculus model checking.
To provide a further evidence of the expressivity of the decidable fragment,
in this subsection, we show that the termination problem for
recursive Markov chains (or,
probabilistic pushdown systems)~\cite{DBLP:journals/jacm/EtessamiY09,DBLP:journals/fmsd/BrazdilEKK13} can be encoded into the decidable fragment of
order-1 \PHFL{} model checking.
Since recursive Markov chains are known to be equivalent to order-1 
probabilistic higher-order recursion schemes (\PHORS{})~\cite{PHORS},
we encode below the termination problem
for the latter into the \PHFL{} model checking problem.

We first recall the definition of \PHORS{}, specialized for order 1.
\begin{definition}
  An order-1 \PHORS{} \(\GRAM\) is a triple \((\NONTERMS, \RULES, t)\),
  where:
  \begin{itemize}
    \item \(\NONTERMS\) is a finite map from (order-1) variables to their arities;
    \item \(\RULES\) is a finite set of rules of the form:
      \[ X\, y_1\,\cdots, y_{\NONTERMS(X)}\to t_L \C{p}t_R,\]
      where \(t_L, t_R\) range over the set of \((\NONTERMS, \set{y_1,\ldots,y_{\NONTERMS(X)}})\)-terms.
      Here, the set of \((\NONTERMS, V)\)-terms, ranged over by \(t\), is
      defined by the grammar:
      \[ t ::= \Te \mid y \mid X\,t_1\,\cdots,t_{\NONTERMS(X)},\]
      where \(y\) and \(X\) range over \(V\) and \(\dom(\NONTERMS)\) respectively,
      and \(\Te\) is a special symbol denoting termination.
    \item \(t\) is a \((\NONTERMS,\emptyset)\)-term.
  \end{itemize}
  For \(\GRAM=(\NONTERMS_\GRAM,\RULES_\GRAM,t_\GRAM)\),
  the reduction relation \(t\redswith{\pi,p}{\GRAM} t'\) on terms (where
  \(\pi\in\set{L,R}^*\) and \(p\in [0,1]\)) is defined by:
  \infrule{}{t\redswith{\epsilon, 1}{\GRAM} t}
  \infrule{t\redswith{\pi, q}{\GRAM} X\,t_1\,\cdots\,t_k \andalso
    X\,y_1\,\cdots\,y_k \to t_L\C{p}t_R\in \RULES_\GRAM}
          {t\redswith{\pi L, q\cdot p}{\GRAM}[t_1/y_1,\ldots,t_k/y_k]t_L}
  \infrule{t\redswith{\pi, q}{\GRAM} X\,t_1\,\cdots\,t_k \andalso
    X\,y_1\,\cdots\,y_k \to t_L\C{p}t_R\in \RULES_\GRAM}
          {t\redswith{\pi R, q\cdot (1-p)}{\GRAM}[t_1/y_1,\ldots,t_k/y_k]t_R}
          Note that the reduction is deterministic:
          for \(\pi\in\set{L,R}^*\), there exists at most one \((p,t')\) such that
           \(t\redswith{\pi, p}{\GRAM} t'\).
          The \emph{termination probability} of \(\GRAM=(\NONTERMS_\GRAM,\RULES_\GRAM,t_\GRAM)\),
          written \(\TP(\GRAM)\), is defined by:
          \[
          \begin{array}{l}
          \TP(\GRAM,t,\pi) = \left\{\begin{array}{ll}
          p & \mbox{if $t\redswith{\pi,p}{\GRAM} \Te$}\\
           0 & \mbox{otherwise}\end{array}\right.\\
          \TP(\GRAM) = \sum_{\pi\in \set{L,R}^*} \TP(\GRAM,t_\GRAM,\pi).
          \end{array}
          \]
\end{definition}
The following example has been taken from \cite{PHORS}.
\begin{example}
\label{ex:random-walk}
Let \(\GRAM\) be the order-1 \PHORS{} \((\NONTERMS, \RULES, X_0)\),
where:
\[
\begin{array}{l}
\NONTERMS=\set{X_0\mapsto 0, F\mapsto 1}\\
\RULES = \set{X_0\ =\ X_1\,\Te\C{1}X_0,\quad
X_1\,y\ =\ y \C{p} X_1(X_1\,y)}.
\end{array}
\]
Then, we have \(X_0 \redswith{L,1}{\GRAM} X_1\,\Te \redswith{L,p}{\GRAM}\Te\) and
\(X_0\redswith{L,1}{\GRAM} X_1\,\Te \redswith{R,1-p}{\GRAM} X_1(X_1\,\Te)
\redswith{L,p}{\GRAM} X_1\,\Te\redswith{L,p}{\GRAM} \Te\).
The termination probability
\(\TP(\GRAM)\) is \(1\) if \(p\ge \frac{1}{2}\) and \(\frac{p}{1-p}\) if \(p<\frac{1}{2}\). \qed
\end{example}
We now encode an order-1 \PHORS{}
\(\GRAM\) into a Markov chain \(M_\GRAM\) and a closed order-1 PHFL formula \(\form_\GRAM\)
such that \(\emptyset;\emptyset\pM \form_\GRAM:\PropTU{\emptyset}{\emptyset}\)
and \(\TP(\GRAM)\) coincides with the probability
that \(\form_\GRAM\) holds at the initial state of \(M_\GRAM\) (i.e.,
\(\TP(\GRAM=\semant{\form_\GRAM}(\emptyset)(\inState)\)).
We first give a construction of \(M\). Let \(\set{p_1,\ldots,p_m}\)
be the set \(\set{p, 1-p \mid \C{p} \mbox{ occurs in }\GRAM}\). We assume \(p_1<
p_2 < \cdots < p_m\).
We define the Markov chain \(M_\GRAM = (S, P, \iAP, \inState)\) as follows.
\begin{itemize}
\item \(S = \{s_0, s_1, \dots, s_{m + 1}\}\),
\item \(P\) satisfies \(P(s_0, s_1) = p_1\), \(P(s_0, s_i) = p_i - p_{i - 1}\) for \(2 \leq i \leq m\),
\(P(s_0, s_{m + 1}) = 1 - p_m\), \(P(s_i, s_0) = 1\) for \(1 \leq i \leq m + 1\) and \(P(s_i, s_j) = 0\) otherwise,
\item \(\iAP(P_i) = \{s_i\}\) for each \(0 \leq i \leq m + 1\), and
\item \(\inState = s_0\).
\end{itemize}
Note that \(\semant{\Avg (P_1\lor \cdots \lor P_i)}(\emptyset)(s_0) = p_i\)
and \(\semant{\Avg (P_{i+1}\lor \cdots \lor P_{m+1})}(\emptyset)(s_0) = 1-p_i\).
Now for each term of \(t\) of PHORS \(\GRAM\), we construct
a formula \(\toPHFL{t}\), so that the termination probability
of \(t\) coincides with the probability that \(\form_t\) holds at \(s_0\).
The translation is given by:
\[
\begin{array}{l}
  \toPHFL{\Te} = \true\qquad  \toPHFL{y} = y\qquad
  \toPHFL{X_i\,t_1\,\cdots\,t_k} = X_i\,\toPHFL{t_1}\,\cdots\,\toPHFL{t_k}.
\end{array}
\]
For
each rule
\(X_i\,y_i\,\cdots\,y_{\NONTERMS(X_i)}=t_{L}\C{p_{i}} t_{R}\) of \(\RULES\),
%%\(\RULES_\GRAM=\set{X_1\,y_1\,\cdots\,y_{\NONTERMS(X_1)}=t_{1,L}\C{p_{i_1}} t_{1,R},
%%  \ldots, X_\ell\,y_1\,\cdots\,y_{\NONTERMS(X_\ell)}=t_{\ell,L}\C{p_{i_\ell}} t_{\ell,R}}\),
we construct the following equality on PHFL formulas:
\[
%%X_1 = \lambda y_1,\ldots,y_{\NONTERMS(X_1)}.
%%\Avg(((P_1\lor \cdots \lor P_{i_1})\land \Avg \toPHFL{t_L})
%%\lor ((P_{i+1}\lor \cdots \lor P_{m+1})\land \Avg \toPHFL{t_R})). 
X_i = \lambda y_1,\ldots,y_{\NONTERMS(X_i)}.
%%P_0\land (
\Avg(((P_1\lor \cdots \lor P_i)\land \Avg \toPHFL{t_L})
\lor ((P_{i+1}\lor \cdots \lor P_{m+1})\land \Avg \toPHFL{t_R})). %).
\]
We thus obtain a system of mutually recursive equations
\(\set{X_1=\form_1,\cdots,X_k=\form_k}\), whose least solution \(\theta\) (which
maps each \(X_i\) to a formula \(\psi_i\) that satisfies \(\psi_i = \theta\form_i\))
can be represented by using the least fixpoint operators in an obvious manner.
We then let \(\form_\GRAM\) be \(\theta\toPHFL{t_\GRAM}\).

\begin{example}
  \label{ex:encoding}
  Recall \(\GRAM\) in Example~\ref{ex:random-walk}. Assuming \(p<\frac{1}{2}\),
  \(M_\GRAM\) is:
  \begin{itemize}
\item \(S = \set{s_0, s_1, s_2, s_3,s_4}\).
\item \(P(s_0, s_1) = p\), \(P(s_0, s_2) = 1-2p\), \(P(s_0, s_3) = p\), \(P(s_0,s_4)=0\),
\(P(s_i, s_0) = 1\) for \(i\in\set{1,2,3,4}\) and \(P(s_i, s_j) = 0\) for \(i,j>0\).
\item \(\iAP(P_i) = \{s_i\}\) for each \(i\in\set{0,\dots,4}\), and
\item \(\inState = s_0\).
\end{itemize}
  The rules of \(\GRAM\) are translated to the following equations:
  \[
  \begin{array}{l}
    X_0 = \Avg (((P_1\lor P_2\lor P_3)\land \Avg(X_1\,\Te))\lor (P_4\land \Avg X_0 ))\\
    X_1 = \lambda y.
    \Avg ((P_1\land \Avg y)\lor ((P_2\lor P_3\lor P_4)\land  \Avg(X_1(X_1\,y))))
    \end{array}
  \]
  Thus, \(\form_\GRAM\) is given as:
  \[
  \mu X_0.\Avg (((P_1\lor P_2\lor P_3)\land \Avg(\psi_1\,\Te))\lor (P_4\land \Avg X_0 )),
  \]
  where \(\psi_1\) is:
  \[
  \mu X_1.\lambda y.
  \Avg ((P_1\land \Avg y)\lor ((P_2\lor P_3\lor P_4)\land  \Avg(X_1(X_1\,y)))).
  \]
Actually, \(\form_\GRAM\) can be simplified to \(\psi_1\,\Te\) in this case.
  \qed
\end{example}

To see the correctness of the above encoding, 
recall that
\(\semant{\Avg (P_1\lor \cdots \lor P_i)}(\emptyset)(s_0) = p_i\)
and \(\semant{\Avg (P_{i+1}\lor \cdots \lor P_{m+1})}(\emptyset)(s_0) = 1-p_i\).
Thus, by the equality on \(X_i\) above,
the probability that \(\theta(X_i\,\toPHFL{t_1}\,\cdots,\toPHFL{t_{\NONTERMS(X_i)}})\) holds
at \(s_0\) is equivalent to
\(p_i\cdot q_L+(1-p_i)\cdot q_R\), where \(q_L\) and \(q_R\)
are respectively the probabilities that
\(\theta(\toPHFL{[t_1/y_1,\ldots,t_{\NONTERMS(X_i)}/y_{\NONTERMS(X_i)}]t_L})\)
and 
\(\theta(\toPHFL{[t_1/y_1,\ldots,t_{\NONTERMS(X_i)}/y_{\NONTERMS(X_i)}]t_R})\)
hold at \(s_0\).
Thus,
the formula \(\theta(X_i\,\toPHFL{t_1}\,\cdots,\toPHFL{t_{\NONTERMS(X_i)}})\) 
mimics the termination probability of \(X_i\,\toPHFL{t_1}\,\cdots,\toPHFL{t_{\NONTERMS(X_i)}}\),
which is equivalent to \(p_i\cdot q'_L+(1-p_i)\cdot q'_R\), where \(q'_L\) and \(q'_R\)
are respectively the termination probabilities of 
\([t_1/y_1,\ldots,t_{\NONTERMS(X_i)}/y_{\NONTERMS(X_i)}]t_L\)
and 
\([t_1/y_1,\ldots,t_{\NONTERMS(X_i)}/y_{\NONTERMS(X_i)}]t_R\).
We omit a formal proof of correctness of the encoding.

We now check that \(\form_\GRAM\) belongs to the restricted fragment.
To this end, it suffices to check that, for each rule
\(X\,y_1\,\cdots\,y_k\to t_L\C{p}t_R\in \RULES\), the type judgment:
\[
\decTyEnv_\NONTERMS; \lamTE \p_{M_\GRAM}
\Avg(((P_1\lor \cdots \lor P_i)\land \Avg \toPHFL{t_L})
\lor ((P_{i+1}\lor \cdots \lor P_{m+1})\land \Avg \toPHFL{t_R})):
\PropTU{\emptyset}{\emptyset}
\]
holds for \(\decTyEnv_\NONTERMS = \set{X\COL (\PropTU{\emptyset}{\emptyset})^{\NONTERMS(X)}\to \PropTU{\emptyset}{\emptyset}
  \mid X\in\dom(\NONTERMS)}\)
and \(\lamTE = y_1\COL \PropTU{\emptyset}{\emptyset},\ldots,
y_k\COL\PropTU{\emptyset}{\emptyset}\).

By a straightforward induction on the structure of \(t_L\), we have:
\(\decTyEnv_\NONTERMS; \lamTE \p_{M_\GRAM} \toPHFL{t_L}:\PropTU{\emptyset}{\emptyset}\),
which implies
\(\decTyEnv_\NONTERMS; \lamTE \p_{M_\GRAM} \Avg \toPHFL{t_L}:\PropTU{\emptyset}{\emptyset}\).
We also have:
\[\decTyEnv_\NONTERMS; \lamTE \p_{M_\GRAM} P_1\lor \cdots \lor P_i:
\PropTU{\set{P_0,P_{i+1},\ldots,P_{m+1}}}{\set{P_1,\ldots,P_i}}.\]
Thus, by using \rn{T-And}, we have:
\[\decTyEnv_\NONTERMS; \lamTE \p_{M_\GRAM} (P_1\lor \cdots \lor P_i)\land \Avg \toPHFL{t_L}:
\PropTU{\set{P_0,P_{i+1},\ldots,P_{m+1}}}{\emptyset}.\]
Similarly, we have:
\[\decTyEnv_\NONTERMS; \lamTE \p_{M_\GRAM} (P_{i+1}\lor \cdots \lor P_{m+1})\land \Avg \toPHFL{t_R}:
\PropTU{\set{P_0,P_1,\ldots,P_i}}{\emptyset}.\]
Thus, by using \rn{T-Or} and \rn{T-Avg}, we obtain
\[
\decTyEnv_\NONTERMS; \lamTE \p_{M_\GRAM}
\Avg(((P_1\lor \cdots \lor P_i)\land \Avg \toPHFL{t_L})
\lor ((P_{i+1}\lor \cdots \lor P_{m+1})\land \Avg \toPHFL{t_R})):
\PropTU{\emptyset}{\emptyset}
\]
as required.
\section{Related Work}\label{section:related}
As mentioned in Section~\ref{section:intro}, PHFL can be regarded as a probabilistic extension of the higher-order fixpoint logic,
and as a higher-order extension of the \(\mu^p\)-calculus.
We thus compare our work with previous studies on
(non-probabilistic) higher-order fixpoint logic
and those on (non-higher-order) probabilistic logics.
As already mentioned, for (non-probabilistic) HFL, model checking of
finite-state systems is known to be
decidable~\cite{DBLP:conf/concur/ViswanathanV04}, and \(k\)-EXPTIME
complete~\cite{DBLP:journals/lmcs/AxelssonLS07}.  This is in a sharp
contrast with our result that \PHFL{} model checking is highly
undecidable (both \(\Pi^1_1\)-hard and \(\Sigma^1_1\)-hard) even at
order 1.
%%Kobayashi et al.~\cite{DBLP:conf/lics/KobayashiLG19}.
%%In a non-probabilistic setting, Kobayashi et al.~\cite{DBLP:conf/popl/KobayashiLB17}
%%have revealed a close relationship between model checking of
%%higher-order systems (called HORS) and HFL model checking of
%%finite-state systems.

As for studies on probabilistic logics,
besides the \(\mup\)-calculus, there are other probabilistic extensions of
the modal \(\mu\)-calculus~\cite{morgan1997probabilistic, DBLP:conf/lics/HuthK97, DBLP:journals/corr/MioS13}.
{\L}ukasiewicz $\mu$-calculus introduced by Mio and Simpson~\cite{DBLP:journals/corr/MioS13}
is among the most expressive ones, which has, in addition to
\(\land\) and \(\lor\),
another kind of conjunction (\(\odot\)) and disjunction (\(\oplus\)),
called {\L}ukasiewicz operations.
To our knowledge, ours is
the first \emph{higher-order} and probabilistic extension of
the modal \(\mu\)-calculus.
The decidable fragment of \PHFL{} studied in
Section~\ref{section:fragment} is strictly more expressive than the \(\mup\)-calculus.
We conjecture that the expressive power of PHFL is
incomparable to that of {\L}ukasiewicz $\mu$-calculus.
On the one hand, the property defined by the PHFL formula in the proof of
Theorem~\ref{expressiveThm} cannot be expressed in
{\L}ukasiewicz $\mu$-calculus, hence PHFL is not subsumed by
{\L}ukasiewicz $\mu$-calculus. On the other hand,
{\L}ukasiewicz operations do not seem expressible in PHFL.
It would be interesting to study a higher-order extension of
{\L}ukasiewicz $\mu$-calculus (in other words, an extension of PHFL with 
  {\L}ukasiewicz operations).
  
%%%Some probabilistic logics define semantics of the conjunction operator
%%%in a different way from ours.
%%%Indeed, when \(\semant{\phi_1}(s) = x\) and \(\semant{\phi_2}(s) = y\),
%%%the value \(\semant{\phi_1 \land \phi_2}(s)\) is
%%%\(\min(x, y)\) in the logics in~\cite{DBLP:journals/jcss/AlfaroM04, morgan1997probabilistic}
%%%along with PHFL and \(\mup\)-calculus, and
%%%\(\max(0, x + y - 1)\) in the logics in~\cite{DBLP:conf/lics/HuthK97, DBLP:journals/corr/MioS13}.
%%%When we require one of the values \(x\) and \(y\) is the constant \(0\) or \(1\)
%%%for every formulas of form \(\phi_1 \land \phi_2\),
%%%as we did in Section~\ref{section:fragment},
%%%behaviours of the conjuntion operator
%%%in these logics agree.

Recently, Kobayashi et al.~\cite{DBLP:conf/lics/KobayashiLG19} introduced
PHORS, a probabilistic extension of higher-order recursion schemes (HORS),
which can also be viewed as a higher-order extension of recursive Markov chains
(or probabilistic pushdown systems), and proved that the almost sure termination 
problem is undecidable. Although the problem setting is quite different (in our
work, the \emph{logic} is higher-order whereas the \emph{system} to be verified is higher-order
in their work), our encoding of the \(\mu\)-arithmetic in Section~\ref{section:hardness}
has been partially inspired by 
their undecidability proof; they also 
represented a natural number \(n\) as the probability \(\frac{1}{2^n}\).

In Section~\ref{section:undecidability}, we have used the undecidability
of the value-1 problem for probabilistic automata to prove the undecidability
of \PHFL{} model checking.
Fijalkow et al.~\cite{DBLP:journals/corr/FijalkowGKO15}
studied a decidable subclass of probabilistic automata called leaktight automata.
The idea of their restriction appears to be quite different from our type-based
restriction of the \PHFL{} model checking problem.
%%%a probabilistic extension of HORS (called PHORS) has been proposed
%%%by Kobayashi et al.~\cite{DBLP:conf/lics/KobayashiLG19}.
%%%We gave a translation from the termination problem of order-\(1\) PHORS
%%%to a fragment of PHFL model-checking problem
%%%in Section~\ref{section:fragment},
%%%which can be extended to general orders.
%%%Unlike the non-probabilistic setting mentioned above,
%%%the translation of the reverse direction is impossible,
%%%since the PHFL model-checking problem is \(\Pi^1_1\)-hard
%%%while the termination problem for PHORS is in \(\Sigma^0_2\)~\cite{DBLP:conf/lics/KobayashiLG19}.

\section{Conclusion}\label{section:conclusion}
We have introduced PHFL, a probabilistic logic which can be regarded as both a probabilistic extension of HFL and
a higher-order extension of the probabilistic logic \(\mu^p\)-calculus.
We have shown that the model-checking problem for PHFL for a finite Markov chain is undecidable
for the \(\mu\)-only and order-\(1\) fragment.
We have also shown that the model-checking problem for
the full order-\(1\) fragment of
PHFL is \(\Pi^1_1\)-hard and \(\Sigma^1_1\)-hard.
As positive results,
we have introduced a decidable subclass of the PHFL model-checking problem,
and showed that the termination problem of Recursive Markov Chains can be encoded
in the subclass.

Finding an upper bound of the hardness of the \PHFL{} model-checking problem is
left for future work. It is also left for future work to find a larger decidable class of
\PHFL{} model-checking problems.
%%prove (or disprove) the conjecture that expressive power of
%%the PHFL fragment is not subsumed by that of the \(\mu^p\)-calculus.
%%Our encode of the termination probabilities of Recursive Markov Chains supports this conjecture,
%%but we do not have a proof yet.
% Another direction of future work is
% to investigate connections between PHFL and PHORS~\cite{phors},
% like those between (non-probabilistic) HFL and HORS~\cite{horsAndHFL}.

\subsection*{Acknowledgements}
We would like to thank anonymous referees for useful comments.
This work was supported by JSPS KAKENHI Grant Number JP15H05706 and
JP20H00577, and JP20H05703.

\bibliographystyle{alpha}
\bibliography{paper}

\appendix
\section{Proofs}
\label{sec:proofs}
This section proofs omitted in the main text.

\subsection{Proof of Theorem~\ref{thm:semantRel}}

The following lemma states that the relation \(\semantRel{}\) is preserved by
the least upper bound operation.
\begin{lemma}\label{lem:appx:logical-relation-limit}
  Let \(I\) be a set and \(T\) be a type of \(\mu\)-arithmetic.
  Assume families \( \{ v_i \}_{i \in I} \) and \( \{ u_i \}_{i \in I} \) of elements of \( \muSemant{T} \) and \( \semant{\tr(T)} \), respectively, and
  suppose that \( v_i \semantRel{T} u_i \) for every \( i \in I \).
Then \( \bigJoin_{i \in I} v_i \semantRel{T} \bigJoin_{i \in I} u_i \).
\end{lemma}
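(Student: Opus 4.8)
The plan is to prove the statement by induction on the structure of the type $T$, which by the grammar is either $\muProp$ or of the form $A \to T'$. Note first that the restriction to $T$-types (rather than arbitrary $\muTy$) is exactly what makes the claim meaningful: $\muSemant{N}$ is not a complete lattice and joins over $N$ need not exist, whereas $\muSemant{T}$ and $\semant{\tr(T)}$ are complete lattices, so both joins appearing in the statement are well defined.

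For the base case $T = \muProp$, recall that $\muSemant{\muProp} = \{0,1\}$ and $\tr(\muProp) = \Qual$, so each $u_i \in \semant{\Quant}$ is a tuple whose $s_0$-component $(u_i)_0$ satisfies $v_i = (u_i)_0 \in \{0,1\}$ by the definition of $\semantRel{\muProp}$. The join $\bigJoin_i v_i$ in $\{0,1\}$ is $1$ iff some $v_i = 1$, while the join $\bigJoin_i u_i$ in $\semant{\Quant}$ is computed componentwise as the supremum in $[0,1]$. The only thing to check — and this is the crux of the base case — is that the supremum of the $s_0$-components coincides with the lattice join of the Booleans $v_i$; this holds precisely because every $(u_i)_0$ already lies in $\{0,1\}$, so $\sup_i (u_i)_0 \in \{0,1\}$ and equals $\bigJoin_i v_i$. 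Hence $\bigJoin_i v_i = (\bigJoin_i u_i)_0$, i.e.\ $\bigJoin_i v_i \semantRel{\muProp} \bigJoin_i u_i$.

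For the inductive case $T = A \to T'$, I would use the fact that joins in both function lattices $\muSemant{A \to T'}$ and $\semant{\tr(A) \to \tr(T')}$ are computed pointwise, i.e.\ $(\bigJoin_i v_i)\,x = \bigJoin_i (v_i\,x)$ and likewise for the $u_i$. To establish $\bigJoin_i v_i \semantRel{A \to T'} \bigJoin_i u_i$, I fix any related pair $x \semantRel{A} y$; then $v_i\,x \semantRel{T'} u_i\,y$ for every $i$ by hypothesis, and applying the induction hypothesis at type $T'$ to the families $\{v_i\,x\}_i$ and $\{u_i\,y\}_i$ yields $\bigJoin_i (v_i\,x) \semantRel{T'} \bigJoin_i (u_i\,y)$, which by pointwise computation of joins is exactly $(\bigJoin_i v_i)\,x \semantRel{T'} (\bigJoin_i u_i)\,y$. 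Since $x \semantRel{A} y$ was arbitrary, the result follows. I expect the base case to be the only genuinely delicate point, since it is where the mismatch between the discrete Boolean order on $\muProp$ and the real-valued order on $\Quant$ must be reconciled; the inductive step is a routine consequence of joins being pointwise and reduces directly to the induction hypothesis. The degenerate case $I = \emptyset$, which asks for $\bot \semantRel{T} \bot$, falls out of the same argument.
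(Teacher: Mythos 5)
Your proof is correct and follows essentially the same route as the paper's: induction on the structure of \(T\), with the arrow case handled by fixing a related pair \(x \semantRel{A} y\), applying the induction hypothesis to the families \(\{v_i\,x\}_i\) and \(\{u_i\,y\}_i\), and using the fact that joins of functions are computed pointwise. The only difference is that you spell out the base case \(T = \muProp\) (which the paper dismisses as obvious), and your justification there — that the pointwise supremum of the \(s_0\)-components stays in \(\{0,1\}\) and coincides with the Boolean join — is accurate.
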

\begin{proof}
By induction on the structure of \(T\).
The base case \(T = \muProp\) is obvious.
Assume that \(T = A \to T'\).

Let \(x \in \muSemant{A}\) and \(y \in \semant{\tr(A)}\) and assume that \(x \semantRel{A} y\).
For each \( i \in I \), since \( v_i \semantRel{T} u_i \), we have \( v_i\,x \semantRel{T'} u_i\,y \).
By the induction hypothesis,
\begin{equation*}
  \bigJoin_{i \in I} (v_i\,x) \semantRel{T'} \bigJoin_{i \in I} (u_i\,y).
\end{equation*}
Since the order on functions are component-wise, we have
\begin{equation*}
  \left(\bigJoin_{i \in I} v_i\right)\,x = \bigJoin_{i \in I} (v_i\,x)
  \quad\mbox{and}\quad
  \left(\bigJoin_{i \in I} u_i\right)\,y = \bigJoin_{i \in I} (u_i\,y).
\end{equation*}
So
\begin{equation*}
  \left(\bigJoin_{i \in I} v_i\right)\,x \semantRel{T'} \left(\bigJoin_{i \in I} u_i\right)\,y.
\end{equation*}
Since \( x \semantRel{A} y \) is arbitrary, we have: %this implies that
\begin{equation*}
  \left(\bigJoin_{i \in I} v_i\right) \semantRel{A \to T'} \left(\bigJoin_{i \in I} u_i\right). 
  \tag*{\qedhere}
\end{equation*}
\end{proof}

We are now ready to prove Theorem~\ref{thm:semantRel}.

\begin{proof}[Proof of Theorem~\ref{thm:semantRel}]
We prove the theorem by induction on the structure of \(\muform\).
In this proof, we omit the subscript \( M \) of \( \semant{-}_M \) for simplicity.
%%We discuss only the main cases; see \cite{MitaniFull} for more details.
\begin{itemize}
\item Case \(\muform = X\).
  In this case, \(\tr(\muform) = X\) and \(\tyEnv(X)=A\).
  Thus, we have:
\[
\muSemant{\tyEnv \muVdash \muform : \muTy}(\muInterpret) = \muInterpret(X)
\semantRel{\tyEnv(X)} \interpret(X)
= \semant{\tr(\tyEnv) \vdash \tr(\muform) : \tr(\muTy)}(\interpret),
\]
as required.
\item Case \(\muform = Z\).
In this case, \(\tr(\muform) = \toQuant{p_0'}\) and \(\muTy = N\).
We have
\[
\muSemant{\tyEnv \muVdash \muform : \muTy}(\muInterpret)
= 0
\semantRel{N} (1, 0, 0, 0)
= \semant{\tr(\tyEnv) \vdash \tr(\muform) : \tr(\muTy)}(\interpret).
\]

\item Case \(\muform = S \, t\).
In this case, \(\muTy=N\).
Let \(n = \muSemant{\tyEnv\muVdash t: N}(\muInterpret)\).
By the induction hypothesis, we have
\begin{equation*}
  \semant{\tr(\tyEnv) \vdash \tr(t) : \tr(N)}(\interpret) =
  \left(\frac{1}{2^{n}},\; 1 - \frac{1}{2^{n}},\; \_,\;\_ \right).
\end{equation*}
%We have \(\tr(\muform) = ...\) and thus
By the definition of \(\tr(\muform)\) and calculation, we have
\begin{equation*}
  \semant{\tr(\tyEnv) \vdash \tr(\muform):\tr(\muTy)}(\interpret) =
  \left(\frac{1}{2^{n + 1}},\; 1 - \frac{1}{2^{n + 1}},\; \_,\; \_\right),
\end{equation*}
which implies
\(\muSemant{\tyEnv\muVdash S\,t:\muTy}(\muInterpret)=n+1 \semantRel{N}
\semant{\tr(\tyEnv) \vdash\tr(\muform):\tr(\muTy)}(\interpret)\).

\item Case \(\muform = (s \leq t)\).
In this case, \(\muTy=\muProp\).
Let \(n = \muSemant{\tyEnv\muVdash s:N}(\muInterpret)\) and
\(m = \muSemant{\tyEnv\muVdash t:N}(\muInterpret)\).
By the induction hypothesis, we have
\begin{align*}
\semant{\tr(\tyEnv) \vdash\tr(s):\tr(N)}(\interpret) &= (\frac{1}{2^n},\;, 1 - \frac{1}{2^n},\;\_,\;\_) \\
\semant{\tr(\tyEnv) \vdash\tr(t):\tr(N)}(\interpret) &= (\frac{1}{2^m},\;, 1 - \frac{1}{2^m},\;\_,\;\_).
\end{align*}
By the definition of \( \tr(s \leq t) \) and calculation, we have
\begin{equation*}
  \semant{\tr(\tyEnv)\vdash \tr(s \le t):\tr(A)}(\interpret) =
  \begin{cases}
    (1,\_,\_,\_) &
    \mbox{(if \( \frac{1}{2} \times \big(\frac{1}{2^n} + (1-\frac{1}{2^m}) \big) \ge \frac{1}{2} \), i.e. if \(n\le m\))} \\
    (0,\_,\_,\_) & \mbox{(if \( \frac{1}{2} \times \big(\frac{1}{2^n} + (1-\frac{1}{2^m}) \big) < \frac{1}{2} \), i.e., if \(n>m\)).} \\
  \end{cases}
\end{equation*}
%%%This means that
%%%\begin{equation*}
%%%  \semant{\tr(s \le t)}(\interpret) =
%%%  \begin{cases}
%%%    (1,\_,\_,\_) & \mbox{(if \( n \le m \))} \\
%%%    (0,\_,\_,\_) & \mbox{(if \( n > m \)).} \\
%%%  \end{cases}
%%%\end{equation*}
Thus, we have 
\( \muSemant{\tyEnv\muVdash s \le t:\muTy}(\muInterpret) \semantRel{\muTy}
\semant{\tr(\tyEnv)\vdash \tr(s \le t):\tr(\muTy)}(\interpret) \) as required.

\iffull
\item Case \(\muform = \muform_1 \land \muform_2\).

In this case, we have \(\muTy = \muProp\). %% and \(\tr(\muTy) = \Qual\).

By the induction hypothesis, we have
\[
\muSemant{\tyEnv \muVdash \muform_i : \muTy}(\muInterpret) \semantRel{\muTy} \semant{\tr(\tyEnv) \vdash \tr(\muform_i) : \tr(\muTy)}(\interpret)
\]
for each \(i = 1, 2\).

By the definition of \(\semantRel{\muProp}\), we have
\[
\semant{\tr(\Gamma) \vdash \tr(\muform_i) : \tr(\muProp)}(\interpret)
= (\muSemant{\tyEnv \muVdash \muform_i : \muProp}(\muInterpret), \; \_,\; \_,\; \_)
\]
for each \(i = 1, 2\).
Therefore we have
\begin{align*}
\muSemant{\tyEnv \muVdash \muform_1 \land \muform_2 : \muProp}(\muInterpret)
&= \muSemant{\tyEnv \muVdash\muform_1:\muProp}(\muInterpret) \GLB{\muProp} \muSemant{\tyEnv \muVdash\muform_2:\muProp}(\muInterpret) \\
&\semantRel{\muProp}
(\muSemant{\Gamma \vdash\muform_1:\muProp}(\muInterpret)
\GLB{\muProp} \muSemant{\Gamma \vdash\muform_2:\muProp}(\muInterpret),\; \_,\; \_,\; \_) \\
&= \semant{\tr(\tyEnv)\vdash \muform_1:\Qual}(\interpret) \GLB{}
\semant{\tr(\tyEnv)\vdash\muform_2:\Qual}(\interpret) \\
&= \semant{\tr(\tyEnv)\vdash\muform_1 \land \muform_2:\Qual}(\interpret)
\end{align*}
as desired.

\item Case \(\muform = \muform_1 \lor \muform_2\).
Similar to the above case.
\fi
\item Case \(\muform = \lambda X. \muform'\).
  In this case, \(\muTy\)  is of the form \(B \to T\), with
  \(\tyEnv, X : B \muVdash \muform' : T\).
For any
  \(v \in \semant{B}\) and \(u \in \semant{\tr(B)}\) such that \(v \semantRel{B} u\),
  we have:
\begin{align*}
  \muSemant{\tyEnv \muVdash \muform : \muTy}(\muInterpret)(v)
&  = 
\muSemant{\tyEnv, X : B \muVdash \muform' : T}(\muInterpret[X \mapsto v])\\
&\semantRel{T}
\semant{\tr(\tyEnv, X : B) \vdash \tr(\muform') : \tr(T)}(\interpret[X \mapsto u])\\
& \qquad\qquad\mbox{ (by the induction hypothesis)}\\
&=
\semant{\tr(\tyEnv) \vdash \tr(\muform) : \tr(\muTy)x}(\muInterpret)(u)
\end{align*}
%%By the induction hypothesis, \(\muform\) satisfies
%%\[
%%\muSemant{\tyEnv, X : B \muVdash \muform : T}(\muInterpret[X \mapsto v])
%%\semantRel{T}
%%\semant{\tr(\tyEnv, X : B) \vdash \tr(\muform) : \tr(T)}(\interpret[X \mapsto u]).
%%\]
%%We have thus
Therefore, %%by the definition of \(\semantRel{B \to T}\),
we have
\[
\muSemant{\tyEnv \muVdash \muform : B \to T}(\muInterpret)
\semantRel{B \to T}
\semant{\tr(\tyEnv) \vdash \tr(\muform) : \tr(B \to T)}(\interpret)
\]
as required.

\item Case \(\muform = \muform_1 \, \muform_2\).
We have \(\muTy = T\), with \(\tyEnv \muVdash \muform_1 : B \to T\) and \(\tyEnv \muVdash \muform_2 : B\).

By the induction hypothesis, we have
\(\muSemant{\tyEnv \muVdash\muform_1:B\to T}(\muInterpret) \semantRel{B \to T} \semant{\tr(\tyEnv) \vdash\tr(\muform_1):\tr(B\to T)}(\interpret)\)
and \(\muSemant{\tyEnv \muVdash\muform_2:B}(\muInterpret) \semantRel{B} \semant{\tr(\tyEnv) \vdash\tr(\muform_2):\tr(B)}(\interpret)\).
Therefore by the definition of \(\semantRel{B \to T}\), we have
\begin{align*}
\muSemant{\tyEnv \muVdash\muform_1 \, \muform_2:\muTy}(\muInterpret) &= \muSemant{\tyEnv \muVdash\muform_1:B\to T}(\muInterpret) \, (\muSemant{\tyEnv \muVdash\muform_2:B}(\muInterpret)) \\
&\semantRel{\muTy} \semant{\tr(\tyEnv) \vdash\tr(\muform_1):\tr(B\to T)}(\interpret) \, (\semant{\tr(\tyEnv) \vdash\tr(\muform_2):\tr(B)}(\interpret)) \\
&= \semant{\tr(\tyEnv) \vdash\tr(\muform_1 \, \muform_2):\tr(\muTy)}(\interpret)
\end{align*}
as required.
\item Case \(\muform = \mu X. \muform'\).

  In this case, \(A=T\), with \(\tyEnv, X\COL T\muVdash \muform':T\).
By the induction hypothesis, for any
\(v \in \muSemant{T}\) and
\(u \in \semant{\tr(T)}\)
such that \(v \semantRel{T} u\),
we have
\[
\muSemant{\tyEnv,X\COL T\muVdash \muform':T}(\muInterpret[X \mapsto v])
\semantRel{T} \semant{\tr(\tyEnv),X\COL\tr(T)\vdash \tr(\muform'):\tr(T)}(\interpret[X \mapsto u]).
\]

Since \(\tr(\mu X. \muform') = \mu X. \tr(\muform')\), it suffices to show:
\[
\muSemant{\tyEnv\muVdash\mu X. \muform':T}(\muInterpret)
\semantRel{T} \semant{\tr(\tyEnv)\vdash \mu X. \tr(\muform'):\tr(T)}(\interpret).
\]

Let \( \Func : \muSemant{T} \to \muSemant{T} \) and \( \Funcb : \semant{\tr(\muTyT)} \to \semant{\tr(\muTyT)} \) be the functions defined by:
\begin{align*}
 & \Func(v) := \muSemant{\tyEnv,X\COL T\muVdash\muform':T}(\muInterpret[X \mapsto v])\\
 & \Funcb(u) := \semant{\tr(\tyEnv),X\COL\tr(T)\vdash\tr(\muform'):\tr(T)}(\interpret[X \mapsto u]).
\end{align*}
By the reasoning above, we have \(\Func \semantRel{T \to T} \Funcb\).
By the definitions of the semantics, we have
\(\muSemant{\tyEnv\muVdash\mu X. \muform':T}(\muInterpret) = \LFP(\Func)\) and
\(\semant{\tr(\tyEnv)\vdash\mu X. \tr(\muform'):\tr(T)}(\interpret) = \LFP(\Funcb)\).
Then there exists an ordinal \( \alpha \) such that
\begin{equation*}
  \LFP(\Func) = \Func^\alpha(\bot_{\muTyT})
  \qquad\mbox{and}\qquad
  \LFP(\Funcb) = \Funcb^{\alpha}(\bot_{\tr(\muTyT)}),
\end{equation*}
where \(f^\beta(x)\) is defined by
\(f^0(x)=x\), \(f^{\beta+1} = f(f^\beta(x))\), and \(f^{\beta}=\LUB{}_{\gamma<\beta} f^\gamma(x)\)
if \(\beta\) is a limit ordinal.
We shall prove by (transfinite) induction on \( \beta \) that \( \Func^{\beta}(\bot_{{\muTyT}}) \semantRel{\muTyT} \Funcb^{\beta}(\bot_{{\tr(\muTyT)}}) \),
which would imply
\begin{equation*}
  \LFP(\Func) = \Func^\alpha(\bot_{{\muTyT}})
  \semantRel{\muTyT}
  \Funcb^{\alpha}(\bot_{{\tr(\muTyT)}}) = \LFP(\Funcb)
\end{equation*}
as required.

The base case \(
\Func^0(\bot_{{\muTyT}})=\bot_{{\muTyT}}
  \semantRel{\muTyT}
  \bot_{{\tr(\muTyT)}}=  \Funcb^{0}(\bot_{{\tr(\muTyT)}})\)
  follows by a straightforward induction on the structure of \(\muTyT\).
  The case where \(\beta\) is a successor ordinal follows immediately
  from the induction hypothesis and \(\Func \semantRel{T \to T} \Funcb\).
  If \( \beta \) is a limit ordinal, then
\begin{equation*}
  \Func^\beta(\bot_{{\muTyT}}) = \LUB{}_{\gamma < \beta} \Func^\gamma(\bot_{{\muTyT}})
  \qquad\mbox{and}\qquad
  \Funcb^\beta(\bot_{{\muTyT}}) = \LUB{}_{\gamma < \beta} \Funcb^\gamma(\bot_{{\muTyT}}).
\end{equation*}
By the induction hypothesis (of the transfinite induction),
\begin{equation*}
  \Func^\gamma(\bot_{{\muTyT}}) \semantRel{\muTyT} \Funcb^\gamma(\bot_{{\muTyT}})
\end{equation*}
for every \( \gamma < \beta \).
%Hence, 
\iffull
By Lemma~\ref{lem:appx:logical-relation-limit} below,
\else
Since \(\semantRel{\muTyT}\) is preserved by the least upper-bound operation (which can be
proved by an easy induction on \(T\)),
\fi
we have
\begin{equation*}
  \Func^\beta(\bot_{{\muTyT}}) \semantRel{\muTyT} \Funcb^\beta(\bot_{{\muTyT}})
\end{equation*}
as required.

%% \begin{itemize}
%% \item
%% Case \(\beta = \gamma + 1\)

%% The result is immediate from the induction hypothesis of Theorem~\ref{thm:semantRel}.
%% \item
%% Case that \(\beta\) is a limit ordinal

%% We use the following lemma.

%% Applying this lemma for \(I = \{\alpha | \alpha < \beta\}\),
%% \(t_\alpha = \Func^\alpha(\bot_{\muSemant{\muTy}})\) and \(t'_\alpha = \Func'^\alpha(\bot_{\semant{\tr(\muTy)}})\),
%% we get the desired result.
%% \end{itemize}
\item Case \(\muform = \nu X. \muform'\).
Similar to the case for \(\muform=\mu X.\muform'\) above.
\qedhere
\end{itemize}
\end{proof}

\subsection{Proof of Lemma~\ref{lem:decidableModuloSim}}

The proof proceeds by induction on the derivation for
\( \decTyEnv;\lamTE \pM \phi: \decTy \), with case analysis on the last rule used.
\begin{itemize}

\item Cases for \rn{T-WeakTU} and \rn{T-Weak}: Trivial by the induction hypothesis.
\item Cases for \rn{T-AP} and \rn{T-Var}: Since
  \(\semant{\lambda \lamTE.\phi}(\interpret)\) does not depend on \(\interpret\),
  we have \( \semant{\lambda \lamTE.\phi}(\interpret) = \semant{\lambda \lamTE.\phi}(\interpret') \).
\item Case for \rn{T-FVar}: In this case, \(\lamTE=\emptyset\) and \(\phi=X\).
  We have
  \[
  \semant{\lambda \lamTE.\phi}(\interpret)= \interpret(X)  \precsim_{\lamTE \to \decTy}
  \interpret'(X)=\semant{\lambda \lamTE.\phi}(\interpret'),\]
  as required.
\item Case for \rn{T-Mu}: In this case, \(\lamTE=\emptyset\) and \(\phi=\mu X.\phi'\),
  with \(\decTyEnv, X\COL\decTy;\emptyset \vdash_M \phi' : \decTy\), where
  $\decTy$ is of the form $\cdots\to \PropTU{T}{\emptyset}$.
  Let \(h_0 = \lambda x.\semant{\phi}{\rho\set{X\mapsto x}}\) and
  \(h_1 = \lambda x.\semant{\phi'}{\rho'\set{X\mapsto x}}\).
  By the induction hypothesis, \(x_0\preceq_{\decTy} x_1\) implies
  \(h_0\,x_0\preceq_{\decTy}h_1\,x_1\).
  Thus, by Lemma~\ref{lem:decidableSimFixpoint}, we have
  \[
  \semant{\lambda \lamTE.\phi}(\interpret)
  = \LFP(h_0)\precsim_{\decTy} \LFP(h_1) = 
  \semant{\lambda \lamTE.\phi}(\interpret'),
  \]
  as required.
\item Case for \rn{T-Nu}: Similar to the case for \rn{T-Mu} above.
\item Case for \rn{T-Abs}: In this case, \(\phi=\lambda X.\phi'\).
  The result follows immediately from the induction hypothesis,
  as \(\lambda \lamTE.\lambda X.\phi' = \lambda (\lamTE,X\COL\PropTU{T}{U}).\phi'\).
\item Case for \rn{T-App}:
  In this case, \(\phi=\phi_1\phi_2\).
  By the induction hypothesis, we have:
  \( \semant{\lambda \lamTE.\phi_1}(\interpret) \precsim_{\lamTE \to \PropTU{T}{U}\to\decTy} \semant{\lambda \lamTE.\phi_1}(\interpret') \)
  and 
  \( \semant{\lambda \lamTE.\phi_2}(\interpret) \precsim_{\lamTE \to \PropTU{T}{U}} \semant{\lambda \lamTE.\phi_2}(\interpret') \).
  Suppose \(\lamTE=X_1\COL\PropTU{T_1}{U_1},\ldots,X_k\COL\PropTU{T_k}{U_k}\)
  and let \(x_i\in\semant{\PropTU{T_i}{U_i}}\) for each \(i\in\set{1,\ldots,k}\).
  Then, we have:
  \[
  \begin{array}{rl}
  \semant{\lambda \lamTE.\phi}(\interpret)\\
  =&
  \lambda x_1\in\semant{\Quant}.\cdots\lambda x_k\in\semant{\Quant}.
  \semant{\phi}{\interpret\set{X_1\mapsto x_1,\ldots,X_k\mapsto x_k}}\\
  =&
  \lambda x_1\in\semant{\Quant}.\cdots\lambda x_k\in\semant{\Quant}.
  (\semant{\lambda \lamTE.\phi_1}{\interpret}\,x_1\,\cdots,x_k)
  (\semant{\lambda \lamTE.\phi_2}{\interpret}\,x_1\,\cdots,x_k)\\
  \precsim_{\lamTE\to\decTy}&
  \lambda x_1\in\semant{\Quant}.\cdots\lambda x_k\in\semant{\Quant}.
  (\semant{\lambda \lamTE.\phi_1}{\interpret'}\,x_1\,\cdots,x_k)
  (\semant{\lambda \lamTE.\phi_2}{\interpret'}\,x_1\,\cdots,x_k)\\
  =&
  \semant{\lambda \lamTE.\phi}(\interpret')
  \end{array}
  \]
  as required.
  \item The remaining cases follow immediately from the induction hypothesis.
\end{itemize}

\end{document}